\newcommand{\wasisdas}{{\sc Canonical Bose Ensemble}}
\newcommand{\intremark}[1]{\par\bigskip\noin\hrulefill\par\medskip \noin{\bf Internal Remark:} #1 \par\medskip\noin\hrulefill\par\bigskip}
\renewcommand{\intremark}[1]{}
\definecolor{slategray}{RGB}{112,138,144}
\definecolor{hintergrundblau}{RGB}{0 ,43 ,54}
\definecolor{wichtigrot}{RGB}{220,50,47}
\definecolor{textgrau}{RGB}{131,148,150}
\definecolor{ultramarinblau}{RGB}{32,33,79}
\definecolor{kobaltblau}{RGB}{35,45,83}
\definecolor{navy}{RGB}{0 ,0,128}
\definecolor{leuchtgruen}{RGB}{0,181,26}
\definecolor{leuchtrot}{RGB}{255,77,6} 
\definecolor{structuresgray}{RGB}{162,173,191}
\definecolor{cp1}{rgb}{94, 130, 182} 
\definecolor{cp2}{rgb}{226, 156, 36} 
\definecolor{cp3}{rgb}{143, 177, 50} 
\definecolor{cp4}{rgb}{236, 99, 54} 
\definecolor{cp5}{rgb}{135, 121, 179} 
\definecolor{cp6}{rgb}{198, 111, 26} 
\definecolor{cp7}{rgb}{93, 158, 200} 
\definecolor{firebrick}{RGB}{178,34,34}
\definecolor{darkorange}{RGB}{255,140,0}
\definecolor{darkgreen}{RGB}{0,100,0}
\definecolor{seagreen}{RGB}{46,139,87}
\definecolor{lightseagreen}{RGB}{32,178,170}
\definecolor{forestgreen}{RGB}{34,139,34}
\definecolor{midnightblue}{RGB}{25,25,112}
\definecolor{navyblue}{RGB}{0,0,128}
\definecolor{cornflowerblue}{RGB}{100,149,237}
\definecolor{mediumblue}{RGB}{0,0,205}
\definecolor{dimgray}{RGB}{105,105,105}
\definecolor{slategray}{RGB}{112,138,144}
\definecolor{Korrektur}{RGB}{0,0,205}
\newtheorem{satz}{Theorem}
\newtheorem{lemma}[satz]{Lemma}
\newcommand{\beq}{\begin{equation}}
\newcommand{\eeq}{\end{equation}}
\newcommand{\neeq}{\nonumber\end{equation}}
\newcommand{\bea}{\begin{eqnarray}}
\newcommand{\eea}{\end{eqnarray}}
\newcommand{\beast}{\begin{eqnarray*}}
\newcommand{\eeast}{\end{eqnarray*}}
\newcommand{\beal}{\begin{align}}
\newcommand{\eeal}{\end{align}}
\newcommand{\bsp}{\begin{split}}
\newcommand{\esp}{\end{split}}
\newcommand{\bbbone}{{\mathchoice {\rm 1\mskip-4mu l} {\rm 1\mskip-4mu l}    {\rm 1\mskip-4.5mu l} {\rm 1\mskip-5mu l}}}
\newcommand{\bC}{{\mathbb C}}
\newcommand{\bL}{{\mathbb L}}
\newcommand{\bN}{{\mathbb N}}
\newcommand{\bQ}{{\mathbb Q}}
\newcommand{\bR}{{\mathbb R}}
\newcommand{\bS}{{\mathbb S}}
\newcommand{\bT}{{\mathbb T}}
\newcommand{\bV}{{\mathbb V}}
\newcommand{\bX}{{\mathbb X}}
\newcommand{\bZ}{{\mathbb Z}}
\newcommand{\al}{\alpha}
\newcommand{\de}{\delta}
\newcommand{\vphi}{\varphi}
\newcommand{\veps}{\varepsilon}
\newcommand{\bGa}{{{\rm I}\kern-.16em \Gamma}}
\newcommand{\cA}{{\mathcal A}}
\newcommand{\cB}{{\mathcal B}}
\newcommand{\cC}{{\mathcal C}}
\newcommand{\cD}{{\mathcal D}}
\newcommand{\cE}{{\mathcal E}}
\newcommand{\cF}{{\mathcal F}}
\newcommand{\cG}{{\mathcal G}}
\newcommand{\cH}{{\mathcal H}}
\newcommand{\cK}{{\mathcal K}}
\newcommand{\cQ}{{\mathcal Q}}
\newcommand{\cS}{{\mathcal S}}
\newcommand{\cT}{{\mathcal T}}
\newcommand{\cU}{{\mathcal U}}
\newcommand{\cV}{{\mathcal V}}
\newcommand{\cY}{{\mathcal Y}}
\newcommand{\scB}{{\mathscr B}}
\newcommand{\scF}{{\mathscr F}}
\newcommand{\scP}{{\mathscr P}}
\newcommand{\scS}{{\mathscr S}}
\newcommand{\tr}{\mbox{ tr }}
\newcommand{\del}{\partial}
\newcommand{\gtoas}[1]{{\;\mathop{\longrightarrow}\limits_{#1}\;}}
\newcommand{\abs}[1]{{\left\vert #1 \right\vert}}
\newcommand{\norm}[1]{{\left\Vert #1 \right\Vert}}
\newcommand{\Perm}[1]{{\scS}_{#1}}
\newcommand{\Ref}[1]{$(\ref{#1})$}
\newcommand{\True}[1]{\; \bbbone\left( #1 \right) \;}
\newcommand{\sfrac}[2]{{\textstyle \frac{#1}{#2}}}
\newcommand{\const}{\hbox{ \rm const }}
\newcommand{\ili}{\int\limits}
\newcommand{\sli}{\sum\limits}
\newcommand{\pli}{\prod\limits}
\newcommand{\lli}{\lim\limits}
\newcommand{\noin}{\noindent}
\newcommand{\E}{{\rm e}}
\newcommand{\I}{{\rm i}}
\newcommand{\dd}{{\rm d}}
\renewcommand\Re{\operatorname{Re}}
\renewcommand\Im{\operatorname{Im}}
\newcommand{\Tr}{{\rm Tr}}
\newcommand{\ketbra}[2]{|#1\rangle\langle#2|}
\newcommand{\bracket}[2]{\langle#1\mid #2\rangle}
\renewcommand{\bbbone}{1}
\newcommand{\yywalk}{\yy}
\newcommand{\ywalk}[1]{{\yy}^{(#1)}}
\newcommand{\ywalkel}[2]{{\yy}^{(#1)}_{#2}}
\newcommand{\ddens}{\nu}
\newcommand{\dens}[1]{\nu^{(#1)}}
\newcommand{\Pamp}[1]{P(#1)}
\newcounter{Liste}
\newcommand{\ntau}{\ell} 
\newcommand{\itau}{\imath} 
\newcommand{\jtau}{\jmath} 
\newcommand{\eptau}{\epsilon}
\newcommand{\Xspace}{{\rm X}} 
\newcommand{\XTspace}{\bX} 
\newcommand{\speps}{\eta}
\newcommand{\hebeps}{\veps}
\newcommand{\Quad}{\cQ}
\newcommand{\Queps}{\cQ^{(\eptau)}}
\newcommand{\Cov}{\cC}
\newcommand{\boQ}{\bQ}
\newcommand{\boQloc}{\bL}
\newcommand{\tQuad}{\cQ_1}
\newcommand{\ttQuad}{\cQ_2}
\newcommand{\tttQuad}{\cQ_3}
\newcommand{\tttaQuad}{\!\cQ_4}
\newcommand{\tCov}{\Cov_1}
\newcommand{\ttCov}{\Cov_2}
\newcommand{\tttCov}{\Cov_3}
\newcommand{\tttaCov}{\Cov_4}
\newcommand{\tttaS}{\cS}
\newcommand{\tttaA}{\cA}
\newcommand{\tttaZ}[1]{
{\ZZ{#1}}}
\newcommand{\tttaY}[1]{\cY_{#1}}
\newcommand{\Kuad}{\cK}
\newcommand{\Gov}{\cG}
\newcommand{\Proj}{{\opP}}
\newcommand{\Ham}{{\rm H}}
\newcommand{\Rad}{R}
\newcommand{\WW}{{\rm v}}
\newcommand{\WWW}{{\bV}}
\newcommand{\aphi}{a}
\newcommand{\bphi}{b}
\newcommand{\cphi}{c}
\newcommand{\bpht}[1]{b(#1)}
\newcommand{\cpht}[1]{c(#1)}
\newcommand{\bphj}[1]{b_{#1}}
\newcommand{\cphj}[1]{c_{#1}}
\newcommand{\acre}{{\opa}^\dagger}
\newcommand{\aann}{{\opa}^{\phantom{\dagger}}}
\newcommand{\aan}{{\opa}}
\newcommand{\FockB}{\scF_B}
\newcommand{\FckoB}[1]{\scF_B^{(#1)}}
\newcommand{\opId}{{\tt 1}}
\newcommand{\opA}{{\tt A}}
\newcommand{\opB}{{\tt B}}
\newcommand{\opD}{{\tt D}}
\newcommand{\opF}{{\tt F}}
\newcommand{\opH}{{\tt H}}
\newcommand{\opN}{{\tt N}}
\newcommand{\opP}{{\tt P}}
\newcommand{\opQ}{{\tt Q}}
\newcommand{\opR}{{\tt R}}
\newcommand{\opV}{{\tt V}}
\newcommand{\opW}{{\tt W}}
\newcommand{\opa}{{\tt a}}
\newcommand{\opn}{{\tt n}}
\newcommand{\unnex}[1]{\left[ #1\right]}
\newcommand{\ee}{{\rm e}}
\newcommand{\ww}{{\rm w}}
\newcommand{\yy}{{\tt y}}
\newcommand{\vvv}{{v}}
\newcommand{\vv}{{\tt v}}
\newcommand{\bogv}{{\tt w}}
\newcommand{\coh}[1]{\vv_{#1}}
\newcommand{\klamma}[2]{\langle\mkern-5mu\langle #1 |\mkern-3mu| #2\rangle\mkern-5mu\rangle}
\newcommand{\bili}[2]{(#1\mid#2)}
\newcommand{\KIP}[2]{\klamma{#1}{#2}}
\renewcommand{\Re}{\mbox{Re}\;}
\renewcommand{\Im}{\mbox{Im}\;}
\newcommand{\xx}{{\rm x}}
\newcommand{\xy}{{\rm y}}
\newcommand{\xz}{{\rm z}}
\newcommand{\xp}{{\rm p}}
\newcommand{\unicoh}[1]{\kappa_{#1}}
\newcommand{\DD}{\cD}
\newcommand{\ZZ}[1]{\mathfrak{Z}_{#1}}
\newcommand{\tilZ}{\canZ}
\newcommand{\ttilZ}{Z_2}
\newcommand{\Kin}{\cE}
\newcommand{\tKin}{\tilde \cE}
\newcommand{\Tube}{\cT}
\newcommand{\canZ}{Z_c}
\newcommand{\ganZ}{Z_g}
\renewcommand{\True}[1]{\bbbone_{#1}} 
\newcommand{\Diff}{\Delta}
\newcommand{\oh}{{\tt h}}
\newcommand{\hexpec}[1]{\left\langle #1\right \rangle_h}
\newcommand{\Bana}{\scB}
\newcommand{\Prm}[2]{\scP_{#1,#2}}
\newcommand{\uu}{{\rm u}}
\newcommand{\mm}{{\rm m}}
\newcommand{\Action}{\cS}
\begin{document}

\title[]{Functional Integral and Stochastic Representations for Ensembles of Identical Bosons on a Lattice}
\author{Manfred Salmhofer} 
\address{ 
Institut f\" ur Theoretische Physik, Universit\" at Heidelberg,
Philosophenweg~19, 69120 Heidelberg, Germany}

\email{salmhofer@uni-heidelberg.de}

\date{\today}

\begin{abstract}
\noindent
Regularized coherent-state functional integrals are derived for ensembles of identical bosons on a lattice, the regularization being a discretization of Euclidian time. Convergence of the time-continuum limit is shown for various discretized actions. The focus is on the integral representation for the partition function and expectation values in the canonical ensemble. The connection to the grand-canonical integral, and a number of differences, are discussed. Uniform bounds for covariances are proven, which simplify the analysis of the time-continuum limit and can also be used to analyze the thermodynamic limit. The relation to a stochastic representation by an ensemble of interacting random walks is made explicit, and its modifications in presence of a condensate are discussed. 
\end{abstract}

\maketitle

%

\section{Introduction}
The operator-algebraic (`second-quantized') formulation of quantum statistical mechanics \cite{BR} provides a precise mathematical framework for the study of quantum many-body models. For the analysis of equilibrium states and dynamics at (small) positive temperatures, functional integral representations (FR) and stochastic representations, in particular interacting Brownian motions (IBM), have proven very useful alternative approaches. In theoretical physics, formal (i.e.\ nonrigorous) FIR have almost become the method of choice; stochastic representations also have a long history, going back to Feynman \cite{Feynman1,Feynman2}. In mathematical physics, FIR have also been studied  already for a long time. Clearly, there are subtleties involved when dealing mathematically with infinite-dimensional integrals. A traditional way of obtaining natural regularizations that also provide a strict relation to the algebraic formulation proceeds via Suzuki-Trotter product formulas and coherent-state integrals, to get a discrete-time functional integral, the time-continuum limit of which can then be analyzed. For fermionic systems, this has (in combination with multiscale techniques) led to mathematical results that have not yet been proven by any other method, see for instance \cite{BGPS,DR1,DR2,FKTfl,FST4,GM,GMP,crg}. Bosonic systems pose the additional technical problem of dealing with the unboundedness of operators, which in the FIR translates to the unboundedness of fields. This has been dealt with using decompositions into large and small fields. Using these has led to proofs of existence of massless scalar field theories with a fixed short-distance regulator \cite{GK,FMRS}. More recently, the existence of the time-continuum limit of the Bose system (again with a fixed short-distance regularization) has been proven \cite{BFKT3}. 

IBM representations start from the Feynman-Kac formula. Given this formula for a single particle, it is straightforward to generalize it to $N$ particles. Unlike the coherent-state formalism, where (anti)symmetrization is built in, Bose (resp. Fermi) statistics then needs to be imposed by an explicit (anti)symmetrization operation, given by a (signed) sum over permutations of the labels of the $N$ particles. The IBM allows to bring large-deviation techniques to bear on the problem \cite{AdamsBruKoenig,AdamsCollevecchioKoenig,AdamsKoenig}. It is naturally suited for the canonical ensemble (but can also be used in the grand canonical one). 

For Bose systems, it is a major outstanding problem in mathematical physics to prove the occurrence of Bose-Einstein condensation (BEC) in the thermodynamic limit at fixed nonvanishing density. The case of the Gross-Pitaevskii limit for $N$ bosons in a trapping potential of fixed extension has been treated in \cite{LSY,LS,LSSY}, where BEC was proven in the sense that the reduced density matrix has an eigenvalue of order $N$, associated to a condensate state.
In the FIR, BEC shows up in the behaviour of the `zero mode' of the field (the constant part of the field in a spatially homogeneous system). At small enough temperatures and high enough density, the most likely value for the zero mode is nonvanishing, and condensation takes place when it acquires a nonzero expectation value, which breaks the global $U(1)$ symmetry of the action spontaneously.
In the IBM representation and its subsequent large-deviation analysis, BEC is characterised as the occurrence of infinite cycles (as $N \to \infty$) in the permutations dominating the  above-mentioned symmetrization sum. 

Since both FIR and IBM represent the same mathematical objects (the standard traces for partition functions and expectation values over Fock space), it is clear that they are related. Indeed, this relation is essentially Symanzik's random-walk representation of quantum field theory, the rigorous version of which has led to breakthrough results in mathematical quantum field theory \cite{Aizenman,Froehlichphi4}. (A detailed exposition is in \cite{FFS}.) It is interesting to make this relation explicit for specific models, to see how far techniques can be combined. (For the systems considered in this paper, this is done here in Section \ref{RWsec}.) Another motivation for this is as follows.The IBM (and in the lattice setting, its discrete version, the interacting random walk (IRW)) corresponds to a one-particle kinetic energy given by the Laplacian, i.e.\ $\frac{p^2}{2m}$, where $p$ is the momentum of the particle.  In presence of a condensate, Bogoliubov \cite{Bog} predicted a spectrum of low-lying states above the ground state which is linear in $|p|$ at small $|p|$, namely $E_B(p) \approx c |p|$, where $c$ is the velocity of sound in the condensate: the excitations are sound waves. It is certainly an interesting question how this spectrum arises in a collection of Brownian motions when an infinite cycle forms. In the FIR, there is a short answer: once a condensate has formed, the fluctuation fields describing deviations from the condensate have a matrix-valued covariance, the eigenvalues of which exhibit the Bogoliubov spectrum (see, e.g.\ \cite{Benfatto}). It is thus natural, given that a relation between FIR and IRW representation can be derived without reference to a condensate, to look if one can obtain a stochastic representation also in presence of a condensate. First steps in this direction are described in Section \ref{physsec}. It is shown there that, at least in the time-continuum limit and for a local interaction, the kinetic term does generate a stochastic process, albeit one with long-range jumps (corresponding to $c|p|$). A main difference is in the interaction term, which differs from the case without condensate by terms causing branching and coalescence of random walks. 

The question about the relation of FIR and IRW also motivates taking a closer look at the FIR for the canonical ensemble of bosons, and this is the main focus of this paper. The methods developed here, and some results, also apply to the grand canonical ensemble. The latter has been analysed mathematically in a series of papers \cite{BFKT1,BFKT2,BFKT3,BFKT4}, in which essential foundations for an analysis of Bose condensation using methods of constructive quantum field theory, in particular mathematically rigorous Wilsonian renormalization group methods, have been laid, and the flow has been studied into the symmetry-breaking regime. Further papers in that series can be obtained from \cite{feldmanhomepage}. More recently, the FIR for boson systems and its relation to IBM representations have been studied in \cite{Froehlichetal1, Froehlichetal2}, with a focus on the grand-canonical ensemble. 

In this paper, I give a detailed discussion of the FIR for the canonical ensemble of identical bosons on a lattice, with a careful derivation from the operator formulation and a detailed discussion of the time continuum limit and the freedom of using different discrete-time actions that may be suitable for different aspects of the problem.  The main points and results are (similarly to \cite{Froehlichetal1, Froehlichetal2}) based on using auxiliary (Hubbard-Stratonovitch) fields, in terms of which the integral can be rewritten in terms of a positive Gaussian measure, with an interaction given by the inverse of a determinant (grand-canonical case) or a large permanent (canonical case) of a boson covariance in the background field given by the auxiliary field. The permanent implies the explicit symmetrization mentioned above. In this representation, there is no oscillatory part of the Gaussian integral. The auxiliary fields are ultralocal, i.e. a lattice-regularized version of white noise.  

Some details of the grand-canonical and the canonical integrals are actually quite different, both concerning the covariance in a background field and the factorization properties of inverse determinant vs. permanent  (see Section \ref{grandsec}). A main technical result shown here is that the fully regularized model has properties that allow for bounds that are uniform in the auxiliary fields, very much in the spirit of the loop vertex expansion \cite{MagRiv}. This allows for completely elementary and technically very simple proofs of the existence ot the time-continuum limit for a variety of different actions. They do not require any mutliscale analysis and in that sense, they are as simple as the corresponding ones in the fermionic case \cite{FKT,PSUV}. The uniformity of the bounds also allows to analyze the infinite-volume limit and the properties of correlation functions in that limit, in cases where renormalization is not required. This will be done in another paper \cite{toappear}. Although technically comparatively simple, the hypotheses and proofs contain a few subtle points (to be discussed below, see in particular Sections \ref{grandsec} and \ref{conclusec}), and it remains to be seen whether a similar uniformity also holds in genuinely multiscale situations (for a multiscale version of the loop vertex expansion, see \cite{multiscaleLVE}). 

Since one aim of this paper is to provide a technically simple approach, I have added appendices to make it essentially self-contained, given a knowledge of the operator algebraic formulation of quantum statistical mechanics (as exposed, e.g., in Section 5.2 of \cite{BR}).

\section{Identical Bosons on a Lattice}

This section contains the detailed setup of the class of models considered here, namely lattice models of identical bosons with a short-range density-density interaction, in the second-quantized formulation. 

\subsection{Kinematics} 

\subsubsection{Space}
To avoid technical issues in the setup, I start from a lattice system. Let $\speps > 0$,  $L$ be a 
very large integer multiple of $\speps$, and
\beq
\Xspace = \speps \bZ^d/L\bZ^d \; .
\eeq
$\Xspace$ is a lattice with spacing $\speps>0$ and periodic boundary conditions, i.e.\ a discrete torus. The limit $\speps \to 0$ at fixed $L$ (i.e. $L/\speps \to \infty$)  is usually called the continuum or ultraviolet limit, and the limit $L \to \infty$ is called the thermodynamic or infrared limit. In principle, one would like to take both limits, but this paper is mainly concerned with the thermodynamic limit at fixed $\speps > 0$. It is expected that in this limit (and with an appropriate choice of kinetic term and interaction) Bose-Einstein condensation happens at low enough temperatures, even in presence of an ultraviolet regulator. 
For maps $\xx \mapsto f_\xx$ and $\xx \mapsto g_\xx$ from $\Xspace $ to some $\bC$-algebra $\cA $, let
\beq\label{bilfgdef}
\bili{f}{g}_\Xspace
=
\speps^d \sum_{\xx \in \Xspace}
f_\xx \; g_\xx
\eeq
For $f$ and $g$ arising as restrictions of continuous functions on $\bR^d/L\bZ^d$, this is a Riemann sum approximation to an integral that arises in the limit $\speps \to 0$, which motivates the alternative notation $\bili{f}{g}_\Xspace = \int_{\xx \in \Xspace} f_\xx \; g_\xx=\int_\Xspace f\; g$, as well as the notation 
$\int_\xx F_\xx  = \eta^d \sum_{\xx \in \Xspace} F_\xx$. This notation will also be used at fixed $\speps > 0$, e.g.\ at $\speps  = 1$. 

\subsubsection{Hilbert spaces for bosons and the CCR algebra}

The quantum mechanical Hilbert space for a spinless particle on $\Xspace $ is $\cH = L^2 (\Xspace, \bC)$. In our setting it is simply the finite-dimensional space $\bC^\Xspace$ of all maps $f$ from $\Xspace$ to $\bC$, with inner product $\bracket{f}{g} = \bili{\bar f}{g}_\Xspace$. The $n$-boson space is $\FckoB{n} = \bS_n \bigotimes^n \cH$, where $\bS_n$ is the symmetrization operator, $(\bS_n f)(\xx_1, \ldots, \xx_n) = \frac{1}{n!} \sum_{\pi \in \Perm{n}} f(\xx_{\pi(1)}, \ldots, \xx_{\pi(n)})$.  Here $\Perm{n}$ denotes the set of all permutations on $\{ 1, \ldots , n\}$. The bosonic Fock space is defined as
\beq
\FockB
=
\bigoplus_{n=0}^\infty \FckoB{n}
\eeq
(where $\FckoB{0}=\bC$). 

For $N \in \bN_0$ let $\Proj_N$ denote the projector onto the $N$-particle subspace $\FckoB{N}$ of $\FockB$, $\Proj_{\le N } = \sum_{n=0}^N \Proj_N$, and $\FckoB{\le N} = \Proj_{\le N }\FockB$. 

\intremark{Let $n \in \bN_0$. For $f \in \cH$, the bosonic creation operator $\acre (f): \FckoB{n} \to \FckoB{n+1}$ is defined by $\acre (f) \bS_n (g_1 \otimes \ldots \otimes g_n) = \sqrt{n+1} \; \bS_{n+1} (f \otimes  g_1 \otimes \ldots \otimes g_n) $ and linearity. The bosonic annihilation operator $\aan (f): \FckoB{n+1} \to \FckoB{n}$ is the adjoint of $\acre(f)$. Define $\acre_\xx = \acre(e_\xx)$, for $e_\xx \in \cH$ given by $(e_\xx)_\xy = \eta^{-d/2} \delta_{\xx,\xy}$. }
Recall the standard $C^*$ algebra formulation of quantum many-particle systems given, e.g., in \cite{BR}. The algebra for boson systems is generated by bosonic annihilation operators $\aan$ and creation operators $\acre$ satisfying the canonical commutation relations (CCR) 
\beq
\aann_\xx \aann_\xy - \aann_\xy \aann_\xx = 0  
\quad
\mbox{ and }
\quad
\aann_\xx \acre_\xy - \acre_\xy \aann_\xx = \speps^{-d} \delta_{\xx,\xy} 
\quad
\mbox{ for all } \xx,\xy \in \Xspace \;.
\eeq
Let 
\beq\label{vacdef}
\Omega  = (1, 0, 0, \ldots ) \in \FockB
\eeq 
denote the vacuum vector. Then an orthonormal basis for $\FckoB{N}$ is given by the vectors 
\beq\label{FockONB}
e_\nu = \pli_{\xx \in \Xspace} (\nu_\xx!)^{-\frac12}\; {\acre_\xx}^{\nu_\xx} \Omega\; ,
\eeq
where the sequences $\nu = (\nu_\xx)_{\xx \in \Xspace} \in \bN_0^\Xspace$ satisfy the condition $\sum_{\xx \in \Xspace} \nu_\xx = N$. Thus the $N$-boson space $\FckoB{N}$ is finite-dimensional, with dimension
\beq
{\rm dim } \; \FckoB{N} 
=
(-1)^N \; {-|\Xspace| \choose N}
=
{|\Xspace| + N -1 \choose |\Xspace| -1} \; .
\eeq
$\FockB$ is infinite-dimensional even if $\Xspace$ is finite. 
\intremark{Using $\delta_{mn} = \frac{1}{2\pi\I} \int_{|z|=r} \frac{\dd z}{z} \; z^{m-n}$
(and choosing $r < 1$ to ensure convergence of the geometric series)
\beq
\begin{split}
{\rm dim } \; \FckoB{N}
&=
\sum_{\nu \in \bN_0^\Xspace} \delta_{\sum_{\xx \in \Xspace} \nu_\xx ,N}
\\
&=
\frac{1}{2\pi\I} \int_{|z|=r} \frac{\dd z}{z^{N+1}} \;  
(1-z)^{-|\Xspace|}
\\
&=
(-1)^N \; {-|\Xspace| \choose N}
\\
&=
{|\Xspace| + N -1 \choose |\Xspace| -1}
\end{split}
\neeq
}

Define the local density operators and the number operator $\opN$ by  
\beq\label{densidef}
\opn_\xx = \acre_\xx \aann_\xx
\qquad
\opN
=
\int_\xx \opn_\xx\; .
\eeq
Every $\opn_\xx$ maps the $N$-particle space $\FckoB{N}$ to itself, and on this space, it has operator norm $\norm{\opn_\xx} = N$. The restriction of $\opN$ to $\FckoB{N}$ is $N$ times the identity operator.  
The $\opn_\xx$ all commute:
\beq\label{densicom}
\opn_\xx \opn_\xy = \opn_\xy \opn_\xx
\quad
\mbox{ for all } \xx, \xy \in \Xspace \; .
\eeq

\subsection{Hamiltonians} 

For a single particle, the Hamiltonian is given by a self-adjoint matrix 
$\Kin$, acting as 
$(\Kin f)_\xx = \int_\xy \Kin_{\xx,\xy} f_\xy$, i.e.\ a hopping matrix. Typical choices will be the discrete Laplacian $-\Delta$, or $-\Delta + m^2$, with $m^2 > 0$,  or, for particles in an external potential $W$, $-\Delta + W(\xx)$. This operator will be assumed to be nonnegative. The particles interact by a two-body potential, i.e. $\WW_{\xx,\xy} \in \bR$ is the interaction energy contributed by a pair of particles at sites $\xx$ and $\xy$. The interaction is symmetric: $\WW_{\xx,\xy}= \WW_{\xy,\xx}$. Thus $\WW$ can be regarded as a self-adjoint operator on $\bC^\Xspace$. $\WW$ is called translation-invariant if $\WW_{\xx+\xz,\xy+\xz} = \WW_{\xx,\xy}$ for all $\xz$, and in this case I use the notation  $\WW_{\xx,\xy} = \WW (\xx-\xy)$. 
 
The Hamiltonian on the bosonic $N$--particle Hilbert space  $\FckoB{N}$ is
\beq\label{HamNdef}
\Ham_N 
=
\sum_{n=1}^N \Kin_n 
+
\frac12
\sum_{m,n=1}^N \WW_{\xx_m,\xx_n}
\eeq
where 
\beq
(\Kin_n f) (\xx_1, \ldots, \xx_N)
=
\sum_{n=1}^N 
\int_\xy  \Kin_{\xx_n,\xy}\; 
f(\xx_1, \ldots, \xx_{n-1}, \xy, \xx_{n+1}, \ldots, \xx_N)
\eeq
and where $\WW_{\xx_m,\xx_n}$ acts as a multiplication operator. 
In second-quantized form, it can be written as 
\begin{equation}\label{Hamdef}
\opH
=
\int_{\xx,\xy} \Kin_{\xx,\xy} \; \acre_\xx \aann_\xy
+
\frac12
\int_{\xx,\xy} \WW_{\xx,\xy}
\; 
\opn_\xx \opn_\xy \; .
\end{equation}
Inserting the notation (\ref{bilfgdef}), gives $\opH = \opH_0 + \opV$ with
\beq\label{H0Vdef}
\opH_0 = \bili{\acre}{\Kin\, \aan}_\Xspace
\quad\mbox{ and }\quad
\opV = \sfrac12\bili{\opn}{\WW\, \opn}_\Xspace\; 
\eeq
The case $\WW = 0$ describes free (i.e.\ noninteracting bosons).

\intremark{For all $\xx \in \Xspace: [\opV, \opn_\xx]=0$. Therefore, if $\lambda: \Xspace \to \bC$ and we define a new decomposition of $\opH_+$ by 
\beq
\begin{split}
\opH_+
&=
\opH_0 + \opV 
\\
&=
\opH_0 + \sum_\xx \lambda_\xx \opn_\xx
+
\opV -  \sum_\xx \lambda_\xx \opn_\xx
\\
&=
\tilde \opH_0 + \tilde \opV
\end{split}
\eeq 
then $[\tilde\opH_0,\tilde \opV] = [\tilde \opH_0, \opV] = [\opH_0,\opV]$. For all operators to be self-adjoint, the $\lambda_\xx$ must of course be real. 
}

\subsection{Ensembles}
In the following, the Hamiltonian $\opH$ is assumed to be nonnegative; in the above setup this is the case if both operators $\Kin \ge 0$ and $\WW \ge 0$. 

\subsubsection{Canonical ensemble}
For $\beta > 0$ define
\beq\label{uncan}
\unnex{\opA}^{(N,\beta,\Xspace)}
=
\Tr_{\FockB} \left[
\E^{-\beta \opH}\;
\opA\; \Proj_N
\right] \; ,
\eeq
$\canZ^{(N,\beta,\Xspace)} = [\opId]^{(N,\beta,\Xspace)}$, and 
\beq
\langle \opA\rangle^{(N,\beta,\Xspace)} 
= 
\frac{1}{\canZ^{(N,\beta,\Xspace)}}\, [\opA]^{(N,\beta,\Xspace)} \; .
\eeq 
Then $\canZ^{(N,\beta,\Xspace)}$ is the canonical partition function and $\langle A\rangle^{(N,\beta,\Xspace)} $ is the normalized canonical expectation value of $\opA$, for a system of $N$ bosons at inverse temperature $\beta$ on configuration space $\Xspace$. The range of the projection operator $\Proj_N$ is finite-dimensional, so $\Proj_N$ is trace class on $\FockB$, and the trace in (\ref{uncan}) exists whenever $\E^{-\beta \opH}\; \opA$ is a bounded operator. Because $\Proj_N^2 = \Proj_N$ and the trace is cyclic,
\beq
[A]^{(N,\beta,\Xspace)}
=
\Tr_{\FockB} \left[\Proj_N
\E^{-\beta \opH}\;
\opA\; \Proj_N
\right] \; .
\eeq
Because $\opH$ conserves particle number, 
\beq
[A]^{(N,\beta,\Xspace)}
=
\Tr_{\FockB} \left[\Proj_N
\E^{-\beta \opH}\;
\opA_N
\right] 
=
\Tr_{\FockB} \left[
\E^{-\beta \opH}\;
\opA_N
\right] 
\eeq
with
\beq\label{ANdef}
\opA_N = \Proj_N \; \opA\; \Proj_N \; .
\eeq
Thus it suffices to consider expectations of operators $\opF$ that map $\FckoB{N}$ to itself. 
It is convenient to introduce a generating function with insertion $\opF$ as
\beq\label{FcanZ}
\canZ^{(N,\beta,\Xspace)} (\opH, \opF) 
=
\Tr_{\FckoB{N}} \left[ \E^{-\beta \opH} \; (\opId + \opF)\right]
=
\Tr_{\FockB} \left[ \E^{-\beta \opH} \; (\opId + \opF) \; \Proj_N \right] \; .
\eeq
Normalized expectation values can then be obtained by differentiation as
\beq
\langle \opF \rangle^{(N,\beta,\Xspace)}_\opH
=
\frac{\partial}{\partial \lambda} \ln \canZ^{(N,\beta,\Xspace)} (\opH, \lambda \opF) \mid_{\lambda = 0}
\eeq
The dependence on $\opH$ has been made explicit in the notation. In general, for Hamiltonians preserving particle number, {\em normalized} canonical expectation values are unchanged under shifts in the energy, as follows. For any $\alpha \in \bC$ and $\psi \in \FockB$, 
\beq
\E^{\alpha \opN} \Proj_N \psi 
=
\E^{\alpha N} \Proj_N \psi
\eeq
As a consequence
\beq\label{fixy}
\canZ^{(N,\beta,\Xspace)} (\opH + \alpha \opN , \; \opF)
=
\E^{-\alpha\beta N} \; \canZ^{(N,\beta,\Xspace)} (\opH, \; \opF)
\eeq
The prefactor drops out in the quotient defining the normalized canonical expectation value, so
\beq\label{shifty}
\langle F \rangle^{(N,\beta,\Xspace)}_{\opH + \alpha \opN}
=
\langle F \rangle^{(N,\beta,\Xspace)}_{\opH}
\eeq

\subsubsection{Grand-canonical ensemble}
The grand canonical partition function at chemical potential $\mu$ is 
\beq\label{grancanZ}
\ganZ^{(\beta,\mu\Xspace)}
=
\sum_{N=0}^\infty \E^{\beta \mu N} \canZ^{(N,\beta,\Xspace)}
=
\Tr_{\FockB} \left[
\E^{-\beta (\opH-\mu\opN)}
\right] \; ,
\eeq
and the grand canonical expectation value is defined similarly. 
Convergence of the sum over $N$ in general requires conditions on $\mu$ and $\opH$ (e.g.\ $\mu < \inf {\rm spec}\;  \opH_0$ when $\opV =0$) which will be specified when appropriate. 

Returning to the shift invariance \Ref{shifty}, it is very important to note that this invariance does {\em not} hold for the partition functions. Instead, the prefactor in \Ref{fixy} implies that in the sum for the grand canoncial partition function, the parameter $\mu$ gets shifted to $\mu - \alpha$. Depending on $\alpha$, this shift may cause the sum over $N$ to diverge. Thus, when one attempts to obtain the usual relation between canonical and grand canonical ensembles, arbitrary shifts are no longer allowed. The convergence condition will show up prominently in the coherent-state integral derived in the next section.

\section{Main Results}\label{resultssec}
Here I state the main integral formulas and convergence theorems. 

By definition, $\Kin$ generates a stochastic process iff for all $\tau \ge 0$ and all $\xx$ and $\xy \in \Xspace$, the $\xx,\xy$ matrix element of $\E^{-\tau \Kin}$ is a nonnegative real number: 
\beq\label{stochcond}
\forall \tau \ge 0: \quad
(\E^{-\tau \Kin})_{\xx,\xy} \ge 0 \; .
\eeq

\begin{satz}\label{a4thm}
Let $\opH$ have kinetic term $\Kin$ and interaction $\WW$, as in (\ref{H0Vdef}). Let $\Kin$ and $\WW$ be translation invariant, $\Kin > 0$ and $\WW \ge 0$ as operators, and $\Kin$ generate a stochastic process. Then the canonical partition function has the integral representation 
\beq
\canZ^{(N,\beta,\Xspace)} 
=
\E^{\frac{\beta}{2} \WW(0) N} \; 
\E^{\frac{\beta}{2} |\Xspace| (\WW(0) - \hat \WW (0))}
\lim_{\ntau \to \infty}
\ZZ{\ntau}^{(N,\beta,\Xspace)}
\eeq
where 
\beq\label{a4integral}
\ZZ{\ntau}^{(N,\beta,\Xspace)}
=
\int \DD \aphi \; 
\E^{- \Action_\XTspace (\aphi)} \; 
\sfrac{1}{N!} 
{\bili{\bar\aphi(\beta)}{\aphi(0)}_\Xspace}^N \; .
\eeq
Here $\XTspace$ includes a discrete time lattice of spacing $\eptau = \frac{\beta}{\ntau}$, i.e.\ $\XTspace = \bT \times \Xspace$, where $\bT = \eptau \{0, \ldots , \ntau\}$\footnote{In the later sections, an index $\jtau \in \{0,\ldots, \ntau\}$, so that $\tau = \eptau \jtau$, will be used instead of $\tau$ to label fields and matrix elements of operators} The integral runs over complex fields $\aphi: \XTspace \to \bC$, $\DD \aphi$ denotes the volume form 
\beq
\DD \aphi 
=
\pli_{\tau \in \bT}
\pli_{\xx \in \Xspace}
\sfrac{\dd \bar\aphi(\tau,\xx) \wedge \aphi (\tau,\xx)}{2\pi\I}
\eeq
on $\bC^\XTspace$, and the action $\Action_\XTspace$ is
\beq\label{a4action}
\begin{split}
\Action_\XTspace (\aphi)
&=
\int_\tau \left[
\bili{\bar\aphi(\tau)}{[(- \del_\tau + \hat\WW(0) + \Kin^{(\eptau)} ) \aphi](\tau)}_\Xspace
+
\left(
|\aphi(\tau)|^2
\right|
\left.
\WW \; |\aphi(\tau)|^2
\right)_\Xspace
\right]
\\
&=
\int_\tau \left[
\int_\xx
\bar\aphi(\tau,\xx) \; 
[(- \del_\tau + \hat\WW(0) + \Kin^{(\eptau)} ) \aphi](\tau,\xx)
+
\int_{\xx,\xy}
|\aphi(\tau,\xx)|^2
\;
\WW (\xx-\xy)\; |\aphi(\tau,\xy)|^2
\right]
\end{split}
\eeq
The integral notation $\int_\tau$ is shorthand for $\eptau \sum_{\tau \in \bT}$, $\del_\tau$ denotes the discrete forward time derivative 
\beq
(\del_\tau \aphi)(\tau,\xx)
=
\frac{1}{\eptau} 
( \aphi(\tau+\eptau, \xx) - \aphi (\tau,\xx))
\eeq
with the boundary condition $\aphi (\beta + \eptau, \xx) =0$, and 
\beq\label{31}
(\Kin^{(\eptau)} \aphi) (\tau,\xx)
=
\frac{1}{\eptau} \int_\xy (1 - \E^{-\eptau \Kin}) (\xx-\xy)\; \aphi (\tau+\eptau,\xy) \; .
\eeq
The limit remains unchanged if $(\Kin^{(\eptau)} \aphi) (\tau,\xx)$ is replaced by $(\Kin \aphi) (\tau+\eptau,\xx)$ or by $ (\Kin \aphi) (\tau,\xx)$ in the action.
\end{satz}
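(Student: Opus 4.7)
The plan is to follow the standard coherent-state derivation, adapted to the canonical projector $\Proj_N$, with the $\ntau \to \infty$ limit handled via uniform bounds. The four ingredients are: normal-ordering of $\opV$, a Suzuki-Trotter factorization, insertion of unnormalized coherent-state resolutions of identity, and identification of the c-number prefactors.

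First, using the CCR (cf.\ (\ref{densicom})), I would write $\opV = {:}\opV{:} + \tfrac{1}{2}\WW(0)\opN$. On $\FckoB{N}$ the $\opN$-term is the constant $\tfrac{1}{2}\WW(0)N$, which factors out of the trace as the prefactor $\E^{\frac{\beta}{2}\WW(0)N}$. To match the bilinear $\hat\WW(0)\bili{\bar\aphi}{\aphi}_\Xspace$ appearing in (\ref{a4action}), I also add and subtract $\hat\WW(0)\opN$ in the exponent; on $\FckoB{N}$ the subtracted piece is another c-number, which will combine with the remaining constants to produce $\E^{\frac{\beta}{2}|\Xspace|(\WW(0)-\hat\WW(0))}$. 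The Suzuki-Trotter product formula
\[
\E^{-\beta\opH}\Proj_N = \E^{\frac{\beta}{2}\WW(0)N}\lim_{\ntau \to \infty} \bigl(\E^{-\eptau \opH_0}\,\E^{-\eptau({:}\opV{:}+\hat\WW(0)\opN)}\bigr)^{\ntau}\Proj_N
\]
then applies on the finite-dimensional subspace $\FckoB{\leq N}$ where every factor is bounded.

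Second, insert between consecutive Trotter factors the resolution of identity $\opId = \int \DD\aphi\,\E^{-\bili{\bar\aphi}{\aphi}_\Xspace}|\aphi\rangle\langle\aphi|$, built from unnormalized coherent states $|\aphi\rangle = \E^{\bili{\aphi}{\acre}_\Xspace}\Omega$, and label the $\ntau+1$ fields by $\aphi(\tau)$, $\tau \in \bT$. For the closing of the trace, the elementary identity $\Proj_N|\aphi\rangle = \frac{1}{N!}(\bili{\aphi}{\acre}_\Xspace)^N\Omega$ gives $\langle\aphi(\beta)|\Proj_N|\aphi(0)\rangle = \frac{1}{N!}\bili{\bar\aphi(\beta)}{\aphi(0)}_\Xspace^N$, which is exactly the boundary insertion of (\ref{a4integral}). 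For the bulk, using $\aan_\xx|\aphi\rangle = \aphi_\xx|\aphi\rangle$ one computes $\langle\aphi'|\E^{-\eptau\opH_0}|\aphi\rangle = \exp\bili{\bar\aphi'}{\E^{-\eptau\Kin}\aphi}_\Xspace$ exactly, and the matrix element of $\E^{-\eptau\,{:}\opV{:}}$ up to $O(\eptau^2)$ per slice; the per-slice errors sum to $O(1/\ntau) \to 0$ once a uniform bound is established.

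Third, collect exponents. Summation by parts with the boundary convention $\aphi(\beta+\eptau)=0$ turns the overlap differences $\bili{\bar\aphi(\tau+\eptau)-\bar\aphi(\tau)}{\aphi(\tau)}_\Xspace$ into $-\int_\tau\bili{\bar\aphi(\tau)}{\del_\tau\aphi(\tau)}_\Xspace$; the piece $\bili{\bar\aphi(\tau+\eptau)}{(1-\E^{-\eptau\Kin})\aphi(\tau)}_\Xspace$ rewrites as the $\Kin^{(\eptau)}$ term of (\ref{31}); the $\hat\WW(0)\opN$ insertion contributes the $\hat\WW(0)\bili{\bar\aphi}{\aphi}_\Xspace$ term; and the coherent-state eigenvalue of ${:}\opV{:}$ yields the quartic $\bili{|\aphi|^2}{\WW|\aphi|^2}_\Xspace$. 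The $|\Xspace|$-factor prefactor $\E^{\frac{\beta}{2}|\Xspace|(\WW(0)-\hat\WW(0))}$ is the residual c-number from combining normal-ordering with the $\hat\WW(0)$-shift: both corrections are fixed constants per site and per time slice, and they accumulate proportionally to the spatial volume $|\Xspace|$. The three stated forms of the kinetic term differ by $O(\eptau)$ in operator norm, so they define Gaussian covariances converging to the same continuum-time limit.

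The main obstacle is the $\ntau\to\infty$ convergence. Under the hypotheses $\Kin > 0$ and $\WW \ge 0$, the real part of the bilinear form in $\Action_\XTspace$ is bounded below uniformly in $\ntau$ (the mass gap being provided by $\Kin > 0$ together with $\hat\WW(0) \ge 0$), yielding an $\ntau$-independent Gaussian dominating function. The boundary insertion is a polynomial of fixed degree $N$ in $\aphi$ and is therefore integrable against this Gaussian, so dominated convergence delivers the limit. A secondary difficulty is the bookkeeping of the prefactors, in particular extracting the precise combination $\WW(0)-\hat\WW(0)$ via careful tracking of the normal-ordering and $\hat\WW(0)$-shift c-numbers through each Trotter step.
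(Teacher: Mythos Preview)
Your outline has a real gap at the point you flag as ``the main obstacle.'' The per-slice error in replacing $\langle\aphi(\tau{+}\eptau)|\E^{-\eptau\,{:}\opV{:}}|\aphi(\tau)\rangle$ by $\E^{-\eptau V(\bar\aphi(\tau+\eptau),\aphi(\tau))}\E^{\bili{\bar\aphi(\tau+\eptau)}{\aphi(\tau)}}$ is \emph{not} a uniform $O(\eptau^2)$: the Taylor remainder involves $\opV^2$, whose coherent-state matrix element grows like $|\aphi|^8$. With $\ntau$ such factors and an integration over unbounded $\aphi$, no $\ntau$-independent Gaussian majorant exists for the product, and dominated convergence does not apply. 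This is the bosonic large-field problem; it is precisely why the paper does \emph{not} expand $\E^{-\eptau\opV}$ in coherent states.

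A second, related gap: the direct coherent-state computation produces an interaction of the form $\bar\aphi(\tau{+}\eptau)^2\,\WW\,\aphi(\tau)^2$ (time-nonlocal and complex), not the time-local $|\aphi(\tau)|^2\,\WW\,|\aphi(\tau)|^2$ appearing in (\ref{a4action}). Passing from one to the other is exactly the content of the theorem, and it cannot be done by ``summing to $O(1/\ntau)$.'' The paper handles both issues by a different route: it linearizes $\E^{-\eptau\opV}$ via a Hubbard--Stratonovich field $h$, so that the coherent-state matrix elements become \emph{exact} and the $\aphi$-integral is Gaussian. The result is a permanent of the covariance $\Cov(h)$, which is bounded \emph{uniformly in $h$} by $\Cov(0)$ because $\Kin$ generates a stochastic process (Lemma~\ref{Quadlemma}). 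The passage to the time-local $|\aphi|^4$ action is then effected by comparing $\Quad(h)$ with a modified $\ttQuad(h,\uu)$ (with $\uu_\xx=\tfrac12\WW_{\xx,\xx}$) and controlling the difference by integration by parts in $h$ (Lemma~\ref{intbypartslemma}); the uniform covariance bound is what makes this work without any large-field analysis. The prefactors $\E^{\frac{\beta}{2}\WW(0)N}$ and $\E^{\frac{\beta}{2}|\Xspace|(\WW(0)-\hat\WW(0))}$ emerge from this $\uu$-shift and the determinant of $\ttQuad$, not from normal ordering.
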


Two important features of this representation are characteristic for the canonical ensemble: (1) the $N$-dependent polynomial factor ${\bili{\aphi(\beta)}{\aphi(0)}_\Xspace}^N$ appears in the integrand, and (2) the time derivative operator in the quadratic part of the action does {\em not} have a periodic boundary condition at $\tau = \beta$. (See also Theorem \ref{HSFIthm}, in particular (\ref{jbouc})). 
The `mass term' $\hat \WW (0)$ may be surprising at first sight. It arises in the proof in an integration by parts step which is necessary to show convergence. Its occurrence is also related to that of the prefactor $\E^{\frac{\beta}{2} \WW(0) N}$, which is necessary to fulfil certain positivity conditions in intermediate steps of the proof. Another variant of the interaction, where the $|a|^4$ term is replaced by a term of at most quadratic growth for $|a| \to \infty$, is discussed in Section \ref{integralsec}.

Theorem \ref{a4thm} is proven in Section \ref{Analysis2}, using the preparations made in Sections \ref{Analysis1}. Translation invariance is assumed here only to simplify the statement. In Section \ref{recoverasec}, a more general statement, Theorem \ref{a4thmj}, where translation invariance is not assumed, is proven. 

Theorem \ref{a4thm} applies in particular to a $\Kin$ of the form $\Kin = \Kin_0 - \mu \opId $, with $\Kin_0 > 0$ and $\mu \le  0$, and it implies that the limit of the integral (\ref{a4integral}) equals (up to the explicit prefactor) the canonical partition function  $\canZ^{(N,\beta,\Xspace)} $. By (\ref{fixy}), $\canZ^{(N,\beta,\Xspace)} $ is analytic in $\mu$ for all $\mu \in \bC$. If $\WW > 0$, the quartic term in (\ref{a4action}) is bounded below by $\delta  \int_{\tau,\xx} |\aphi(\tau,\xx)|^4$ for some $\delta > 0$, so the integral (\ref{a4integral}) converges absolutely for all $\mu \in \bC$ as well, and it defines an analytic function of $\mu$. Moreover, for $\mu_0 < 0$, the convergence as $\ntau \to \infty$ is uniform on compact subsets of $\{\mu \in \bC: \Re \mu < \mu_0\}$, hence the limiting function is analytic in $\mu$ there. Thus it coincides with $\canZ^{(N,\beta,\Xspace)} $ for all $\mu$ by the identity theorem, and hence the integral representation extends to all $\mu$. 

The action (\ref{a4action}) is such that it admits a rewriting by an integral over auxiliary fields. Indeed, the theorem is proven by using such a representation, which is the basis of all the analysis done here, hence more fundamental. It is given as follows.

\begin{satz}\label{h-a-integral}
Assume that $\Kin > 0$ and $\WW \ge 0$, and let $\opF$ be any operator on $\FckoB{N}$. Then 
$\canZ^{(N,\beta,\Xspace)} (\opH,\opF) = \lli_{\ntau\to \infty} \canZ^{(N,\beta,\Xspace,\ntau)}(\opH,\opF)$
where  
\beq
\canZ^{(N,\beta,\Xspace,\ntau)}(\opH,\opF)
=
\int \dd \mu_{\WWW} (h) 
\int_{\bC^{\XTspace}} \DD \aphi \; 
\E^{-\bili{\bar\aphi}{\Quad (h) \; \aphi}_\XTspace}\;
\bracket{\coh{\aphi (\beta)}}{(1+\opF) \; \Proj_N \; \coh{\aphi(0)}}
\eeq
The integration over $h$ runs over $\bR^\XTspace$, and $\mu_\WWW$ is the normalized, centered Gaussian measure with time-local covariance $\WW$, i.e.\
$\dd\mu_\WWW (h) = \pli_{\tau \in \bT} \dd\mu_\WW (h(\tau))$. 
The operator $\Quad (h)$, given in (\ref{Quaddef}) below, has a positive hermitian part, and the integral over $\aphi$ converges absolutely. 
The vectors $\coh{\aphi} \in \FockB$ are unnormalized coherent states, given in 
(\ref{cohstexp}). For $\opF = 0$, the matrix element is
\beq
\bracket{\coh{\aphi(\beta)}}{
\Proj_N \; \coh{\aphi(0)}}
=
\sfrac{1}{N!}\; 
{\bili{\bar \aphi (\beta)}{\aphi (0)}_\Xspace}^N
\eeq
The covariance $\Cov (h) = \Quad (h)^{-1}$ is entire analytic in $h$ on $\bC^\XTspace$. If $\Kin$ generates a stochastic process, then $\Cov(h)$ satisfies the uniform bound
\beq
\forall h \in \bR^\XTspace: \quad
\abs{\Cov(h) (\tau,\xx; \tau'\xx')}
\le
\Cov(0) (\tau,\xx; \tau'\xx') \; .
\eeq
\end{satz}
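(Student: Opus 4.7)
The plan is to prove the four assertions of Theorem~\ref{h-a-integral}---the integral representation, the $N$-particle matrix element, positivity of the hermitian part of $\Quad(h)$, and entire analyticity together with the uniform bound on $\Cov(h)$---by building the representation through a standard Trotter / coherent-state / Hubbard--Stratonovich (HS) sequence and then reading the structure of $\Quad(h)$ directly off the resulting discrete-time action.

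First, since $\Proj_N$ has finite-dimensional range and both $\opH_0$ and $\opV$ preserve and are bounded on $\FckoB{\le N}$, the Lie--Trotter product formula gives $\E^{-\beta\opH}\Proj_N = \lli_{\ntau\to\infty}(\E^{-\eptau\opH_0}\E^{-\eptau\opV})^{\ntau}\Proj_N$ in operator norm, with $\eptau = \beta/\ntau$. Inserting a coherent-state resolution of identity on $\FockB$ at each of the $\ntau$ cuts and using $\bracket{\coh{\chi}}{\coh{\psi}} = \E^{\bili{\bar\chi}{\psi}_\Xspace}$ and $\bracket{\coh{\chi}}{\E^{-\eptau\opH_0}\coh{\psi}} = \E^{\bili{\bar\chi}{\E^{-\eptau\Kin}\psi}_\Xspace}$ (the latter from $\E^{-\eptau\opH_0}\coh{\psi} = \coh{\E^{-\eptau\Kin}\psi}$, valid because $\opH_0$ is quadratic and $\opH_0\Omega = 0$) produces a discrete-time integral over $\aphi \in \bC^\XTspace$, whose action has off-diagonal time-coupling kernel $\E^{-\eptau\Kin}$ and a local quartic factor $\E^{-\frac{\eptau}{2}\bili{|\aphi(\tau)|^2}{\WW|\aphi(\tau)|^2}_\Xspace}$. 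The terminal trace against $\Proj_N$ leaves the boundary matrix element $\bracket{\coh{\aphi(\beta)}}{(1+\opF)\Proj_N\coh{\aphi(0)}}$; for $\opF = 0$, expanding $\coh{\psi}$ in the occupation-number basis (\ref{FockONB}) and using orthonormality gives exactly $\sfrac{1}{N!}\bili{\bar\aphi(\beta)}{\aphi(0)}_\Xspace^N$.

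Second, I would apply the HS identity $\E^{-\frac12\bili{n}{Vn}_\Xspace} = \int\dd\mu_V(h)\,\E^{\I\bili{h}{n}_\Xspace}$, valid for any $V\ge 0$, slice by slice with $V = \eptau\WW$, to decouple the quartic factor. The key algebraic observation is that $\E^{\I\bili{h}{\opn}_\Xspace}$ conjugates the CCR by $\acre_\xx\mapsto \E^{\I h_\xx}\acre_\xx$ and hence $\E^{\I\bili{h}{\opn}_\Xspace}\coh{\aphi} = \coh{\E^{\I h}\aphi}$ (componentwise product). Therefore
\[
\bracket{\coh{\chi}}{\E^{-\eptau\opH_0}\E^{\I\bili{h(\tau)}{\opn}_\Xspace}\coh{\psi}} = \E^{\bili{\bar\chi}{\E^{-\eptau\Kin}\E^{\I h(\tau)}\psi}_\Xspace},
\]
and combining this with the standard coherent-state normalization on each slice, the total exponent becomes a bilinear form $\bili{\bar\aphi}{\Quad(h)\aphi}_\XTspace$ whose block matrix in the time-slice index is lower-bidiagonal with time-diagonal blocks $\bbbone$ and time-subdiagonal blocks $-\E^{-\eptau\Kin}\E^{\I h(\tau)}$ (with $\E^{\I h(\tau)}$ the space-diagonal unitary of phases $\E^{\I h(\tau,\xx)}$); no wrap-around coupling appears, owing to the non-periodic boundary enforced by the terminal $\Proj_N$ insertion. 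Rescaling $h$ so the joint measure equals $\dd\mu_\WWW(h)$ is book-keeping.

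Third, from this triangular block structure all remaining claims follow. The time-diagonal blocks being $\bbbone$ gives $\det\Quad(h) \equiv 1$ for every $h \in \bC^\XTspace$, so $\Cov(h) = \Quad(h)^{-1}$ exists and forward substitution yields, for $\tau > \tau'$,
\[
\Cov(h)(\tau,\xx;\tau',\xx') = \bigl[\E^{-\eptau\Kin}\E^{\I h(\tau-\eptau)}\,\E^{-\eptau\Kin}\E^{\I h(\tau-2\eptau)}\cdots\E^{-\eptau\Kin}\E^{\I h(\tau')}\bigr]_{\xx,\xx'},
\]
and $\Cov(h)=0$ for $\tau<\tau'$. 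Each factor $\E^{\I h(\tau_\jtau)}$ is an entire function of $h$, so $\Cov(h)$ is entire on $\bC^\XTspace$. Positivity of the hermitian part of $\Quad(h)$ follows from $\|\E^{-\eptau\Kin}\E^{\I h(\tau)}\| = \|\E^{-\eptau\Kin}\| < 1$ (since $\Kin > 0$ and $\E^{\I h(\tau)}$ is unitary for real $h$): Cauchy--Schwarz and AM--GM on the off-diagonal bilinear terms give $\Re\bili{\bar\aphi}{\Quad(h)\aphi}_\XTspace \ge (1 - \|\E^{-\eptau\Kin}\|)\|\aphi\|^2$, so the $\aphi$-integral converges absolutely. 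Finally, expanding the time-ordered product as a sum over space-time paths $\xx' = \xy_0 \to \xy_1 \to \cdots \to \xy_{\jtau-\jtau'} = \xx$ and using $|\E^{\I h(\tau_l,\xy_l)}| = 1$, the entrywise triangle inequality yields $|\Cov(h)(\tau,\xx;\tau',\xx')| \le \sum_{\text{paths}}\pli_l(\E^{-\eptau\Kin})_{\xy_{l+1},\xy_l} = \Cov(0)(\tau,\xx;\tau',\xx')$, where nonnegativity of the $\Kin$-kernels (the stochastic-process hypothesis) is used to drop the moduli on the right.

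The main obstacle I anticipate is the careful matching between the Trotter/HS-decoupled expression and the precise form of $\Quad(h)$ given in~(\ref{Quaddef}): the choice of Trotter ordering, the normalization of the coherent-state resolutions of identity, and the discretization of $\del_\tau$ must be arranged coherently so that the combined bilinear form has exactly $\bbbone$ on the time-diagonal and $-\E^{-\eptau\Kin}\E^{\I h(\tau)}$ on the time-subdiagonal, with no wrap-around term; on this block-bidiagonal structure both the analyticity and the uniform bound depend. Once these conventions are fixed, the three structural consequences---entire analyticity from unit-block triangularity, positivity from $\|\E^{-\eptau\Kin}\| < 1$, and the uniform bound from the path-sum triangle inequality---are short and algebraic.
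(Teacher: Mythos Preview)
Your proposal is correct and follows essentially the same route as the paper: Lie--Trotter on $\FckoB{N}$, coherent-state insertions with (\ref{expexp}), slice-wise Hubbard--Stratonovich, then reading off positivity, analyticity, and the uniform bound from the block-bidiagonal structure of $\Quad(h)$ exactly as the paper does in Theorem~\ref{HSFIthm}, (\ref{termNeum}), and Lemma~\ref{Quadlemma}. One small mismatch: in the paper's convention $\Quad(h)$ is \emph{upper}-triangular in the time index (off-diagonal at $\jtau'=\jtau+1$), so $\Cov(h)_{\jtau,\jtau'}$ vanishes for $\jtau>\jtau'$, not $\jtau<\jtau'$; this is precisely the convention-matching you already flagged and does not affect the argument.
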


Theorem \ref{h-a-integral} follows from Theorem \ref{HSFIthm}, Lemma \ref{Quadlemma}, and (\ref{termNeum}). It will be used in further work to study properties of correlation functions. 

\begin{satz}\label{grancanthm1}
Let $\mu \in \bR$ and assume that $\Kin - \mu > 0$. Then the grand canonical partition function (\ref{grancanZ}) has the representation 
$\ganZ^{(\beta,\mu,\Xspace)} = \lim_{\ntau \to \infty} \ganZ^{(\beta,\mu,\Xspace,\ntau)} $ with
\beq
\ganZ^{(\beta,\mu,\Xspace,\ntau)}
=
\int \dd \mu_{\WWW} (h) 
\int_{\bC^{\XTspace}} \DD \aphi \; 
\E^{-\bili{\bar\aphi}{\Kuad (h) \; \aphi}_\XTspace}
\eeq
where $\Kuad (h)$ is given by $\Quad (h)$, but with $\Kin$ replaced by $\Kin-\mu$ and a periodic boundary condition in time (see (\ref{Kuaddef}) below).  
The grand-canonical covariance $\Gov (h) = \Kuad (h)^{-1}$ is analytic in $h$ on a neighbourhood of $\bR^{|\XTspace|}$.
It $\Kin$ generates a stochastic process, then $\Gov (h) $ satisfies the uniform bound 
\beq\label{grand-uniformed-tau}
\forall h \in \bR^\XTspace: \quad
\abs{\Gov (h) (\tau,\xx;\tau',\xx')}
\le
\Gov (0) (\tau,\xx;\tau',\xx') \; .
\eeq
$\Gov(0)$ is the time-ordered Green function for free bosons with kinetic term $\Kin - \mu$.
\end{satz}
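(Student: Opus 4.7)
The strategy closely parallels the canonical case treated in Theorems \ref{a4thm} and \ref{h-a-integral}; the two structural differences are that the trace over $\FockB$ in \Ref{grancanZ} produces a periodic boundary condition in Euclidean time instead of a projection onto fixed particle number, and that the chemical potential shifts the single-particle kinetic operator from $\Kin$ to $\Kin - \mu$.

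First I would derive the integral representation by applying a Suzuki--Trotter product formula to split $\E^{-\beta(\opH - \mu\opN)}$ into $\ntau$ factors of the form $\E^{-\eptau\bili{\acre}{(\Kin-\mu)\aan}_\Xspace}\,\E^{-\eptau\opV}$ and inserting a coherent-state resolution of the identity between consecutive factors. Cyclicity of the trace identifies $\aphi(\beta,\cdot)$ with $\aphi(0,\cdot)$, producing the periodic boundary condition. The quartic interaction (with $\WW\ge 0$) is then decoupled by the Hubbard--Stratonovich identity, introducing at each time slice an independent real Gaussian field $h(\tau,\cdot)$ of covariance $\WW$, exactly as in the proof of Theorem \ref{h-a-integral}. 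The resulting action is quadratic in $\aphi$ with kernel $\Kuad(h)$, and Gaussian integration in $\aphi$ yields the stated formula. The $\ntau \to \infty$ limit follows from operator-level Trotter convergence, with dominated convergence in the $h$-integration justified by the uniform bound proved below.

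For analyticity, inspection of the construction shows $\Kuad(h) = \Kuad(0) + \I D(h)$, where $D(h)$ is multiplication by $h$ on $\bC^\XTspace$; thus $\Kuad(h)$ is entire affine in $h$, and $\Gov(h) = \Kuad(h)^{-1}$ is meromorphic with poles only where $\det \Kuad(h) = 0$. For real $h$ the Hermitian part of $\Kuad(h)$ coincides with that of $\Kuad(0)$; the periodic discrete time-derivative has Hermitian part $\frac{1}{\eptau}(1 - \tfrac12(S+S^*)) \ge 0$ where $S$ is the time shift, and $\Kin - \mu > 0$ by hypothesis provides strict positivity on the diagonal. Hence $\Kuad(h)$ is invertible on all of $\bR^\XTspace$, and invertibility extends to a complex neighbourhood by continuity of $\det \Kuad$.

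The pointwise bound \Ref{grand-uniformed-tau} is obtained by a discrete random-walk expansion. Writing $\Kuad(h)$, up to normalization, as $\opId$ minus a one-step transfer operator $T(h)$ that combines the backward time shift, the spatial hop $\E^{-\eptau(\Kin-\mu)}$, and the on-site phase $\E^{-\I\eptau h(\tau,\xx)}$, the Neumann series $\Gov(h) = \eptau \sum_{k \ge 0} T(h)^k$ converges in operator norm since $\|\E^{-\beta(\Kin-\mu)}\| < 1$ under $\Kin-\mu > 0$. Each term expands into a sum over discrete-time random-walk paths on $\XTspace$ (which may wrap freely around the time torus) weighted by a product of matrix elements of $\E^{-\eptau(\Kin-\mu)}$ and a single phase factor $\E^{-\I\eptau\sum_j h(\tau_j,\xx_j)}$. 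Under the stochastic-process hypothesis on $\Kin$ all weights at $h = 0$ are non-negative, and the phase has modulus one; the triangle inequality then gives $|\Gov(h)(\tau,\xx;\tau',\xx')| \le \Gov(0)(\tau,\xx;\tau',\xx')$. Setting $h = 0$ and resumming across the time torus reproduces the Bose--Einstein two-point function, which identifies $\Gov(0)$ as the free time-ordered Green function for $\Kin - \mu$. The main technical delicacy I foresee is the careful bookkeeping of time-ordering and wrapping combinatorics in the Neumann expansion, so that non-negativity of the path weights at $h=0$ is manifest; this plays, in the grand-canonical setting, the role the permanent structure plays in the canonical proofs of Theorems \ref{a4thm} and \ref{h-a-integral}.
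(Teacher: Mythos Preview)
Your overall strategy---Trotter product, coherent-state insertions, Hubbard--Stratonovich decoupling, then a random-walk expansion with phase factors of modulus one---matches the paper's route via Theorem~\ref{grancanthm2} and Lemma~\ref{InversesLemma}. The random-walk bound for \Ref{grand-uniformed-tau} is essentially the paper's argument.

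There is, however, a concrete error in your description of $\Kuad(h)$. You write $\Kuad(h) = \Kuad(0) + \I D(h)$ with $D(h)$ multiplication by $h$, hence affine in $h$ with $h$-independent Hermitian part. That is not the operator defined in \Ref{Quaddef} and \Ref{Kuaddef}: the $h$-dependence enters as the unitary multiplication $\E^{\I\sqrt{\eptau}h_{\jtau'}}$ on the \emph{off-diagonal} (time-shift) entry, not as an additive diagonal $\I h$. Consequently the Hermitian part of $\Kuad(h)$ does depend on $h$, and your positivity argument (``Hermitian part coincides with that of $\Kuad(0)$'') fails as stated. The paper's positivity proof in Theorem~\ref{HSFIthm} instead exploits unitarity of $\E^{\I\sqrt{\eptau}h_\jtau}$ together with the auxiliary inner product $\KIP{\cdot}{\cdot}$ defined there; you would need the analogous argument adapted to the periodic case. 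Likewise, analyticity of $\Gov(h)$ does not follow from affine dependence of $\Kuad(h)$. The paper obtains it from the explicit inversion formula \Ref{Kinverse1}: invertibility of $\Kuad(h)$ reduces to invertibility of $\opId - \opP$ with $\opP = \prod_\jtau \opA_\jtau$, and since $\norm{\opA_\jtau} = \norm{\E^{-\eptau\Kin_\mu}} < 1$ uniformly in real $h$, one has $\norm{\opP} < 1$, which persists on the complex tube $\abs{\Im h_{\jtau,\xx}} < \sqrt{\eptau}\, e_{\min}$. (Incidentally, the phase is $\E^{\I\sqrt{\eptau}h}$, not $\E^{-\I\eptau h}$; this does not affect the modulus-one bound but matters for the size of the analyticity domain.)
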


Theorem \ref{grancanthm1} follows directly from Theorem \ref{grancanthm2}.

\section{Integral Formulas}\label{integralsec}

\subsection{Coherent states}

Here I summarize some well-known properties of coherent states, in the form that I shall use them in the following. For convenience of the reader, their (elementary) proofs are given in Appendix \ref{cohstateapp}.

Coherent states are eigenvectors of the annihilation operators: for any function $\aphi: \Xspace \to \bC$ there is a nonzero vector $\coh{\aphi} \in \FockB$ such that for all $\xx \in \Xspace$
\begin{equation}\label{coheig}
\aann_\xx \coh{\aphi} = \aphi_\xx \coh{\aphi} .
\end{equation}
The vector $\coh{\aphi}$ is explicitly given by   
\begin{equation}\label{cohstexp}
\coh{\aphi} = \E^{\bili{\aphi}{\acre}_\Xspace} \; \Omega
\end{equation}
where $\Omega$ is the vacuum vector in the bosonic Fock space (see (\ref{vacdef})). The right hand side of (\ref{cohstexp}) is defined by the series expansion of the exponential. Theseries is norm-convergent for all $\aphi$, so the coherent state $\coh{\aphi}$ is entire analytic in $\aphi$.

Coherent states are grand canonical objects, 
since $\aan$ maps $\FckoB{n+1}$ to $\FckoB{n}$. 
Their inner product on Fock space is 
\begin{equation}\label{cohinp}
\bracket{\coh{\aphi}}{\coh{\aphi'}}
=
\E^{\bili{\bar\aphi}{\aphi'}_\Xspace}.
\end{equation}
Thus $\coh{\aphi} \in \cF$ has norm 
\beq\label{cohnorm}
\norm{\coh{\aphi}}
=
\E^{\frac12 \bili{\bar\aphi}{\aphi}_\Xspace}\; .
\eeq
Denote the unit vector 
\beq\label{unicohdef}
\unicoh{\aphi} 
= 
\frac{1}{\norm{\coh{\aphi}}} \coh{\aphi}
=
\E^{- \frac12 \bili{\bar\aphi}{\aphi}_\Xspace} \; \coh{\aphi}\; ,
\eeq
the orthogonal projector on the subspace spanned by $\unicoh{\aphi}$ by $\ketbra{\unicoh{\aphi}}{\unicoh{\aphi}}$,
\begin{equation}
\dd^\Xspace \aphi
=
\pli_{\xx\in\Xspace}
\sfrac{\dd\bar \aphi_\xx \wedge \dd \aphi_\xx }{2\pi \I} \; ,
\end{equation} 
and $\bC_R = \{ z \in \bC: |z| \le R\}$. 
Then, in the sense of strong convergence,
\begin{equation}\label{resid}
1_{\FockB}
=
\lim\limits_{\Rad \to \infty}
\ili_{{\bC_\Rad}^\Xspace}
\dd^\Xspace\aphi \; 
\ketbra{\unicoh{\aphi}}{\unicoh{\aphi}}
\end{equation}
and for trace class operators $\opA$, 
\begin{equation}\label{spurli}
\Tr_{\FockB} \opA 
=
\lim\limits_{\Rad \to \infty}
\ili_{{\bC_\Rad}^\Xspace}
\dd^\Xspace\aphi \; 
\bracket{\unicoh{\aphi}}{\opA  \; \unicoh{\aphi}}
\end{equation}
(see  Theorem 2.26 of \cite{BFKT1}).

Moreover, if $\opH_0 = \bili{\acre}{\Kin \aann}_\Xspace$
with hermitian $\Kin \ge 0$ and if $\tau \ge 0$, then
\begin{equation}\label{expexp}
\left\langle \coh{\aphi} \mid
\E^{-\tau \opH_0}
\; \coh{\aphi'} \right\rangle
=
\E^{\bili{\bar \aphi}{\E^{-\tau\Kin} \, \aphi'}_\Xspace} \; .
\end{equation}

\subsection{The integral for the canonical ensemble}

By (\ref{spurli}), and with the notation (\ref{nota1}),
\beq
\canZ^{(N,\beta,\Xspace)} (\opH,\opF)
=
\lim\limits_{\Rad \to \infty}
\ili_{\bC_\Rad^\Xspace}
\dd^\Xspace \aphi\;
\bracket{\unicoh{\aphi}}{\E^{-\beta \opH} \; (\opId + \opF) \Proj_N \unicoh{\aphi}} \; .
\eeq
By (\ref{resid}), $\canZ^{(N,\beta,\Xspace)} (\opH, \opF) = \lim\limits_{\Rad \to \infty} {}_\Rad\canZ^{(N,\beta,\Xspace)} (\opH,\opF)$ where 
\beq
{}_\Rad\canZ^{(N,\beta,\Xspace)} (\opH,\opF)
=
\ili_{\bC_\Rad^\Xspace}
\dd^\Xspace \aphi\; 
\ili_{\bC_\Rad^\Xspace}
\dd^\Xspace \aphi'\; 
\bracket{\unicoh{\aphi}}{\E^{-\beta \opH} \; (\opId + \opF) \; \unicoh{\aphi'}}\;
\bracket{\unicoh{\aphi'}}{\Proj_N \unicoh{\aphi}} \; .
\eeq
A priori, the application of (\ref{resid}) results in two limits $\Rad \to \infty$ and $\Rad' \to \infty$, but since both exist, the limit can be taken keeping $\Rad = \Rad'$. This will again be done later, when there are many insertions of type (\ref{resid}).

To streamline notation further, I will write 
\beq\label{nota1}
\int_{\aphi_1, \ldots, \aphi_n}^\Rad \ldots 
=
\ili_{{\bC_\Rad}^\Xspace}
\dd^\Xspace\aphi_1 \; 
\ldots
\ili_{{\bC_\Rad}^\Xspace}
\dd^\Xspace\aphi_n \;
\ldots 
\eeq
and
\beq\label{nota2}
\int_{\aphi_1, \ldots, \aphi_n} \ldots 
=
\ili_{{\bC}^\Xspace}
\dd^\Xspace\aphi_1 \; 
\ldots
\ili_{{\bC}^\Xspace}
\dd^\Xspace\aphi_n \;
\ldots \eeq

By (\ref{unicohdef}), and with the notation (\ref{nota1})
\beq
{}_\Rad\canZ^{(N,\beta,\Xspace)} (\opF)
=
\int_{\aphi,\aphi'}^\Rad
 \E^{-\bili{\bar\aphi}{\aphi}_\Xspace - \bili{\bar\aphi'}{\aphi'}_\Xspace}
\bracket{\coh{\aphi}}{\E^{-\beta \opH} \; (\opId + \opF) \; \coh{\aphi'}}\;
\bracket{\coh{\aphi'}}{\Proj_N \coh{\aphi}} \; .
\eeq

\subsubsection{Matrix element of the $N$-particle projection}

\bigskip\begin{lemma}\label{sandprichlemma}
Let $N \in \bN_0$,  $\Proj_N$ be the orthogonal projection from $\FockB$ onto $\FckoB{N}$. Then, for all $\aphi,\aphi' \in \bC^\Xspace$, 
\beq\label{sandprich}
\bracket{\coh{\aphi'}}{\Proj_N \coh{\aphi}}
=
\sfrac{1}{N!} \, \left(\bar\aphi',\aphi\right)_\Xspace^N \, .
\eeq
\end{lemma}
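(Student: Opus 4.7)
My plan is to compute the matrix element directly from the series expansion \eqref{cohstexp} of the coherent state, exploiting the fact that each power of the creation operator raises particle number by exactly one, so that $\Proj_N$ simply selects one term of the series.

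Concretely, since $\bili{\aphi}{\acre}_\Xspace^n \Omega \in \FckoB{n}$, the expansion
\beq
\coh{\aphi}
=
\sum_{n=0}^\infty \sfrac{1}{n!} \bili{\aphi}{\acre}_\Xspace^n \, \Omega
\eeq
is already an orthogonal decomposition along the grading $\FockB = \bigoplus_n \FckoB{n}$. Applying $\Proj_N$ therefore picks out the single term
\beq
\Proj_N \coh{\aphi}
=
\sfrac{1}{N!} \bili{\aphi}{\acre}_\Xspace^N \, \Omega .
\eeq

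Next I would move $\Proj_N$ onto $\bra{\coh{\aphi'}}$ (or equivalently compute directly) and use the defining eigenvalue relation \eqref{coheig} to push annihilation operators through: $\aann_\xx \coh{\aphi'} = \aphi'_\xx \coh{\aphi'}$ gives $\bracket{\coh{\aphi'}}{\acre_\xx \psi} = \overline{\aphi'_\xx}\, \bracket{\coh{\aphi'}}{\psi}$, hence
\beq
\bracket{\coh{\aphi'}}{\bili{\aphi}{\acre}_\Xspace \psi}
=
\bili{\bar\aphi'}{\aphi}_\Xspace \, \bracket{\coh{\aphi'}}{\psi}.
\eeq
Iterating this identity $N$ times reduces $\bracket{\coh{\aphi'}}{\bili{\aphi}{\acre}_\Xspace^N \Omega}$ to $\bili{\bar\aphi'}{\aphi}_\Xspace^N \bracket{\coh{\aphi'}}{\Omega}$, and the last factor equals $1$ because the $n=0$ term of \eqref{cohstexp} is $\Omega$ while all higher terms lie in $\FckoB{n\ge 1}$ and are orthogonal to $\Omega$. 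Combining with the prefactor $\frac{1}{N!}$ yields \eqref{sandprich}.

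As a consistency check (and an alternative route that avoids the explicit commutation), one may verify the identity by summing $N$ against \eqref{cohinp}: the orthogonality $\bracket{\Proj_M \coh{\aphi'}}{\Proj_N \coh{\aphi}}=0$ for $M\ne N$ yields
\beq
\E^{\bili{\bar\aphi'}{\aphi}_\Xspace}
=
\bracket{\coh{\aphi'}}{\coh{\aphi}}
=
\sum_{N=0}^\infty \bracket{\coh{\aphi'}}{\Proj_N \coh{\aphi}},
\eeq
and matching this with the power series $\sum_N \frac{1}{N!} \bili{\bar\aphi'}{\aphi}_\Xspace^N$ term by term (legitimate since $\bracket{\coh{\aphi'}}{\Proj_N \coh{\aphi}}$ is bihomogeneous of bidegree $(N,N)$ in $(\bar\aphi',\aphi)$) reproduces \eqref{sandprich}. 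There is no genuine obstacle here: the only small point to watch is justifying the termwise interchange of series with the inner product, which is immediate because for $\Xspace$ finite all series are actually finite in each particle sector and the expansion of $\coh{\aphi}$ converges in norm.
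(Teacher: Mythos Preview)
Your proof is correct and follows essentially the same route as the paper: pick out the $N$th term of the exponential series via $\Proj_N$, then use the adjointness of $\acre$ and $\aan$ together with the eigenvector property \eqref{coheig} to pull the creation operators through to scalars, and finally evaluate $\bracket{\coh{\aphi'}}{\Omega}=1$. Your additional consistency check via the power series expansion of \eqref{cohinp} is a nice alternative verification not present in the paper, but the main argument is identical.
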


\begin{proof} Let $N \in \bN_0$. Because the expansion (\ref{cohstexp}) for the  coherent state $\coh{\aphi}$ is norm convergent for any $\aphi$ and the projection $\Proj_N$ is continuous, 
\begin{equation}
\Proj_N \coh{\aphi}
=
\sfrac{1}{N!}\, {\bili{\aphi}{\acre}_\Xspace}^N \Omega \; .
\end{equation}
Because $\acre$ is the adjoint of $\aan$ and because $\coh{\aphi'}$ is an eigenvector of $\aan$,  
\beq
\begin{split}
\bracket{\coh{\aphi'}}{\Proj_N \coh{\aphi}}
&=
\sfrac{1}{N!}\;
\bracket{\coh{\aphi'}}{{\bili{\aphi}{\acre}_\Xspace}^N \Omega}
=
\sfrac{1}{N!}\;
\bracket{{\bili{\bar\aphi}{\aan}_\Xspace}^N\coh{\aphi'}}{ \Omega}
\\
&=
\sfrac{1}{N!}\;
\bracket{{\bili{\bar\aphi}{\aphi'}_\Xspace}^N\coh{\aphi'}}{ \Omega}
=
\sfrac{1}{N!}\;
{\bili{\bar\aphi'}{\aphi}_\Xspace}^N \; \bracket{\coh{\aphi'}}{\Omega} \; .
\end{split}
\eeq
The inner product $\bracket{\coh{\aphi'}}{\Omega} = 1$. 
\end{proof}

\bigskip\begin{lemma}
If $\E^{-\beta\opH}$ is a bounded operator on $\FockB$, 
the canonical generating function $\canZ^{(N,\beta,\Xspace)} (\opH,\opF)$, defined in $(\ref{FcanZ})$, is the limit $\Rad \to \infty$ of the integral 
\beq\label{RadZc}
{}_\Rad\canZ^{(N,\beta,\Xspace)} (\opH,\opF)
=
\int_{\aphi,\aphi'}^\Rad
 \E^{-\bili{\bar\aphi}{\aphi}_\Xspace - \bili{\bar\aphi'}{\aphi'}_\Xspace}\;
\bracket{\coh{\aphi}}{\E^{-\beta \opH} \; \coh{\aphi'}}\;
\frac{\left(\bar\aphi',\aphi\right)_\Xspace^N}{N!}
\eeq
and the limit $\Rad\to\infty$ is given by an absolutely convergent integral.
\end{lemma}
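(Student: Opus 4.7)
The ingredients are already assembled: the coherent-state trace formula (\ref{spurli}), the resolution of identity (\ref{resid}), and the matrix element of $\Proj_N$ computed in Lemma~\ref{sandprichlemma}. The task is to combine them and then verify absolute convergence uniformly in the cutoff, which then also justifies taking a single joint limit $\Rad \to \infty$.

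First, since $\Proj_N$ has finite-dimensional range (hence is trace class) and $\E^{-\beta \opH}$ is bounded by hypothesis, the product $\E^{-\beta \opH}(\opId + \opF)\Proj_N$ is trace class on $\FockB$. Applying (\ref{spurli}) to its trace gives one coherent-state integral over $\aphi$, and inserting (\ref{resid}) in the form displayed in the text above the statement, between $\E^{-\beta\opH}(\opId+\opF)$ and $\Proj_N$, introduces the second coherent-state integration over $\aphi'$. Unpacking the normalisation (\ref{unicohdef}) converts the unit vectors $\unicoh{\aphi},\unicoh{\aphi'}$ into the unnormalised coherent states $\coh{\aphi}, \coh{\aphi'}$ at the cost of the Gaussian prefactor $\E^{-\bili{\bar\aphi}{\aphi}_\Xspace - \bili{\bar\aphi'}{\aphi'}_\Xspace}$. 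Finally, Lemma~\ref{sandprichlemma} replaces $\bracket{\coh{\aphi'}}{\Proj_N \coh{\aphi}}$ by $\tfrac{1}{N!}(\bar\aphi',\aphi)_\Xspace^N$. This gives the integrand of (\ref{RadZc}) as a double limit $\Rad \to \infty$ and $\Rad' \to \infty$, taken in either order.

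The decisive step is to prove absolute integrability over the full space $\bC^\Xspace \times \bC^\Xspace$. I will bound the matrix element of $\E^{-\beta \opH}(\opId + \opF)$ between $\coh{\aphi}$ and $\coh{\aphi'}$ by Cauchy-Schwarz in $\FockB$, yielding the factor $\norm{\E^{-\beta \opH}(\opId+\opF)}\cdot \norm{\coh{\aphi}}\, \norm{\coh{\aphi'}}$. By (\ref{cohnorm}), the two coherent-state norms together cancel exactly one half of the Gaussian prefactor, leaving a Gaussian damping $\E^{-\frac12 \bili{\bar\aphi}{\aphi}_\Xspace - \frac12 \bili{\bar\aphi'}{\aphi'}_\Xspace}$. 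The polynomial $|(\bar\aphi',\aphi)_\Xspace|^N$ is then dominated by $\norm{\aphi}_\Xspace^N \norm{\aphi'}_\Xspace^N$ via Cauchy-Schwarz in $\cH$, and since $|\Xspace|$ is finite the resulting integrand is manifestly integrable against the Lebesgue measure $\dd^\Xspace\aphi\,\dd^\Xspace\aphi'$. Dominated convergence then collapses the two independent cutoff limits to the single joint limit $\Rad = \Rad' \to \infty$, and simultaneously shows that the limiting integral is absolutely convergent.

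The main obstacle, modest as it is, lies in this last bookkeeping: the exponentially growing $\coh{\aphi}$-norms must be balanced precisely against the Gaussian from the resolution of identity, so that the surviving Gaussian still dominates the polynomial $\norm{\aphi}_\Xspace^N\norm{\aphi'}_\Xspace^N$ uniformly in $\Rad$. Once this estimate is in place, the rest is a direct combination of identities that have already been proved in this section.
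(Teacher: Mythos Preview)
Your proposal is correct and follows essentially the same route as the paper: bound the coherent-state matrix element by $\norm{\E^{-\beta\opH}}\,\norm{\coh{\aphi}}\,\norm{\coh{\aphi'}}$, use (\ref{cohnorm}) to cancel half of the Gaussian prefactor, and observe that the remaining Gaussian dominates the polynomial $|(\bar\aphi',\aphi)_\Xspace|^N$. The paper's proof is terser (it simply states the resulting bound is integrable and does not spell out the dominated-convergence step collapsing the two cutoffs), but the argument is the same.
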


\begin{proof}
The integral representation follows by inserting (\ref{sandprich}) into (\ref{FcanZ}). If the norm of $\E^{-\beta\opH}$ is bounded by $M >0$, then 
\beq
\abs{
\bracket{\coh{\aphi}}{\E^{-\beta \opH} \; \coh{\aphi'}}
}
\le
M\; \norm{\coh{\aphi}} \; \norm{\coh{\aphi'}}
\eeq
By (\ref{cohnorm}), the absolute value of the integrand is bounded by 
\beq
M\; 
\E^{-\frac12\bili{\bar\aphi}{\aphi}_\Xspace - \frac12\bili{\bar\aphi'}{\aphi'}_\Xspace}\;
\sfrac{\abs{\left(\bar\aphi',\aphi\right)_\Xspace}^N}{N!}
\eeq
which is integrable in $\aphi$ and $\aphi'$.
\end{proof}

Note that if $\opH$ commutes with $\opN$, then 
\beq
\tr \left[\E^{-\beta \opH} \Proj_N\right]
= 
\tr \left[\Proj_N \E^{-\beta \opH} \Proj_N\right]
=
\tr \left[\E^{-\beta \Proj_N\opH\Proj_N} \Proj_N\right]
\eeq
so for $\opH$ that is unbounded below, a formula similar to (\ref{RadZc}), with $\opH$ replaced by $\opH_N = \Proj_N \opH \Proj_N$, still holds.

By the residue formula 
\beq
\frac{\alpha^N}{N!} 
=
\frac{1}{2\pi\I} \int_{|z| = r} \frac{\dd z}{z^{N+1}} \E^{\alpha z}
\eeq
($r>0$), the canonical partition function then becomes
\beq\label{canoz}
{}_\Rad\canZ^{(N,\beta,\Xspace)}
=
\frac{1}{2\pi\I} \int_{|z| = r} \frac{\dd z}{z^{N+1}} 
\int_{\aphi,\aphi'}^\Rad
 \E^{-\bili{\bar\aphi}{\aphi}_\Xspace - \bili{\bar\aphi'}{\aphi'}_\Xspace + z \bili{\bar\aphi'}{\aphi}_\Xspace}\;
\bracket{\coh{\aphi}}{\E^{-\beta \opH} \; \coh{\aphi'}}\;
\eeq
It is this form which allows to connect conveniently to the other ensembles in the limit of large $N$ by a stationary-phase argument.

\subsubsection{Simple cases for $\opF$}
Let $\opF = \bili{\acre}{F\aan}_\Xspace = \int_{\xx,\xy} F_{\xx,\xy} \; \acre_\xx \aann_\xy$. Then 
\beq
\langle \acre_\xx \aann_\xy \rangle^{(N)}
=
\frac{\partial}{\partial F_{\xx,\xy}} \ln Z_{c}^{(N)} (\opF) \mid_{F=0} \; .
\eeq
A little calculation gives 
\beq\label{fandprich}
\opF {\bili{\aphi}{\acre}_\Xspace}^N \; \Omega
=
N \; \bili{\aan}{F^{\tt T}\acre}_\Xspace \; {\bili{\aphi}{\acre}_\Xspace}^{N-1} \Omega\; ,
\eeq
hence
\beq\label{fandprick}
\bracket{\coh{\aphi'}}{(\opId + \opF) \Proj_N \coh{\aphi}}
=
\frac{1}{(N-1)!} \; {\bili{\bar\aphi'}{\aphi}_\Xspace}^{N-1} \; 
\bili{\bar\aphi'}{ (\sfrac{1}{N} + F) \; \aphi}_\Xspace
\eeq
This is easy to interpret: when $\aann_\xy$ is applied to the $N$-particle state
$\Proj_N \coh{\aphi}$, the result is an $N-1$-particle state, so the power decreases to $N-1$ in that term. 

It is straightforward to generalize this to operators of the form 
\footnote{if $\opH$ conserves particle number, this only contributes to the canonical expectation value when $m=n$}
\beq
\opF = \int_{\xx_1, \ldots, \xx_m} \int_{\xy_1, \ldots , \xy_n} F_{\xx_1, \ldots, \xx_m;\xy_1, \ldots , \xy_n} \pli_{i=1}^m \acre_{\xx_i} \pli_{j=1}^n \aann_{\xy_j}
\neeq

\subsection{Noninteracting bosons}
It is instructive to see how this leads to the standard formulas for the canonical partition function for free bosons. 
In the absence of interaction, $\opH = \opH_0$, so (\ref{expexp}) can be used to rewrite the canonical partition function as
\beq\label{contourli}
\begin{split}
{}_\Rad\canZ^{(N,\beta,\Xspace)}
&=
\frac{1}{2\pi\I} \int_{|z| = r} \frac{\dd z}{z^{N+1}} 
\int_{\aphi,\aphi'}^\Rad
\E^{-\bili{\bar\aphi}{\aphi}_\Xspace - \bili{\bar\aphi'}{\aphi'}_\Xspace + z \bili{\bar\aphi'}{\aphi}_\Xspace + \bili{\bar\aphi}{\E^{-\beta\Kin}\; \aphi'}_\Xspace}
\\
&=
\frac{1}{2\pi\I} \int_{|z| = r} \frac{\dd z}{z^{N+1}} 
\int_{\aphi,\aphi'}^\Rad
\exp \left(
- \left([\bar\aphi,\bar \aphi'] \;\Big\vert\;
\left[
\begin{array}{cc}
\opId & - \E^{-\beta \Kin} \\
- z\opId & \opId
\end{array}
\right]
\left[
\begin{array}{c}
\aphi \\
\aphi'
\end{array}
\right]
\right)_\Xspace
\right)
\end{split}
\eeq
Let $r<1$, then the matrix in the quadratic form in the exponent has a positive definite hermitian part, and therefore the Gaussian integral is absolutely convergent, hence the limit $\Rad \to \infty$ can be taken. The Gaussian integral over $\aphi$ and $\aphi'$ then gives the inverse of its determinant, so that
\beq
\begin{split}
{}_\Rad\canZ^{(N,\beta,\Xspace)}
&=
\frac{1}{2\pi\I} \int_{|z| = r} \frac{\dd z}{z^{N+1}} \;
\det
\left[
\begin{array}{cc}
\opId & - \E^{-\beta \Kin} \\
- z\opId & \opId
\end{array}
\right]^{-1}
\\
&=
\frac{1}{2\pi\I} \int_{|z| = r} \frac{\dd z}{z^{N+1}} \;
\det
\left[
\opId - z  \E^{-\beta \Kin}
\right]^{-1}
\end{split}
\eeq
Evaluated in an eigenbasis of $\Kin$, this determinant factorizes, and the (convergent) expansion in $z$ then gives the sum
\beq
\canZ^{(N,\beta,\Xspace)}
=
\sum_{n \in \bN_0^A}
\E^{-\beta \sum_\al n_\al E_\al}\;
\delta_{\sum_{\alpha \in A} n_\alpha \;,\; N}
\eeq
which follows straightforwardly from the definition of the canonical ensemble as a trace. Here the eigenbasis is indexed by the set $A$, i.e.\ $(E_\alpha)_{\alpha \in A}$ are the eigenvalues of $\Kin $.

Without the contour integral, the canonical partition function can be written as a permanent: 
\beq
\begin{split}
{}_\Rad\canZ^{(N,\beta,\Xspace)}
&= 
\int_{\aphi,\aphi'}^\Rad
\E^{-\bili{\bar\aphi}{\aphi}_\Xspace - \bili{\bar\aphi'}{\aphi'}_\Xspace + \bili{\bar\aphi}{\E^{-\beta\Kin}\; \aphi'}_\Xspace}\;
\sfrac{\left(\bar\aphi',\aphi\right)_\Xspace^N}{N!}
\\
&=
\int_{\aphi,\aphi'}^\Rad
\E^{
- \left([\bar\aphi,\bar \aphi'] \;\Big\vert\;
\left[
\begin{array}{cc}
\opId & - \E^{-\beta \Kin} \\
0 & \opId
\end{array}
\right]
\left[
\begin{array}{c}
\aphi \\
\aphi'
\end{array}
\right]
\right)_\Xspace
} \;
\frac{\left(\bar\aphi',\aphi\right)_\Xspace^N}{N!}
\end{split}
\eeq
Again, the limit $\Rad\to\infty$ can be taken since the real part of the quadratic form is strictly positive. Writing out each factor $\left(\bar\aphi',\aphi\right)_\Xspace =\int_\xx \bar\aphi'_\xx\; \aphi_\xx$ and doing the Gaussian integral (noting that this time, the determinant is $1$ and the inverse obviously $\left[
\begin{array}{cc}
\opId & \E^{-\beta \Kin} \\
0 & \opId
\end{array}
\right] $), this gives 
\beq\label{perm1}
\begin{split}
\canZ^{(N,\beta,\Xspace)}
&= 
\frac{1}{N!}\;
\sum_{\pi \in \Perm{N}}
\int_{\xx_1,\ldots, \xx_N}
\pli_{n=1}^N
\left(
\E^{-\beta\Kin}
\right)_{\xx_n,\xx_{\pi(n)}}
\\
&=
\int_{\xx_1,\ldots, \xx_N}
\sfrac{1}{N!}\;
\mbox{Perm } C (\xx_1, \ldots, \xx_N)
\end{split}
\eeq
where the $N\times N$ matrix $C(\xx_1, \ldots, \xx_N)$ has matrix elements $C(\xx_1, \ldots, \xx_N)_{ij}= \left(\E^{-\beta\Kin}\right)_{\xx_i,\xx_j}$, and Perm denotes its permanent. See Appendix \ref{detpermlapapp}

\subsection{Interacting bosons}

\subsubsection{Time slicing}
Recall that $[\opH,\opN]=0$. Then the traces defining the canonical partition function and expectation values are traces over the finite-dimensional $N$-boson space $\FckoB{N}$, on which $\opH_0$ and $\opV$ are bounded, so the Lie product formula applies, and 
\beq\label{Lieprod}
\tilZ^{(N,\beta,\Xspace)} (\opF) 
=
\lim_{\ntau\to\infty} \tilZ^{(N,\beta,\Xspace,\ntau)} (\opF) \;,
\eeq 
where
\beq
\tilZ^{(N,\beta,\Xspace,\ntau)} (\opF) 
=
\tr 
\left[
\left(
\E^{-\eptau \opH_0} \; \E^{-\eptau \opV}
\right)^\ntau\;
(1+\opF) \; \Proj_N
\right]
\eeq
with 
\beq\label{eptaudef}
\eptau = \frac{\beta}{\ntau}\;.
\eeq 
Again by (\ref{spurli}) and (\ref{resid}), and with the new notation $\aphi_0=\aphi$ and $\aphi_\ntau=\aphi' $, this becomes
\beq
\tilZ^{(N,\beta,\Xspace,\ntau)} (\opF) 
=
\lim_{\Rad \to \infty}
\int_{\aphi_0,\aphi_\ntau}^{\Rad}
\bracket{\unicoh{\aphi_0}}{%
\left(
\E^{-\eptau \opH_0} \; \E^{-\eptau \opV}
\right)^\ntau\; \unicoh{a_\ntau}}\;
\bracket{\unicoh{a_\ntau}}{%
(1+\opF) \; \Proj_N \; \unicoh{\aphi_0}
}
\eeq
By (\ref{resid}), 
\beq\label{schleiss1}
\bracket{\unicoh{\aphi_0}}{%
\left(
\E^{-\eptau \opH_0} \; \E^{-\eptau \opV}
\right)^\ntau\; \unicoh{a_\ntau}}
=
\lli_{\Rad' \to \infty}
\int_{\aphi_1, \ldots, \aphi_{\ntau -1}}^{\Rad'}
\pli_{\jtau=1}^\ntau
\bracket{\unicoh{\aphi_{\jtau-1}}}{\E^{-\eptau \opH_0} \; \E^{-\eptau \opV}\; \unicoh{a_\jtau}}
\eeq
As discussed above, the limit may be taken with $\Rad'=\Rad$.
By (\ref{unicohdef}) this can be rewritten in terms of the unnormalized coherent states as
\beq
\begin{split}
\tilZ^{(N,\beta,\Xspace,\ntau)} (\opF) 
&=
\lim_{\Rad \to \infty}
\ili_{\aphi_0,\aphi_1, \ldots , \aphi_\ntau}^{\Rad}
\E^{-\sum_{\jtau=0}^\ntau \bili{\bar\aphi_\jtau}{\aphi_\jtau}_\Xspace}\; 
\bracket{\coh{a_\ntau}}{%
(1+\opF) \; \Proj_N \; \coh{\aphi_0}}
\\
&\mkern120mu
\pli_{\jtau=1}^\ntau
\bracket{\coh{\aphi_{\jtau-1}}}{\E^{-\eptau \opH_0} \; \E^{-\eptau \opV}\; \coh{a_\jtau}}
\;
\end{split}
\eeq

\subsubsection{Auxiliary fields}
As noted before, the interaction $\WW$ is assumed to be a nonnegative operator, and the $\opn_\xx$ generate an abelian Banach algebra, so the exponential of the interaction can be written as a Gaussian integral
\beq
\E^{-\eptau \opV}
=
\E^{-\eptau \bili{\opn}{\WW \opn}_\Xspace}
=
\int
\dd\mu_\WW (h) \;
\E^{\I \sqrt{\eptau} \bili{h}{\opn}_\Xspace}
\eeq
where $\dd\mu_\WW$ denotes the normalized Gaussian measure on $\bR^\Xspace$ with mean zero and covariance $\WW$. This identity is used for every $\jtau$ in the product  (\ref{schleiss1}).  The Gaussian integral is absolutely convergent, hence can be taken out of the inner product and the integral over the $a$ fields. This, and (\ref{expexp}), then gives
\beq
\begin{split}\label{66}
\bracket{\coh{\aphi_{\jtau-1}}}{\E^{-\eptau \opH_0} \; \E^{-\eptau \opV}\; \coh{a_\jtau}}
&=
\int
\dd\mu_\WW (h_\jtau)\;
\bracket{\coh{\aphi_{\jtau-1}}}{\E^{-\eptau \opH_0} \; 
\E^{\I \sqrt{\eptau} \bili{h_\jtau}{\opn}_\Xspace}\; \coh{a_\jtau}}
\\
&=
\int
\dd\mu_\WW (h_\jtau)\;
\E^{%
\bili{{\aphi_{\jtau-1}}}{\;\E^{-\eptau \Kin} \; 
\E^{\I \sqrt{\eptau} \; h_\jtau}\; {a_\jtau}}_\Xspace
}
\end{split}
\eeq
The collection of additional integration variables $h_{\jtau,\xx}$ indexed by $(\jtau,\xx)\in \{ 1, \ldots, \ntau\} \times \Xspace$, is called the {\em auxiliary field} in the following. In (\ref{66}) and similar expressions, $\E^{\I \sqrt{\eptau} h_\jtau}$ denotes the multiplication operator, $(\E^{\I \sqrt{\eptau} h_\jtau} \, \aphi)_\xx = \E^{\I \sqrt{\eptau} h_{\jtau,\xx}} \aphi_\xx $.

With the notations
\beq\label{DDadef}
\DD a 
=
\pli_{\jtau=0}^\ntau 
\;
\dd^\Xspace \aphi_\jtau
=
\pli_{\jtau=0}^\ntau \;
\pli_{\xx \in \Xspace}
\sfrac{\dd \bar \aphi_{\jtau,\xx} \wedge \aphi_{\jtau,\xx}}{2\pi\I}
\eeq
and 
\beq
\dd \mu_{\WWW} (h) \; 
=
\pli_{\jtau=1}^\ntau 
\dd\mu_{\WW} (h_\jtau)
\eeq
the following statement holds.

\begin{satz}\label{HSFIthm}
Assume that $\WW \ge 0$. Set $h_0 = 0$. For $\jtau, \jtau' \in \{ 0, \ldots , \ntau\}$ let
\beq\label{Quaddef}
\Quad (h) _{\jtau,\jtau'}
=
\delta_{\jtau,\jtau'} \; \opId
-
\delta_{\jtau+1,\jtau'} \; 
\E^{-\eptau \Kin} \; 
\E^{\I \sqrt{\eptau} \; h_{\jtau'}} 
\eeq
so that 
\beq\label{jbouc}
\begin{split}
\bili{\bar\aphi}{\Quad (h) \; \aphi}_\XTspace
&=
\sum_{\jtau,\jtau'=0}^\ntau
\bili{\bar\aphi_\jtau}{\Quad(h)_{\jtau,\jtau'} \aphi_{\jtau'}}_\Xspace
\\
&=
\sum_{\jtau=0}^\ntau \bili{\bar\aphi_\jtau}{\aphi_\jtau}_\Xspace
-
\sum_{\jtau=0}^{\ntau-1} \bili{\bar\aphi_\jtau}{\E^{-\eptau \Kin} \; 
\E^{\I \sqrt{\eptau} \; h_{\jtau+1}} \aphi_{\jtau+1}}_\Xspace\; .
\end{split}
\eeq
Then
\beq\label{ZcN}
\tilZ^{(N,\beta,\Xspace,\ntau)} (\opF)
=
\lli_{\Rad\to\infty}
\int \dd \mu_{\WWW} (h) 
\int^\Rad \DD \aphi \; 
\E^{-\bili{\bar\aphi}{\Quad (h) \; \aphi}_\XTspace}\;
\bracket{\coh{\aphi_\ntau}}{(1+\opF) \; \Proj_N \; \coh{\aphi_0}}
\eeq
If $\Kin > 0$, then the quadratic form $\Re \bili{\bar\aphi}{\Quad (h) \; \aphi}_\XTspace$ is strictly positive, so the Gaussian integral over $\aphi$ converges absolutely and the limit $\Rad \to \infty$ can be taken by dropping the restriction on the integral domain. In particular, for $\opF =0$,
\beq\label{HSFI}
\begin{split}
\tilZ^{(N,\beta,\Xspace,\ntau)} 
&=
\int \dd \mu_{\WWW} (h) 
\int_{\bC^{\XTspace}} \DD \aphi \; 
\E^{-\bili{\bar\aphi}{\Quad (h) \; \aphi}_\XTspace}\;
\sfrac{{\bili{\bar \aphi_{\ntau}}{\aphi_0}_\Xspace}^N}{N!}
\end{split}
\eeq
\end{satz}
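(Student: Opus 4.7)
The plan is to assemble the identity from the ingredients already present above Theorem \ref{HSFIthm}. First I would start from the coherent-state slicing expression
\[
\tilZ^{(N,\beta,\Xspace,\ntau)}(\opF)
=
\lim_{\Rad\to\infty}\int_{\aphi_0,\ldots,\aphi_\ntau}^{\Rad}
\E^{-\sum_{\jtau=0}^{\ntau}\bili{\bar\aphi_\jtau}{\aphi_\jtau}_\Xspace}\;
\bracket{\coh{\aphi_\ntau}}{(1+\opF)\Proj_N\coh{\aphi_0}}
\prod_{\jtau=1}^{\ntau}\bracket{\coh{\aphi_{\jtau-1}}}{\E^{-\eptau\opH_0}\E^{-\eptau\opV}\coh{\aphi_\jtau}},
\]
which is already derived in the text. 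Into each of the $\ntau$ time-slice factors I substitute the Hubbard--Stratonovich identity (justified by $\WW\ge 0$) and then use (\ref{expexp}) to evaluate the coherent-state matrix element, giving exactly equation (\ref{66}). Since each one-dimensional $h_\jtau$-integral converges absolutely (Gaussian measure times a bounded function of $h_\jtau$), Fubini lets me pull the full $h$-integration in front of the $\aphi$-integration.

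Second, I would identify the exponent. Summing the diagonal norm contributions $-\sum_{\jtau=0}^{\ntau}\bili{\bar\aphi_\jtau}{\aphi_\jtau}_\Xspace$ with the $\ntau$ bilinear terms $\bili{\bar\aphi_{\jtau-1}}{\E^{-\eptau\Kin}\E^{\I\sqrt{\eptau}h_\jtau}\aphi_\jtau}_\Xspace$ from (\ref{66}) reproduces precisely $-\bili{\bar\aphi}{\Quad(h)\aphi}_\XTspace$ as displayed in (\ref{jbouc}); the convention $h_0=0$ is only a cosmetic way to write the operator uniformly, since the index $\jtau=\ntau$ on the diagonal has no matching off-diagonal partner. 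This yields (\ref{ZcN}).

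Third, I would verify that for $\Kin>0$ the quadratic form has strictly positive real part uniformly in $h\in\bR^\XTspace$. Since $\E^{\I\sqrt{\eptau}h_{\jtau+1}}$ is unitary, Cauchy--Schwarz and the operator bound $\rho:=\|\E^{-\eptau\Kin}\|<1$ (strict inequality because $\Kin>0$) give
\[
\Bigl|\bili{\bar\aphi_\jtau}{\E^{-\eptau\Kin}\E^{\I\sqrt{\eptau}h_{\jtau+1}}\aphi_{\jtau+1}}_\Xspace\Bigr|
\le \rho\,\|\aphi_\jtau\|\,\|\aphi_{\jtau+1}\|,
\]
so by AM--GM $\Re\bili{\bar\aphi}{\Quad(h)\aphi}_\XTspace \ge (1-\rho)\sum_\jtau\|\aphi_\jtau\|^2$. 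Consequently the $\aphi$-integrand is absolutely integrable uniformly in $h$, the restriction $\bC_\Rad$ can be removed by dominated convergence, and a second Fubini step lets the $h$-integration commute with the limit.

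Finally, specializing to $\opF=0$ and applying Lemma \ref{sandprichlemma} to the endpoint matrix element converts $\bracket{\coh{\aphi_\ntau}}{\Proj_N\coh{\aphi_0}}$ into $\frac{1}{N!}\bili{\bar\aphi_\ntau}{\aphi_0}_\Xspace^{\,N}$, giving (\ref{HSFI}). The only non-routine point I anticipate is the uniform positivity estimate above; once it is in place, the remaining steps (Fubini, removal of $\Rad$, and assembly of the quadratic form) are essentially bookkeeping.
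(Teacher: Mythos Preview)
Your proposal is correct and follows essentially the same route as the paper: the derivation of \Ref{ZcN} from the time-slicing and the Hubbard--Stratonovich step is identical, and for the positivity of $\Re\bili{\bar\aphi}{\Quad(h)\aphi}_\XTspace$ the paper completes the square in the $K$-inner product to obtain the lower bound $\sum_\jtau\bili{\bar\aphi_\jtau}{(1-\E^{-\eptau\Kin})\aphi_\jtau}_\Xspace$, while your Cauchy--Schwarz plus AM--GM argument gives the (slightly weaker but equally sufficient) scalar bound $(1-\rho)\sum_\jtau\|\aphi_\jtau\|^2$.
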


\begin{proof}
The explicit formula (\ref{ZcN}) has already been derived; recall also (\ref{sandprich}) and (\ref{fandprich}). The positivity of the real part is equivalent to operator positivity of $\Quad_S(h)
=
\frac12(\Quad (h)+ \Quad(h)^\dagger)$, and it holds because for $\Kin > 0$, the diagonal part of $\Quad_S (h)$ strictly dominates the nondiagonal parts (note that $
\norm{\E^{-\eptau \Kin}\; \E^{\I \sqrt{\eptau} \; h_\jtau}}
=
\norm{\E^{-\eptau \Kin}}
< 1$). For $\Kin =0 $ and $h = 0$, $\Quad_S (0) = -\Delta$, the Laplacian on $\{0, \ldots, \ntau\}$, which is nonnegative. 
In general, set $K=\E^{-\eptau \Kin}$ and $U_\jtau = \E^{\I \sqrt{\eptau} h_\jtau}$. Then $K$ is a positive operator because $\Kin $ is hermitian, and $U_\jtau$ is unitary for all $\jtau$ because $h_\jtau$ is real-valued. Consequently, 
\beq
\KIP{\aphi}{\aphi'} =
\bili{\aphi}{K \; \aphi'} 
\eeq
is a positive definite scalar product on $\bC^\Xspace$, and 
\beq
\begin{split}
\Re \bili{\bar \aphi }{\Quad (h) \; \aphi}_\XTspace
&=
\sum_{\jtau=0}^\ntau
\bili{\bar\aphi_\jtau}{(1-K)\; \aphi_\jtau}_\Xspace
+
\sfrac12 \KIP{\aphi_0}{\aphi_0} 
+
\sfrac12 \KIP{\aphi_\ntau}{\aphi_\ntau} 
\\
&+
\sfrac12 
\sum_{\jtau=1}^\ntau
\KIP{\aphi_{\jtau -1} - U_\jtau \aphi_\jtau}{\aphi_{\jtau -1} - U_\jtau \aphi_\jtau} 
\\
&\ge 
\sum_{\jtau=0}^\ntau
\bili{\bar\aphi_\jtau}{(1-K)\; \aphi_\jtau}_\Xspace
\end{split}
\eeq
Because $0 < K \le 1$ if $\Kin \ge 0$, and $0< K < 1$ if $\Kin > 0$, the lower bound is a nonnegative quadratic form if $\Kin \ge 0$ and strictly positive if $\Kin > 0$. 
Thus the limit $\Rad \to \infty$ can be taken and gives (\ref{HSFI}). 
\end{proof}

\subsection{The covariance}
The operator $\Quad(h)$, given by (\ref{Quaddef}), is a matrix $\Quad(h)\in M_{\ntau+1}(\cB(\FockB))$ that is upper triangular: setting $\opA_\jtau  =  \E^{-\eptau \Kin} \; \E^{\I \sqrt{\eptau} h_\jtau }$
\beq
\Quad (h) 
=
\left[
\begin{array}{cccccc}
\opId & - \opA_1 & 0 & 0 & \ldots & 0 \\
0 & \opId & - \opA_2 & 0 & \ldots & 0\\[3ex]
 &  & \ddots & & \ddots &  \\[3ex]
0 & 0 & 0 & & \opId & -A_\ntau \\
0 & 0 & 0 & \ldots & 0 & \opId
\end{array}
\right]
\eeq
Therefore, $\det \Quad(h) = 1$, and the inverse of $\Quad{h}$, the covariance
\beq
\Cov (h) = \Quad (h)^{-1}
\eeq
exists and is also upper triangular, and given by a terminating Neumann series as
\beq\label{termNeum}
\left( \Cov (h) \right)_{\jtau,\jtau'}
=
\begin{cases}
0 & \jtau > \jtau' \\
\opId & \jtau = \jtau' \\
\pli_{k=\jtau+1}^{\jtau'} \opA_k & \jtau < \jtau'  
\end{cases}
\eeq
(see Appendix \ref{Covarapp}). Because the inverse is given by a finite sum, $\Cov(h)$ is entire analytic in $h$. In fact, for every finite $\ntau$, it is a trigonometric polynomial in $h$.

\subsection{Remarks on the structures arising from \Ref{HSFI}}
Eq.\ \Ref{HSFI} is useful in various respects. It represents the partition function as a Gaussian integral both in the original boson field $\aphi$ and in the auxiliary field $h$. Integrating over the auxiliary field will give a regularized version of the formal functional integral for boson systems; see Subsection \ref{aintsssec}.
Instead integrating over the $\aphi$-field gives a representation in terms of the auxiliary field only. This will be used to derive a random-walk representation which is similar to that of \cite{AdamsKoenig,AdamsBruKoenig,AdamsCollevecchioKoenig}  (see Section \ref{RWsec}) but, being an integral with respect to a real Gaussian measure it also provides a starting point for a field-theoretical analysis by convergent  expansions.

\subsubsection{Stochastic field equation}
One can also derive a complex stochastic equation for $\aphi$ 
by integration by parts in the integral. Let $\langle F(\aphi) \rangle$
denote the normalized canonical expectation value, then using
\begin{equation}
\left(\Quad(h) \aphi\right)_k (x)
=
- \sfrac{\del}{\del \bar\aphi_k (x)}
\;
\E^{- \langle \bar\aphi, \Quad(h) \aphi \rangle}
\end{equation}
and integrating by parts
gives for $k < \ntau$ the `Schwinger-Dyson' equation $\langle\Quad (h) \aphi \rangle=0$, 
which, in a formal continuum limit corresponds to the Gaussian white noise average over $h$ of the equation
\begin{equation}
\left(
\del_\tau - \Delta + \E^{\I \sqrt{\eptau} h(\tau,x)} - 1
\right)
\aphi(\tau,x) 
= 0 .
\end{equation}

\subsubsection{The complex integral obtained after integrating out the auxiliary field}
\label{aintsssec}
For simplicity I consider here the case of an on-site interaction, i.e.\ $\WW_{\xx,\yy} = \vvv \speps^{-d} \delta _{\xx,\yy}$, with $\vvv \in \bR$, $\vvv > 0$. More general $\WW$'s can be treated as well in terms of expansions, but the essential points are visible already in this case. 
With this choice of $\WW$, the $h$-integral factorizes both over the time slice index $\jtau$ and over the space index $\xx$, as follows. The $h$-dependent part of the quadratic form $\bili{\bar\aphi}{\Quad(h) \aphi}_\XTspace$ is
\beq
\sum_{\jtau=1}^\ntau
\bili{\tilde\aphi_{\jtau-1}}{\E^{\I \sqrt{\eptau} h_\jtau} \; \aphi_\jtau}_\Xspace
=
\sum_{\jtau=1}^\ntau
\int_\xx
\E^{\I \sqrt{\eptau} h_{\jtau,\xx}}
\tilde\aphi_{\jtau-1,\xx} \; \aphi_{\jtau,\xx}
\eeq
with $\tilde \aphi_{\jtau-1} = \overline{\E^{-\eptau \Kin} \aphi_{\jtau-1}}$, 
so its exponential factorizes; the assumption that $\WW$ is on-site implies that also the Gaussian integral factorizes. Calling $z = \tilde\aphi_{\jtau-1,\xx} \; \aphi_{\jtau,\xx}$, the integral over $h = h_{\jtau,\xx}$ is then simply
\beq\label{Phidef}
\Phi (z) 
=
\int \dd \mu_\vvv (h) \; 
\E^{z \; \E^{\I \sqrt{\eptau} h}}
=
\sum_{n=0}^\infty \frac{z^n}{n!} \; \E^{-\frac12 \; \eptau \vvv \; n^2} \;  
\eeq
The function $\cV (z) = -\ln \Phi (z)$ is then the interaction term in the action for the $\aphi$-integral. Obviously, $\abs{\Phi (z)} \le \E^{|z|}$ for all $|z|$, so the large-field behaviour is bounded by a quadratic term. Moreover $\cV(z) = z + \sfrac12 \eptau \vvv z^2 $ for small $|z|$, i.e.\ $\cV$ contains a quartic interaction term for small fields.  

\beq\label{aI}
\begin{split}
\tilZ^{(N,\beta,\Xspace,\ntau)} 
&=
\int_{\bC^{\XTspace}} \DD \aphi \; \;
\E^{-\bili{\bar\aphi}{\Quad (0) \; \aphi}_\XTspace
- \sum_\jtau \int_\xx \cV (\tilde\aphi_{\jtau-1,\xx} \; \aphi_{\jtau,\xx})}\;\;
\sfrac{{\bili{\bar \aphi_{\ntau}}{\aphi_0}_\Xspace}^N}{N!}
\end{split}
\eeq
Clearly, $\tilde\aphi_{\jtau-1,\xx} \; \aphi_{\jtau,\xx}$ is complex, and therefore this representation loses the positivity of $\WW$. However, it is possible to obtain a representation with a positive interaction term, in the sense that one can prove that the action can be changed to one with a positive interaction term without changing the limit $\eptau \to 0$. This will be done below in Section \ref{Analysis2}.

\subsubsection{The probabilistic expectation obtained after integrating over the $\aphi$-fields}
The $\aphi$-integral  in \Ref{HSFI} is Gaussian as well. Performing this integral,
the polyomial of degree $N$ in the integrand results in a permanent of order $N$, specifically
\begin{equation}\label{Qhpermanent}
\tilZ^{(N,\beta,\Xspace,\ntau)}
=
\frac{1}{N!} \sum_{\sigma \in \cS_N}
\ili_{\xx_1, . . , \xx_N}
\int \dd\mu_\WWW (h) \; 
\pli_{k=1}^N 
\Cov(h)_{(0,\xx_k),(\ntau,\xx_{\sigma(k)})} \; .
\end{equation}
The integrand is still complex because $\Cov(h)$ is a complex function, but this function will be shown to be bounded. The Gaussian measure itself is real and normalized, i.e.\ a probability measure). This motivates the notation 
\beq\label{hexpecdef}
\hexpec{F(h)}
=
\int F(h) \dd\mu_\WWW (h) \; .
\eeq
Introduce the additional notation for the $\Xspace$-averaged permanent, 
\beq
\Prm{N}{\Xspace} \Cov_{\jtau,\jtau'}
=
\sum_{\pi \in \scS_N} 
\int_{\xx_1, \ldots, \xx_N}
\pli_{n=1}^N
\Cov_{(\jtau,\xx_n),(\jtau', \xx_{\pi(n)})} \; .
\eeq
Then
\beq\label{tilZshort}
\tilZ^{(N,\beta,\Xspace,\ntau)}
=
\frac{1}{N!} \; 
\hexpec{
\Prm{N}{\Xspace} \Cov (h)_{0,\ntau}
}
\eeq
A similar expression holds for the generating function with $\opF$ insertion.
This representation will be the basis for most of the bounds given in this paper. 

\section{The Random-Walk Representation}
\label{RWsec}

By \Ref{termNeum}, the $k^{\rm th}$ factor in the product in (\ref{Qhpermanent}) is given by
\beq
\Cov(h)_{(0,\xx_k),(\ntau,\xx_{\sigma(k)})}
=
\left[
\pli_{j_k=1}^\ntau
\left(
\E^{-\eptau \Kin} \; \E^{-\I \sqrt{\eptau} h_{j_k}}
\right)
\right]_{(0,\xx_k),(\ntau,\xx_{\sigma(k)})} \; .
\eeq
In the basis of position eigenstates, $\E^{-\I \sqrt{\eptau} h_{j_k}}$ is diagonal, and thus writing out all matrix products gives 
\beq\label{QNeumann}
\begin{split}
\Cov(h)_{(0,\xx_k),(\ntau,\xx_{\sigma(k)})}
=
\sum_{\ywalk{k}} 
\pli_{j=1}^\ntau
\left(
\E^{-\eptau \Kin}
\right)_{\ywalkel{k}{j-1},\ywalkel{k}{j}}
\;
\E^{-\I \sqrt{\eptau} h_{j} (\ywalkel{k}{j})}
\end{split}
\eeq
where $\ywalk{k} = (\ywalkel{k}{0}, \ldots \ywalkel{k}{\ntau})$ with 
$\ywalkel{k}{0} = \xx_k$, $ \ywalkel{k}{\ntau} = \xx_{\sigma(k)}$ and 
$ \ywalkel{k}{j} \in \Xspace$ for all $j \in \{ 1, \ldots, \ntau -1\}$. 
It is natural to interpret $\ywalk{k}$ as a walk from $\xx_k$ to $\xx_{\sigma(k)}$, and write $\ywalk{k}: \xx_k \to \xx_{\sigma(k)}$ as well as 
\beq
\pli_{j=1}^\ntau
\left(
\E^{-\eptau \Kin}
\right)_{\ywalkel{k}{j-1},\ywalkel{k}{j}}
=
\Pamp{\ywalk{k}} \; 
\eeq
for the product of transition amplitudes along the walk. 
Let  $\XTspace = \{1,\ldots,\ntau\} \times \Xspace$. 
Define the density associated to $\ywalk{k}$ by 
$\dens{k}:\XTspace \to \bR$,
where, for $(\jtau,\xx) \in \XTspace$, 
$\dens{k}_{\jtau,\xx} = \speps^{-d}\; \delta_{\xx, \ywalkel{k}{j}}$. Then 
\beq
\pli_{j=1}^\ntau
\E^{-\I \sqrt{\eptau} h_{j} (\ywalkel{k}{j})}
=
\E^{-\I \sqrt{\eptau} (h, \dens{k})_{\XTspace}}
\eeq
Let $\ddens (\yywalk) = \sum_{k=1}^N \dens{k}$. Then
\beq
\begin{split}
\pli_{k=1}^N
\Cov(h)_{(0,\xx_k),(\ntau,\xx_{\sigma(k)})}
&=
\sum_{\ywalk{1}, . . , \ywalk{N}} 
\pli_{k=1}^N
\Pamp{\ywalk{k}}
\;
\E^{\I \sqrt{\eptau} \bili{h}{\ddens (\yywalk)}_\XTspace}
\end{split}
\eeq
The Gaussian integral over $h$ can now be done by the standard formula
\beq
\int \dd\mu_\WWW (h) \;   \E^{\I \sqrt{\eptau} \bili{h}{\ddens (\yywalk)}_\XTspace}
=
\E^{-\frac{\eptau}{2} \; \bili{\ddens(\yywalk)}{\WWW \ddens (\yywalk)}_\XTspace} \; ,
\eeq
and the bilinear form in the exponent then worked out in terms of summations over $k$ and $k' \in \{1, \ldots, N\}$ using the definition of $\ddens$. 
\intremark{%
Details
\beq
\begin{split}
\int \dd\mu_\WWW &(h) \;   \E^{\I \sqrt{\eptau} \bili{h}{\ddens (\yywalk)}_\XTspace}
\\
&=
\E^{-\frac{\eptau}{2} \; \bili{\ddens(\yywalk)}{\WWW \ddens (\yywalk)}_\XTspace}
\\
&=
\E^{-\frac{\eptau}{2} \sum_{k,k'=1}^N 
\bili{\dens{k}}{\WWW \dens{k'}}_\XTspace}
\\
&=
\E^{-\frac{\eptau}{2} \sum_{k,k'=1}^N 
\sum_{\jtau=1}^\ntau
\bili{\dens{k}_\jtau}{\WW\; \dens{k'}_\jtau}_\Xspace}
\\
&=
\E^{-\frac{\eptau}{2} \sum_{k,k'=1}^N 
\sum_{\jtau=1}^\ntau
\WW_{\ywalk{k}_\jtau, \ywalk{k'}_{\jtau}}}
\\
&=
\E^{
- \eptau
\sum_{\jtau=1}^\ntau
\sum_{1 \le k < k' \le N}
\WW_{\ywalk{k}_\jtau, \ywalk{k'}_{\jtau}}
-
\sfrac{\eptau}{2} 
\sum_{k=1}^{N}
\sum_{\jtau=1}^\ntau
\WW_{\ywalk{k}_\jtau, \ywalk{k}_{\jtau}}
}
\end{split}
\eeq
}
With the notations 
\beq
\tau_\jtau = \frac{\jtau}{\ntau}\; \beta\; , 
\qquad 
\int_\tau \; F(\tau)= \eptau \sum_{\jtau=1}^\ntau F(\tau_\jtau)
\eeq 
the canonical partition function then becomes in the translation-invariant case\footnote{a simple generalization of this formula holds for $\WW$ that are not translation-invariant.}
\beq
\canZ^{(N,\beta,\Xspace,\ntau)}
=
\frac{1}{N!} \sum_{\sigma \in \cS_N} \;
\ili_{\xx_1, . . , \xx_N}
\sum_{\ywalk{1}, . . , \ywalk{N}\atop \ywalk{k}: \xx_k \to \xx_{\sigma(k)}} 
\;
\E^{
- \cV (\ywalk{1}, . . , \ywalk{N}) 
}
\;
\pli_{k=1}^N
\Pamp{\ywalk{k}}
\eeq
with 
\beq
\cV (\ywalk{1}, . . , \ywalk{N}) 
=
\frac12
\sum_{1 \le k , k' \le N}
\int_\tau
\WW \left(\ywalk{k}(\tau) - \ywalk{k'} (\tau)\right)
\eeq
Thus the partition function is a sum over collections of $N$ random walks $\ywalk{1}, \ldots , \ywalk{N}$ that have a local-in-time interaction $\WW$.
This is a discrete analogue of the representation as symmetrized interacting Brownian motions of \cite{AdamsKoenig}. The convergence of this partition function for interacting random loops as $\eptau \to 0$ holds as a direct consequence of the Lie product formula (\ref{Lieprod}). While it is clear that the representations are related since they equal the same partition function, this provides an explicit illustration of the role of the factor \Ref{sandprich} in the integral representation of the canonical partition function, and a natural finite-dimensional approximation of the Wiener integral over Brownian motions.

\section{Analysis 1: Covariances}\label{Analysis1}

This section contains the basic estimates for the covariance. They are elementary and the proofs are easy, but they will be crucial because they are uniform in the auxiliary field $h$. 

From now on, the spatial lattice constant will be set to unity, $\speps = 1$. 

\subsection{Stochasticity and bounds for the covariance}


\bigskip\begin{lemma}\label{Quadlemma}
If $\Kin$ generates a stochastic process (see $(\ref{stochcond})$), then for all
$\jtau'\ge\jtau$, all $\xx,\xx'\in \Xspace$, and all $h$
\beq\label{Covbound}
\begin{split}
\abs{
\Cov(h)_{(\jtau,\xx), (\jtau',\xx')}}
&\le 
\abs{
\Cov(0)_{(\jtau,\xx), (\jtau',\xx')}}
\\
&=
\left(
\E^{(\jtau'-\jtau)\eptau\; \Kin}
\right)_{\xx,\xx'} \; .
\end{split}
\eeq	
\end{lemma}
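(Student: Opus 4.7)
The plan is to use the explicit closed form for the covariance furnished by (\ref{termNeum}). For $\jtau'\ge\jtau$ we have
\[
\Cov(h)_{\jtau,\jtau'}
=
\prod_{k=\jtau+1}^{\jtau'}\opA_k,
\qquad
\opA_k = \E^{-\eptau\Kin}\,\E^{\I\sqrt{\eptau}\,h_k},
\]
so the bound becomes a pointwise estimate on a finite product of explicit operators, and no analytic or operator-theoretic machinery is needed.

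First I would expand this product in the position basis. Since $h_k$ is real-valued, $U_k:=\E^{\I\sqrt{\eptau}\,h_k}$ acts diagonally with phases $(U_k)_{\xy,\xz}=\delta_{\xy,\xz}\E^{\I\sqrt{\eptau}\,h_k(\xy)}$, each of modulus $1$. Writing $K:=\E^{-\eptau\Kin}$ and inserting resolutions of the identity between the alternating factors $K U_{\jtau+1}KU_{\jtau+2}\cdots KU_{\jtau'}$, the diagonal structure of the $U_k$ collapses the intermediate sums to
\[
\Cov(h)_{(\jtau,\xx),(\jtau',\xx')}
=
\sum_{\xy_{\jtau+1},\ldots,\xy_{\jtau'-1}\in\Xspace}
\;\prod_{k=\jtau+1}^{\jtau'}
K_{\xy_{k-1},\xy_k}\,\E^{\I\sqrt{\eptau}\,h_k(\xy_k)},
\]
with $\xy_\jtau=\xx$ and $\xy_{\jtau'}=\xx'$, i.e., a sum over discrete paths of length $\jtau'-\jtau$ from $\xx$ to $\xx'$.

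Second, I would invoke the stochasticity assumption (\ref{stochcond}), which asserts exactly that $K_{\xy,\xz}=(\E^{-\eptau\Kin})_{\xy,\xz}\ge 0$ for all $\xy,\xz\in\Xspace$. Combined with $\bigl|\E^{\I\sqrt{\eptau}\,h_k(\xy_k)}\bigr|=1$, the triangle inequality applied to the above path sum yields
\[
\bigl|\Cov(h)_{(\jtau,\xx),(\jtau',\xx')}\bigr|
\;\le\;
\sum_{\xy_{\jtau+1},\ldots,\xy_{\jtau'-1}}
\prod_{k=\jtau+1}^{\jtau'}K_{\xy_{k-1},\xy_k}
=
(K^{\jtau'-\jtau})_{\xx,\xx'}
=
\bigl(\E^{-(\jtau'-\jtau)\eptau\Kin}\bigr)_{\xx,\xx'},
\]
which is precisely $\Cov(0)_{(\jtau,\xx),(\jtau',\xx')}$ by specializing (\ref{termNeum}) to $h=0$ and using the semigroup property of $\E^{-\tau\Kin}$.

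There is no real obstacle here: the only thing to be careful about is the bookkeeping of the ordering $K U_{\jtau+1}\cdots K U_{\jtau'}$ so that the phases factor site-by-site along the discrete path, and the observation that stochasticity is the precise hypothesis making the triangle-inequality estimate saturated at $h=0$. The cases $\jtau=\jtau'$ (identity) and $\jtau'<\jtau$ (zero) are immediate from (\ref{termNeum}).
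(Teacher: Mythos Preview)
Your proof is correct and follows essentially the same approach as the paper: expand the terminating Neumann series \Ref{termNeum} as a sum over discrete paths, use that the $h$-dependent factors are unimodular phases while the matrix elements of $\E^{-\eptau\Kin}$ are nonnegative by the stochasticity assumption, and apply the triangle inequality. You have also silently corrected what appears to be a sign typo in the displayed formula of the lemma (it should read $\E^{-(\jtau'-\jtau)\eptau\Kin}$, as you write).
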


\begin{proof}
By \Ref{termNeum}, and writing out the matrix products in the random-walk notation of \Ref{QNeumann}, 
\beq
\begin{split}
\Cov(h)_{(\jtau,\xx), (\jtau',\xx')}
&=
\sum_{\yywalk: \xx \to \xx'}
\pli_{j=\jtau+1}^{\jtau'}
\left(
\E^{-\eptau \Kin}
\right)_{\yywalk_{j-1},\yywalk_{j}}\;
\E^{\I \sqrt{\eptau} h_{\jtau,\yywalk_{j}}}
\end{split}
\eeq
All $h$-dependent factors have absolute values equal to $1$, and, by hypothesis, the matrix elements of $\E^{-\eptau \Kin}$ are all nonnegative, hence unaffected by taking absolute values. Thus
\beq
\begin{split}
\abs{
\Cov(h)_{(\jtau,\xx), (\jtau',\xx')}}
&\le
\sum_{\yywalk: \xx \to \xx'}
\pli_{j=\jtau+1}^{\jtau'}
\left(
\E^{-\eptau \Kin}
\right)_{\yywalk_{j-1},\yywalk_{j}}
\\
&=
\Cov(0)_{(\jtau,\xx), (\jtau',\xx')}
\\
&=
\left(
\E^{(\jtau'-\jtau)\eptau\; \Kin} 
\right)_{\xx,\xx'} \; .
\end{split}
\eeq
\end{proof}

\subsection{Modified covariances}

In this section, a number of modifications of the operator $\Quad (h)$ are introduced, and their properties, as well as those of their inverses, are discussed. These modifications will be related to different actions that all reproduce the same limit for the canonical partition function as $\eptau \to 0$. 

\subsubsection{$\tQuad$}
Let $\cB $ be the diagonal matrix $\cB_{\jtau\jtau'} = \opB_\jtau \delta_{\jtau,\jtau'}$, i.e.\
\beq
\cB
=
\left[
\begin{array}{cccccc}
\opB_0 & 0 & 0 & 0 & \ldots & 0 \\
0 & \opB_1 & 0 & 0 & \ldots & 0\\[2ex]
 &  & & \ddots & &   \\[2ex]
0 & 0 & 0 & & \opB_{\ntau-1} & 0 \\
0 & 0 & 0 & \ldots & 0 & \opB_\ntau
\end{array}
\right]
\eeq
Then 
\beq
\tQuad(h) 
=
\Quad (h) \; \cB
=
\left[
\begin{array}{cccccc}
\opB_0 & - \opA_1 \opB_1 & 0 & 0 & \ldots & 0 \\
0 & \opB_1 & - \opA_2 \opB_2 & 0 & \ldots & 0\\[3ex]
 &  & \ddots & & \ddots &  \\[3ex]
0 & 0 & 0 & & \opB_{\ntau-1} & -A_\ntau \opB_\ntau \\
0 & 0 & 0 & \ldots & 0 & \opB_\ntau
\end{array}
\right]
\eeq
If $\opB_0, \ldots, \opB_\ntau$ are invertible, then $\tQuad$ is invertible, with the inverse $\tCov(h) = \tQuad(h)^{-1}$ satisfying
\beq
\tCov(h)_{(\jtau,\xx),(\jtau'\xx')}
=
\int_{\xy}
(\opB_\jtau^{-1})_{\xx,\xy} \; 
\Cov(h)_{(\jtau,\xy),(\jtau'\xx')}
\eeq

\begin{lemma}
Let $\opB_0 = \opId$ and $\opB_{\jtau} = \E^{-\I\sqrt{\eptau} h_{\jtau}}$. Then $\opA_\jtau \opB_\jtau = \E^{\eptau \Kin}$ for all $j \in \{1, \ldots, \ntau\}$, and
\beq\label{tQuaddef}
\tQuad (h) _{\jtau,\jtau'}
=
\de_{\jtau,\jtau'} \; \E^{- \I \sqrt{\eptau} \; h_{j}}
-
\de_{\jtau+1,\jtau'} \; \E^{-\eptau \Kin} 
\eeq
is invertible, and 
\beq\label{tildegut}
\tCov(h)_{(0,\xx),(\jtau'\xx')}
=
\Cov(h)_{(0,\xx),(\jtau'\xx')} \; .
\eeq
If $\Kin $ generates a stochastic process, then for all $h$ and all $\jtau,\jtau', \xx, \xx'$
\beq\label{tCovbound}
\abs{\tCov(h)_{(\jtau,\xx), (\jtau', \xx')}}
\le
\abs{\tCov(0)_{(\jtau,\xx), (\jtau', \xx')}} \; .
\eeq
\end{lemma}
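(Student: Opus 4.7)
The plan is to derive each claim by direct algebraic unpacking of the definition $\tQuad(h) = \Quad(h)\,\cB$, then to read off the bound on $\tCov$ from Lemma \ref{Quadlemma}. Because $h$ is real-valued and $\opB_\jtau = \E^{-\I\sqrt{\eptau}\, h_\jtau}$ is a diagonal multiplication operator in the spatial variable, everything is elementary.

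First, I would compute $\opA_\jtau \opB_\jtau$: since $\opA_\jtau = \E^{-\eptau \Kin}\,\E^{\I\sqrt{\eptau}\, h_\jtau}$ and the two phase factors are mutually inverse, $\opA_\jtau \opB_\jtau = \E^{-\eptau \Kin}$ (independent of $h$). Substituting into the explicit matrix form of $\tQuad(h) = \Quad(h)\,\cB$ produces formula \Ref{tQuaddef}: the diagonal carries the phase $\opB_\jtau$ and the super-diagonal collapses to the pure hopping $\E^{-\eptau\Kin}$.

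Second, every $\opB_\jtau$ is unitary (diagonal with unimodular entries, since $h \in \bR^\XTspace$), hence invertible with $\opB_\jtau^{-1} = \E^{\I\sqrt{\eptau}\, h_\jtau}$; so $\cB$ is invertible, and combined with the invertibility of $\Quad(h)$ already provided by the terminating Neumann series \Ref{termNeum}, this shows $\tQuad(h)$ is invertible and
\begin{equation}
\tCov(h) \;=\; \cB^{-1}\,\Cov(h) .
\end{equation}
Taking matrix elements and using that $\opB_\jtau^{-1}$ is diagonal in the spatial variable yields
\begin{equation}
\tCov(h)_{(\jtau,\xx),(\jtau',\xx')}
\;=\;
\E^{\I\sqrt{\eptau}\, h_{\jtau,\xx}}\;
\Cov(h)_{(\jtau,\xx),(\jtau',\xx')} .
\end{equation}
Specializing to $\jtau = 0$ and using the convention $\opB_0 = \opId$ gives \Ref{tildegut} directly.

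Finally, for the uniform bound \Ref{tCovbound}: the phase factor has modulus one, so $\abs{\tCov(h)_{(\jtau,\xx),(\jtau',\xx')}} = \abs{\Cov(h)_{(\jtau,\xx),(\jtau',\xx')}}$. If $\Kin$ generates a stochastic process, Lemma \ref{Quadlemma} bounds this by $\abs{\Cov(0)_{(\jtau,\xx),(\jtau',\xx')}}$, and at $h=0$ every $\opB_\jtau$ reduces to the identity so $\tCov(0) = \Cov(0)$, completing the chain of inequalities. There is no real obstacle here; the only point worth care is the side convention in $\tQuad = \Quad\,\cB$ (right multiplication by $\cB$), which dictates that $\cB^{-1}$ acts from the \emph{left} on $\Cov$ and so the absorbed phase attaches to the first time/space index, not the second — this is what makes \Ref{tildegut} hold at $\jtau = 0$ rather than at $\jtau' = \ntau$.
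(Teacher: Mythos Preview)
Your proof is correct and follows essentially the same approach as the paper: both use $\tCov(h) = \cB^{-1}\Cov(h)$, invoke $\opB_0 = \opId$ for \Ref{tildegut}, and combine the unimodularity of the diagonal phases with Lemma \ref{Quadlemma} for \Ref{tCovbound}. Your version is slightly more explicit (writing out the phase factor attached to the first index and noting $\tCov(0)=\Cov(0)$), but the underlying argument is identical.
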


\begin{proof}
Eq.\ (\ref{tildegut}) holds because $\opB_0= \opId$. By definition, 
\beq
\tCov(h) = \cB(h)^{-1} \; \Cov(h) \; .
\eeq
Every $\opB_\jtau $ is diagonal in the spatial indices, so $\cB(h)^{-1}$ is diagonal, and 
\beq
\abs{\cB(h)^{-1}_{(\jtau,\xx), (\jtau, \xx)}}
\le 
1
=
\cB(0)^{-1}_{(\jtau,\xx), (\jtau, \xx)}
\eeq
Thus (\ref{tCovbound}) follows by matrix multiplication 
and Lemma \ref{Quadlemma}. 
\end{proof}

\subsubsection{$\ttQuad$ and $\tttQuad$}
Consider furthermore the following two operators motivated by a small-$\eptau$ expansion. 
Let $\uu: \Xspace \to \bR$ and let $\tKin$ be a self-adjoint linear operator on $\bC^\Xspace$, and regard, as with $h$, functions of $\uu$ as multiplication operators, i.e.\ $(\E^{-\eptau \uu})_{\xx,\xx'} = \E^{-\eptau \uu_\xx} \; \delta_{\xx,\xx'}$. 
Define $\ttQuad (h,\uu)$ and $\tttQuad (h)$ by 
\beq\label{ttQuaddef}
\ttQuad (h,\uu)_{\jtau,\jtau'}
=
\de_{\jtau,\jtau'} \; (\opId - \I \sqrt{\eptau} \; h_{j}) \; \E^{-\eptau \uu}
-
\de_{\jtau+1,\jtau'} \; \E^{-\eptau \tKin} 
\eeq
and 
\beq\label{tttQuaddef}
\tttQuad (h)_{\jtau,\jtau'}
=
\de_{\jtau,\jtau'} \; (\opId - \I \sqrt{\eptau} \; h_{j})
-
\de_{\jtau+1,\jtau'} \; (\opId - \eptau \tKin ) \; .
\eeq
Let 
\beq\label{tubedef}
\Tube_\eptau 
=
\{ h \in \bC^\XTspace:  \abs{\Im h_{\jtau,\xx}} < \eptau ^{-1/2} \; \forall (\jtau,\xx) \in \XTspace \}
\eeq

\newpage
\begin{lemma}  
\label{ttQuadlemma}

Let $h \in \bR^\Xspace$ and $\uu \in \bR^\Xspace$. 

\begin{enumerate}

\item
$\ttQuad(h,\uu)$ is invertible and its inverse 
\beq
\ttCov (h,\uu) = \ttQuad(h,\uu)^{-1}
\eeq
has norm  bounded uniformly in $h$. $ \ttCov$ is analytic in $h$ on $\Tube_\eptau$. 
 
\item
If $\E^{-\eptau \tKin } < \E^{-\eptau \uu}$, then
\beq
\begin{split}
\Re \bili{\bar\aphi}{\ttQuad (h,\uu) \; \aphi}_\XTspace
&=
\Re \bili{\bar\aphi}{\ttQuad (0,\uu) \; \aphi}_\XTspace
\\
&\ge 
\sum_{\jtau=0}^\ntau
\bili{\bar b_\jtau}{(1- \tilde K)\; b_\jtau}_\Xspace
> 0 \; .
\end{split}
\eeq
(Here $\tilde K = \E^{\frac{\eptau}{2} \uu}  \E^{-\eptau \tKin }   \E^{\frac{\eptau}{2} \uu}$ and $b_\jtau = \E^{-\frac{\eptau}{2} \uu}\; \aphi_\jtau$.)

\item
If $\tKin $ generates a stochastic process, then 
\beq\label{ttCovbound}
\abs{\ttCov(h,\uu)_{(\jtau,\xx),(\jtau',\xx')}}
\le
\ttCov(0,\uu)_{(\jtau,\xx),(\jtau',\xx')} \; .
\eeq
For any $\delta > 0$, there is $\ntau_0 = \ntau (\delta, \beta, \Vert{\tKin}\Vert, \norm{\uu})$ so that for all $\ntau > \ntau_0$, all $\jtau, \jtau' \in \{ 0, \ldots, \ntau\}$ and all $\xx, \xx' \in \Xspace$
\beq\label{113}
\abs{
\ttCov(0,\uu)_{(\jtau,\xx),(\jtau',\xx')} 
-
\left(\E^{- (\jtau-\jtau') \eptau (\tKin - \uu)}
\right)_{\xx,\xx'}}
<
\delta
\eeq
Consequently, for $\ntau > \ntau_0$, the right hand side of \Ref{ttCovbound} is bounded uniformly in $\jtau,\jtau'$ and $\ntau$:
\beq\label{ttCunif}
\abs{\ttCov(h,\uu)_{(\jtau,\xx),(\jtau',\xx')}}
\le
\E^{\beta E_{min}} + \delta
\eeq
where $E_{min} = \; {\rm infspec} \; (\tKin -\uu)$. 

\end{enumerate}

\end{lemma}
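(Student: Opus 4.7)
The strategy is to transfer the techniques used for $\Quad(h)$ in Theorem \ref{HSFIthm} and Lemma \ref{Quadlemma} to the modified operator $\ttQuad(h,\uu)$, which shares the same block upper bidiagonal structure. For part (1), I would write $\ttQuad(h,\uu) = D(h,\uu) - S$, where $D$ is block diagonal with blocks $D_\jtau = (\opId - \I\sqrt{\eptau}\, h_\jtau)\,\E^{-\eptau\uu}$ and $S$ has blocks $\E^{-\eptau\tKin}$ on the super-diagonal. Factoring gives $\ttQuad = D\,[\opId - D^{-1}S]$, and since $D^{-1}S$ is nilpotent of order $\ntau + 1$, the Neumann series for $[\opId - D^{-1}S]^{-1}$ terminates, yielding, for $\jtau \le \jtau'$,
\begin{equation*}
\ttCov(h,\uu)_{\jtau,\jtau'} = \Bigl[\prod_{k=\jtau}^{\jtau'-1} D_k^{-1}\, \E^{-\eptau\tKin}\Bigr]\, D_{\jtau'}^{-1},
\end{equation*}
and $0$ for $\jtau > \jtau'$. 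Each block $D_\jtau$ is diagonal in the position basis with entries $(1 - \I\sqrt{\eptau}\, h_{\jtau,\xx})\,\E^{-\eptau\uu_\xx}$, which are nonzero precisely when $|\Im h_{\jtau,\xx}| < \eptau^{-1/2}$, i.e., on $\Tube_\eptau$; analyticity of $\ttCov$ in $h$ there follows from the explicit formula.

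For (2), the key observation is that the $h$-dependent piece of the diagonal, $-\I\sqrt{\eptau}\, h_\jtau\,\E^{-\eptau\uu}$, is purely imaginary for real $h$, so $\Re\bili{\bar\aphi}{\ttQuad(h,\uu)\aphi}_\XTspace = \Re\bili{\bar\aphi}{\ttQuad(0,\uu)\aphi}_\XTspace$. I would then conjugate by $\E^{-\eptau\uu/2}$, setting $b_\jtau = \E^{-\eptau\uu/2}\aphi_\jtau$ and $\tilde K = \E^{\eptau\uu/2}\E^{-\eptau\tKin}\E^{\eptau\uu/2}$, so that
\begin{equation*}
\bili{\bar\aphi}{\ttQuad(0,\uu)\aphi}_\XTspace = \sum_{\jtau=0}^{\ntau}\bili{\bar b_\jtau}{b_\jtau}_\Xspace - \sum_{\jtau=0}^{\ntau-1}\bili{\bar b_\jtau}{\tilde K\, b_{\jtau+1}}_\Xspace.
\end{equation*}
The hypothesis $\E^{-\eptau\tKin} < \E^{-\eptau\uu}$ is exactly $\tilde K < \opId$ as a positive operator. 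Applying the elementary inequality $2\Re\bili{\bar b_\jtau}{\tilde K b_{\jtau+1}}_\Xspace \le \bili{\bar b_\jtau}{\tilde K b_\jtau}_\Xspace + \bili{\bar b_{\jtau+1}}{\tilde K b_{\jtau+1}}_\Xspace$ (positivity of the square form for $\tilde K$) and telescoping produces the lower bound $\sum_\jtau \bili{\bar b_\jtau}{(\opId - \tilde K)\, b_\jtau}_\Xspace$, which is strictly positive.

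For (3), the uniform covariance bound (\ref{ttCovbound}) is almost immediate from the formula in (1): for real $h$, $D_k^{-1} = \E^{\eptau\uu}(\opId - \I\sqrt{\eptau}\, h_k)^{-1}$, with the second factor diagonal and entries of modulus $|1 - \I\sqrt{\eptau}\, h_{k,\xx}|^{-1} \le 1$; hence $|(D_k^{-1})_{\xx,\xx'}| \le \E^{\eptau\uu_\xx}\delta_{\xx,\xx'} = (D_k^{-1}|_{h=0})_{\xx,\xx'}$. Combined with stochasticity of $\tKin$, which makes the entries of $\E^{-\eptau\tKin}$ nonnegative, term-by-term modulus domination of the matrix product yields the bound, with $\ttCov(0,\uu)_{\jtau,\jtau'} = (\E^{\eptau\uu}\E^{-\eptau\tKin})^{\jtau'-\jtau}\E^{\eptau\uu}$. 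For (\ref{113}), which is the step departing most from the $\Quad(h)$ template and accordingly the main obstacle, I would invoke the standard Lie--Trotter estimate $\|(\E^{\eptau\uu}\E^{-\eptau\tKin})^m - \E^{-m\eptau(\tKin - \uu)}\|_{\mathrm{op}} = O(m\eptau^2(\|\tKin\|+\|\uu\|)^2)$, uniformly in $m \le \ntau$; for $m\eptau \le \beta$ this is $O(\beta\eptau(\|\tKin\|+\|\uu\|)^2)$ and thus smaller than $\delta$ once $\ntau > \ntau_0(\delta,\beta,\|\tKin\|,\|\uu\|)$, with the extra $\E^{\eptau\uu}$ factor contributing a further $O(\eptau\|\uu\|)$. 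Finally (\ref{ttCunif}) follows from $|M_{\xx,\xx'}|\le\|M\|_{\mathrm{op}}$ applied to $M = \E^{-(\jtau'-\jtau)\eptau(\tKin - \uu)}$, whose operator norm is controlled by the spectrum of $\tKin - \uu$. Tracking the Lie--Trotter error uniformly in the product length $m$ is the only genuinely new piece of bookkeeping; all other ingredients are direct transcriptions of arguments already in place for $\Quad(h)$.
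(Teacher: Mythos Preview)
Your proposal is correct and follows essentially the same route as the paper: the terminating Neumann series via the nilpotent off-diagonal part for (1), the observation that the $h$-term is purely imaginary plus the conjugation $b_\jtau = \E^{-\eptau\uu/2}\aphi_\jtau$ for (2), the entrywise domination $|(1-\I\sqrt{\eptau}h)^{-1}|\le 1$ combined with stochasticity of $\tKin$ for \Ref{ttCovbound}, and the quantitative Lie--Trotter bound for \Ref{113}. If anything, your treatment of (2) via the explicit square-form inequality is more detailed than the paper's, which simply refers back to the analogous computation in Theorem~\ref{HSFIthm}.
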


\begin{proof}
(a) Viewed as a matrix in $\jtau,\jtau'$, $\ttQuad$ is upper triangular, and the diagonal part $\cD$ of $\ttQuad$, given by 
\beq\label{cDjjp}
\cD (h,\uu) _{\jtau,\jtau'}
=
\delta_{\jtau,\jtau'} \; 
 (\opId - \I \sqrt{\eptau} \; h_{j}) \; \E^{-\eptau \uu} \; ,
\eeq
is invertible for all $\uu, h \in \bR^\XTspace$. For all $h \in \Tube_\eptau$, the inverse $\cD (h,\uu)^{-1} $ is analytic in $h$ because it only contains factors of $( 1 - \I \sqrt{\eptau} \; h_{j,\xx}) \; \E^{-\eptau \uu_\xx})^{-1}$. The inverse of $\ttQuad$ is given by a Neumann series that terminates after at most $\ntau+1$ terms, hence has the same analyticity properties. The first equality in (b) holds because the $h$-dependent term drops out of the hermitian part of $\Quad$; the inequality the follows by the hypothesis made in $(b)$.  (c) The proof of \Ref{ttCovbound} is a straightforward adaptation of that of Lemma \ref{Quadlemma}, writing out the finite Neumann sum for $\ttCov$ as a matrix product, and using that by (\ref{cDjjp}), $\abs{\cD (h,\uu) _{(\jtau,\xx),(\jtau',\xx')}} \le \cD (0,\uu) _{(\jtau,\xx),(\jtau',\xx')}$. To see \Ref{113}, note that 
an easy variation of the standard proof of the Lie product formula (as given, e.g., in \cite{msbook}) results in
\beq
\norm{
\E^{\eptau (\opA + \opB) \jtau }
-
\left(
\E^{\eptau\;\opA}\;
\E^{\eptau\;\opB}
\right)^\jtau}
\le
\jtau \; 
\E^{\eptau (\jtau - 1) (\norm{\opA} + \norm{\opB})} \;
\norm{
\E^{\eptau (\opA + \opB)}
-
\E^{\eptau\;\opA}\;
\E^{\eptau\;\opB}
}
\eeq
and an estimate of the last factor gives $O(\eptau^2)$, hence
(see Appendix \ref{BCHapp})
\beq
\left(
\E^{\eptau\;\opA}\;
\E^{\eptau\;\opB}
\right)^\jtau
=
\E^{\eptau (4\opA + 2\opB) \jtau }
+ \opR
\eeq
with
\beq
\norm{\opR}
\le
\frac{1}{\ntau} \; \beta\; 
\E^{\beta (\norm{\opA} + \norm{\opB})} \;
(\norm{\opA} + \norm{\opB})^2 \; .
\eeq
Given $\delta > 0$, $\ntau_0$ is defined by the condition $\norm{\opR} \le \delta$. Then use $|\opR_{\xx,\xy}| \le \norm{\opR}$, and $\abs{\jtau- \jtau'} \le \ntau$, so that 
$\abs{\jtau-\jtau'} \eptau \le \beta$, hence $\abs{{\E^{- (\jtau-\jtau') \eptau (\tKin - \uu)}}_{\xx,\xx'}} \le \E^{\beta E_{min}}$. Thus for $\ntau > \ntau_0$, \Ref{ttCunif} holds.
\end{proof}

\begin{lemma}  
\label{tttQuadlemma}

Let $h \in \bR^\Xspace$ and $\uu \in \bR^\Xspace$. 

\begin{enumerate}

\item
$\tttQuad (h)$ is invertible and its inverse $\tttCov (h) = \tttQuad(h)^{-1}$
has norm  bounded uniformly in $h$.  $\tttCov$ is analytic in $h$ on $\Tube_\eptau$.

\item
If $\tKin > 0$ then 
\beq
\begin{split}
\Re \bili{\bar\aphi}{\tttQuad (h) \; \aphi}_\XTspace
&\ge
\Re \bili{\bar\aphi}{\tttQuad (0) \; \aphi}_\XTspace
\\
&\ge 
\eptau
\sum_{\jtau=0}^\ntau
\bili{\bar a_\jtau}{\tKin\; a_\jtau}_\Xspace
> 0 \; .
\end{split}
\eeq

\item
If $\tKin_{\xx,\xy} \le 0$ for all $\xx \ne \xy$ and if $\eptau$ is so small  that $1 - \eptau \tKin_{\xx,\xx} \ge 0$ for all $\xx$, then 
\beq\label{tttCovbound}
\abs{\tttCov(h)_{(\jtau,\xx),(\jtau',\xx')}}
\le
\tttCov(0)_{(\jtau,\xx),(\jtau',\xx')} \; .
\eeq
Moreover, the right hand side of \Ref{tttCovbound} is bounded uniformly in $\ntau$.

\end{enumerate}

\end{lemma}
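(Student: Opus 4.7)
The proof will mirror that of Lemma \ref{ttQuadlemma}. I decompose $\tttQuad(h) = \cD(h) - \cN$ in the time index, where $\cD(h)_{\jtau,\jtau'} = \delta_{\jtau,\jtau'}(\opId - \I\sqrt{\eptau}\, h_\jtau)$ is diagonal in both $\jtau$ and $\xx$ and $\cN_{\jtau,\jtau'} = \delta_{\jtau+1,\jtau'}(\opId - \eptau\tKin)$ is strictly upper-triangular in $\jtau$, hence nilpotent with index $\le \ntau+1$. For part (a), $\cD(h)$ is invertible on $\Tube_\eptau$: writing $h = h_R + \I h_I$ with $|(h_I)_{\jtau,\xx}| < \eptau^{-1/2}$, one has $\Re(1 - \I\sqrt{\eptau}\, h_{\jtau,\xx}) = 1 + \sqrt{\eptau}\,(h_I)_{\jtau,\xx} > 0$. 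The inverse is then given by the terminating Neumann sum
\[
\tttCov(h) = \sum_{n=0}^{\ntau}\bigl(\cD(h)^{-1}\cN\bigr)^n \cD(h)^{-1},
\]
which is analytic in $h$ on $\Tube_\eptau$ and, for real $h$, satisfies $\|\cD(h)^{-1}\| \le 1$ (because $|1 - \I\sqrt{\eptau}\,h_{\jtau,\xx}| \ge 1$), whence $\|\tttCov(h)\|$ is bounded by $(\ntau+1)(1+\eptau\|\tKin\|)^{\ntau}$, uniform in $h$.

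For part (b), I first observe that for real $h$ the $h$-dependent part of the quadratic form, $-\I\sqrt{\eptau}\sum_\jtau\int_\xx h_{\jtau,\xx}|\aphi_{\jtau,\xx}|^2$, is purely imaginary, so the first inequality is actually an equality. For the second, set $M = \opId - \eptau\tKin$ (self-adjoint) and start from $\Re\bili{\bar\aphi}{\tttQuad(0)\aphi}_\XTspace = \sum_{\jtau=0}^{\ntau}\bili{\bar\aphi_\jtau}{\aphi_\jtau}_\Xspace - \sum_{\jtau=0}^{\ntau-1}\Re\bili{\bar\aphi_\jtau}{M\aphi_{\jtau+1}}_\Xspace$. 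Using the identity $2\Re\bili{\bar\aphi_\jtau}{M\aphi_{\jtau+1}}_\Xspace = \bili{\bar\aphi_\jtau}{M\aphi_\jtau}_\Xspace + \bili{\bar\aphi_{\jtau+1}}{M\aphi_{\jtau+1}}_\Xspace - \bili{\bar\aphi_\jtau - \bar\aphi_{\jtau+1}}{M(\aphi_\jtau-\aphi_{\jtau+1})}_\Xspace$, combined with the telescoping $\sum_{\jtau=0}^{\ntau}\|\aphi_\jtau\|^2 = \tfrac12\|\aphi_0\|^2 + \tfrac12\|\aphi_\ntau\|^2 + \tfrac12\sum_{\jtau=0}^{\ntau-1}(\|\aphi_\jtau\|^2+\|\aphi_{\jtau+1}\|^2)$, a short rearrangement gives
\[
\Re\bili{\bar\aphi}{\tttQuad(0)\aphi}_\XTspace = \tfrac12\bili{\bar\aphi_0}{M\aphi_0}_\Xspace + \tfrac12\bili{\bar\aphi_\ntau}{M\aphi_\ntau}_\Xspace + \tfrac12\sum_{\jtau=0}^{\ntau-1}\bili{\bar\aphi_\jtau - \bar\aphi_{\jtau+1}}{M(\aphi_\jtau-\aphi_{\jtau+1})}_\Xspace + \eptau\sum_{\jtau=0}^{\ntau}\bili{\bar\aphi_\jtau}{\tKin\aphi_\jtau}_\Xspace.
\]
For $\ntau$ large enough that $\eptau\|\tKin\| \le 1$, the operator $M$ is non-negative, so the first three terms are non-negative and the remaining sum yields the claimed lower bound, which is strictly positive when $\tKin > 0$ and $\aphi \ne 0$.

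For part (c), the hypothesis $\tKin_{\xx,\xy} \le 0$ for $\xx\ne\xy$ and $1 - \eptau\tKin_{\xx,\xx} \ge 0$ makes every matrix element of $M = \opId - \eptau\tKin$, and thus every entry of $\cN$, non-negative. Moreover, $\cD(h)^{-1}$ is diagonal in $(\jtau,\xx)$ with $|\cD(h)^{-1}_{(\jtau,\xx),(\jtau,\xx)}| = (1+\eptau h_{\jtau,\xx}^2)^{-1/2} \le 1 = \cD(0)^{-1}_{(\jtau,\xx),(\jtau,\xx)}$. Applying the triangle inequality entry-wise to each term of the Neumann sum in (a), exactly as in the proof of Lemma \ref{ttQuadlemma}(c), then yields $|\tttCov(h)_{(\jtau,\xx),(\jtau',\xx')}| \le \tttCov(0)_{(\jtau,\xx),(\jtau',\xx')}$. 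Finally, the explicit expression $\tttCov(0)_{(\jtau,\xx),(\jtau',\xx')} = \bigl((\opId - \eptau\tKin)^{\jtau'-\jtau}\bigr)_{\xx,\xx'}$ for $\jtau \le \jtau'$ is bounded in modulus by $\|\opId - \eptau\tKin\|^{\jtau'-\jtau} \le (1+\eptau\|\tKin\|)^\ntau \le \E^{\beta\|\tKin\|}$, uniformly in $\ntau$. The only delicate step is the rearrangement in (b), which requires $M \ge 0$ and hence $\ntau$ large enough; this is harmless because the interest is in the time-continuum limit $\ntau \to \infty$.
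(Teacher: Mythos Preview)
Your proof is correct and follows the same route the paper indicates (``Similar to the proof of Lemma \ref{ttQuadlemma}''): upper-triangular structure plus terminating Neumann series for (a), the same polarization identity used in the proof of Theorem \ref{HSFIthm} (with $M=\opId-\eptau\tKin$ playing the role of $K=\E^{-\eptau\Kin}$ there) for (b), and the entrywise positivity argument of Lemma \ref{Quadlemma} for (c). The only remark is that your lower bound in (b) needs $M\ge 0$, i.e.\ $\eptau\,\|\tKin\|\le 1$; this is not stated explicitly in the lemma but is harmless here, since the lemma is only ever applied in the regime $\ntau\to\infty$, and the analogous condition in part (c) is already assumed.
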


\begin{proof}
Similar to the proof of Lemma \ref{ttQuadlemma}.
\end{proof}

\subsubsection{$\tttaQuad$}
To place the kinetic term $\tKin$ in the time-local ($\jtau=\jtau'$) term, consider
\beq\label{tttaQuaddef}
\tttaQuad (h)_{\jtau,\jtau'}
=
\de_{\jtau,\jtau'} \; (\opId + \eptau \tKin - \I \sqrt{\eptau} \; h_{j})
-
\de_{\jtau+1,\jtau'} \; \opId  \; .
\eeq
The bounds for this covariance are analogous to the previous ones; 
the argument is slightly different, and will be given in several steps in the following, without collecting it in a single Lemma. 

If $\tKin > 0$, then 
\beq
\Re \bili{\bar \aphi}{ \tttaQuad (h) \; \aphi}_\XTspace
=
\Re \bili{\bar \aphi}{ \tttaQuad (0) \; \aphi}_\XTspace
\ge
\eptau
\sum_{\jtau=0}^\ntau
\bili{\bar\aphi}{\tKin \aphi}_\Xspace
> 0 \; .
\eeq

\bigskip\begin{lemma}\label{posplusim}
Let $\scB $ be the algebra of bounded operators on a Hilbert space and 
$\opQ,\opH \in \scB$. Furthermore, let $\opQ_0$ be a positive operator and $\opH = \opH^\dagger$, and  let
\beq
\opQ 
=
\opQ_0 + \I \opH \; . 
\eeq
Then $\opQ^{-1}$ exists , and $\norm{\opQ^{-1}} \le \norm{\opQ_0^{-1}}$. 
For $\vv \ne 0$
\beq\label{QQ0}
\Re \bracket{\vv}{\opQ \vv}
\ge
\bracket{\vv}{\opQ_0 \vv} > 0 \; .
\eeq

\end{lemma}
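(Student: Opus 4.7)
The plan is to treat the two claims independently, starting with the simpler one. For the bound \Ref{QQ0}, since $\opH = \opH^\dagger$ the matrix element $\bracket{\vv}{\opH \vv}$ is real, so $\bracket{\vv}{\I \opH \vv} = \I \bracket{\vv}{\opH \vv}$ is purely imaginary. This gives $\Re \bracket{\vv}{\opQ \vv} = \bracket{\vv}{\opQ_0 \vv}$ as an identity (not merely an inequality), and the right-hand side is strictly positive for $\vv \ne 0$ by the hypothesis $\opQ_0 > 0$.

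For the existence of $\opQ^{-1}$ and the norm bound, I would use the standard similarity factorization through $\opQ_0^{1/2}$. Since $\opQ_0$ is a positive, boundedly invertible operator, its positive square root exists via continuous functional calculus, and $\norm{\opQ_0^{-1/2}}^2 = \norm{\opQ_0^{-1}}$. Writing
\beq
\opQ = \opQ_0^{1/2} (\opId + \I \opA) \opQ_0^{1/2}, \qquad \opA := \opQ_0^{-1/2} \opH \opQ_0^{-1/2},
\eeq
reduces everything to inverting $\opId + \I \opA$, where $\opA$ is self-adjoint. The same real-part computation as above yields
\beq
\norm{(\opId + \I \opA) \vv}^2 = \norm{\vv}^2 + 2 \Re \bracket{\vv}{\I \opA \vv} + \norm{\opA \vv}^2 = \norm{\vv}^2 + \norm{\opA \vv}^2 \ge \norm{\vv}^2,
\eeq
which, together with the analogous estimate for the adjoint $\opId - \I \opA$, establishes that $\opId + \I \opA$ is boundedly invertible with $\norm{(\opId + \I \opA)^{-1}} \le 1$. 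Consequently $\opQ^{-1} = \opQ_0^{-1/2} (\opId + \I \opA)^{-1} \opQ_0^{-1/2}$ exists, and submultiplicativity of the operator norm delivers the desired bound $\norm{\opQ^{-1}} \le \norm{\opQ_0^{-1/2}}^2 = \norm{\opQ_0^{-1}}$.

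No substantive obstacle arises. The only point requiring mention is that for the square-root factorization to be useful one needs $\opQ_0$ to be boundedly invertible, i.e.\ $\opQ_0 \ge c \opId$ for some $c > 0$; otherwise the stated bound $\norm{\opQ^{-1}} \le \norm{\opQ_0^{-1}}$ would be vacuous. In the finite-dimensional lattice setting of the paper this is automatic, and the argument reduces to an elementary pseudo-resolvent estimate.
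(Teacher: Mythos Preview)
Your proof is correct and follows essentially the same route as the paper: the factorization $\opQ = \opQ_0^{1/2}(\opId + \I\opA)\opQ_0^{1/2}$ with self-adjoint $\opA = \opQ_0^{-1/2}\opH\opQ_0^{-1/2}$, followed by the lower bound $\norm{(\opId+\I\opA)\vv}^2 = \norm{\vv}^2 + \norm{\opA\vv}^2$. Your version is in fact slightly more careful than the paper's in two respects: you explicitly invoke the adjoint estimate to get bounded invertibility (not merely injectivity) of $\opId+\I\opA$, and you flag that $\opQ_0$ must be boundedly invertible for the square-root argument to make sense.
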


\begin{proof}
Because $\opQ_0 > 0$, it is invertible and has a positive square root $\opQ_0^{\frac12} $. 
Set $\opA = \opQ_0^{-1/2} \; H \;  \opQ_0^{-1/2}$. Then 
\beq
\opQ
=
\opQ_0^{\frac12} 
(\opId + \I \opA ) \; 
\opQ_0^{\frac12}  \; .
\eeq
Because $\opA = \opA^\dagger$, 
$\norm{(\opId + \I \opA) \vv}^2  = \norm{\vv}^2+ \norm{\opA \vv}^2 \ge \norm{\vv}^2 $.
Thus $\norm{(1+\I \opA)^{-1}} \le 1$, hence $\opQ^{-1}$ exists and 
\beq
\opQ^{-1} 
\le 
\Vert\opQ_0^{-\frac12}\Vert^2
=
\norm{\opQ_0^{-1}}
\eeq
where the last equality holds since $\opQ_0^{-\frac12}$ is self-adjoint. 
Moreover 
\beq
\bracket{\vv}{\opQ \vv}
=
\bracket{\vv}{\opQ_0 \vv}
+
\I \bracket{\opQ_0^{\frac12}\vv}{\opA \opQ_0^{\frac12}\vv}
\eeq
so \Ref{QQ0} holds. 
\end{proof}

If $\tKin > 0$, and because $h$ is real, $\tttaQuad (h)$ satisfies the hypotheses of Lemma \ref{posplusim} with $\opQ_0 = \tttaQuad (0)$. 
In the estimate of products as above, then use 
\beq
\abs{{\tttaQuad (h)^{-1}}_{\xx,\xy}} 
\le 
\norm{\tttaQuad(h)^{-1}} 
\le 
\norm{\tttaQuad(0)^{-1}} \; .
\eeq
When doing the terminating Neumann series for this inverse, use the spectral theorem for $\tKin$ and the elementary inequality that for all $x \ge 0$ and $\jtau \le \ntau \in \bN_0$
\beq
\left(1+ \frac{x}{\ntau}\right)^\jtau
=
\E^{\jtau \ln (1+\frac{x}{\ntau})}
\le
\E^{\frac{\jtau}{\ntau} x}
\le 
\E^{x}
\eeq
to see that the norm of $\tttaQuad(0)^{-1}$ is bounded uniformly in $\ntau$. 

\begin{lemma}\label{tttaQuadlemma}
If $\tKin \ge 0$ and the off-diagonal matrix elements of $\tKin$ are negative, then 
$\tttaCov (h) = \tttaQuad(h)^{-1}$ satisfies
\beq
\abs{\tttaCov (h)_{(\jtau,\xx),(\jtau',\xx')}}
\le
\tttaCov (0)_{(\jtau,\xx),(\jtau',\xx')}
\eeq
and the right hand side is bounded uniformly in $\ntau$. 
\end{lemma}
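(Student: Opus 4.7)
The plan is to exploit the fact that $\tttaQuad(h)$ is upper triangular in the block time index $\jtau$, with diagonal blocks $M_\jtau(h) := \opId + \eptau \tKin - \I\sqrt{\eptau}\, h_\jtau$ and $-\opId$ on the block superdiagonal. Block back-substitution (equivalently, the terminating Neumann expansion used in the preceding lemmas) gives
\begin{equation*}
\tttaCov(h)_{\jtau,\jtau'} \;=\; M_\jtau(h)^{-1}\, M_{\jtau+1}(h)^{-1} \cdots M_{\jtau'}(h)^{-1}
\qquad (\jtau \le \jtau'),
\end{equation*}
and zero for $\jtau > \jtau'$. This reduces the claim to an entry-wise per-slice bound $\abs{M_\jtau(h)^{-1}_{\xx,\xy}} \le M_\jtau(0)^{-1}_{\xx,\xy}$; the product formula together with non-negativity of the $h=0$ factors then assembles the global bound by summation over intermediate lattice sites.

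For the per-slice bound, I write $\opId + \eptau \tKin = D - N$ where $D$ is the diagonal part (its entries $D_{\xx,\xx}$ are $\ge 1$, since $\tKin \ge 0$ implies $\tKin_{\xx,\xx} \ge 0$) and $N$ has non-negative entries $N_{\xx,\xy} = -\eptau \tKin_{\xx,\xy}$, which uses exactly the hypothesis that all off-diagonals of $\tKin$ are non-positive. Then $M_\jtau(h) = \tilde D_\jtau(h) - N$ with $\tilde D_\jtau(h) := D - \I\sqrt{\eptau}\, h_\jtau$ still diagonal, and a Neumann expansion yields
\begin{equation*}
M_\jtau(h)^{-1} \;=\; \sum_{k\ge 0} \bigl[\tilde D_\jtau(h)^{-1}\, N\bigr]^{k}\, \tilde D_\jtau(h)^{-1}.
\end{equation*}
Since $\abs{\tilde D_\jtau(h)^{-1}_{\xx,\xx}} = (D_{\xx,\xx}^2 + \eptau h_{\jtau,\xx}^2)^{-1/2} \le D_{\xx,\xx}^{-1}$ and $N$ has non-negative entries, the triangle inequality applied term by term and then resummation at $h=0$ yields the per-slice comparison. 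Absolute convergence of the comparison series at $h=0$ is equivalent to the invertibility of $D-N = \opId + \eptau\tKin$, which holds since $\opId + \eptau\tKin \ge \opId$ under the operator positivity hypothesis.

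Inserting this into the product formula for $\tttaCov(h)$ proves the claimed pointwise bound $\abs{\tttaCov(h)_{(\jtau,\xx),(\jtau',\xx')}} \le \tttaCov(0)_{(\jtau,\xx),(\jtau',\xx')} = ((\opId + \eptau\tKin)^{-(\jtau'-\jtau+1)})_{\xx,\xx'}$, the right-hand side being non-negative by the M-matrix property. The uniform-in-$\ntau$ bound then follows from $\tKin \ge 0$ as an operator: $(\opId + \eptau\tKin)^{-1}$ is an $\ell^2$-contraction, so is every power, and hence each of its matrix elements is bounded in absolute value by $1$, independently of both time indices and $\ntau$. The main obstacle is that the desired comparison is entry-wise rather than in operator norm, so the abstract argument of Lemma \ref{posplusim} is too crude; the entry-wise comparison requires the M-matrix splitting, and hence the sign condition on the off-diagonals of $\tKin$, in order to preserve the inequality through the Neumann series.
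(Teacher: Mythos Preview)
Your proof is correct and follows essentially the same route as the paper: both arguments exploit the upper-triangular block structure to reduce to a per-slice bound, split the diagonal block $\opId + \eptau\tKin - \I\sqrt{\eptau}h_\jtau$ into its diagonal and (sign-definite) off-diagonal parts, expand the inverse in a Neumann series, and use $\abs{(1+\eptau e_d(\xx) - \I\sqrt{\eptau}h_{\jtau,\xx})^{-1}} \le (1+\eptau e_d(\xx))^{-1}$ together with entry-wise non-negativity of $-\eptau\tKin_{od}$ to obtain the comparison. Your uniform-in-$\ntau$ argument via the $\ell^2$-contraction property of $(\opId+\eptau\tKin)^{-1}$ is in fact slightly cleaner than the paper's, which invokes the inequality $(1+x/\ntau)^\jtau \le \E^x$ just before the lemma.
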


\begin{proof}
The diagonal of $\tttaQuad(h)$, as a matrix in the $\jtau$ indices, contains 
\beq
\opD (h_\jtau)
=
\opId + \eptau \tKin + \I \sqrt{\eptau} h_\jtau
\eeq
By hypothesis, $\tKin=\tKin_d+\tKin_{od}$ where the off-diagonal part satisfies $(\tKin_{od})_{\xx,\xx'} < 0$ for all $\xx$ and $\xx'$.  By hypothesis, the geometric series for the inverse
\beq
\opD (h_\jtau)^{-1}
=
\sum_{n=0}^\infty
(\opId + \eptau \tKin_d + \I \sqrt{\eptau} h_\jtau)^{-1}
\left[
-\eptau \tKin_{od} \, 
(\opId + \eptau \tKin_d + \I \sqrt{\eptau} h_\jtau)^{-1}
\right]^n
\eeq
converges. $\tKin_d$ and $h_\jtau$ commute, since both are diagonal in $\xx$. 
(Moreover $\tKin \ge 0$ implies $e_d(\xx) = \tKin_{\xx,\xx} \ge 0$.) Use 
\beq
\abs{\frac{1}{1 + \eptau e_d(\xx) + \I \sqrt{\eptau} h_{\jtau,\xx}}}
\le
\frac{1}{1+  \eptau e_d(\xx) } 
\eeq
and $(\tKin_{od})_{\xx,\xx'} < 0 $ to see that $\abs{\opD (h_\jtau)^{-1}_{\xx,\xx'}} \le 
\abs{\opD (0)^{-1}_{\xx,\xx'}}$. Using the terminating Neumann series for $\tttaQuad^{-1}$ then concludes the proof. 
\end{proof}
 
\section{Analysis 2: Time-Continuum Limit}\label{Analysis2}

The existence of the time-continuum limit $\ntau \to \infty$ for the integrals (\ref{ZcN}), (\ref{aI}), and (\ref{Qhpermanent}) for the canonical partition function holds simply by the Lie product formula, and it is already in this form very useful to do analysis. Sometimes, however, one is interested in simpler actions that resemble more closely those occurring in formal continuum-time functional integrals. Heuristically, they are obtained by dropping higher-order terms in $\eptau$, but proving that these terms can really be dropped is not straightforward. For many-fermion systems, it has been done using tree expansion techniques and determinant bounds, sometimes combined with multiscale analysis \cite{BGPS}. Later \cite{FKT,PSUV}, it was done without multiscale techniques.All these proofs rely on determinant estimates which allow for convergent perturbation expansions under suitable conditions. The bosonic case is harder. In \cite{BFKT3}, existence of the time-continuum limit for a FIR of the many-boson system in the grand-canonical ensemble was proven by multiscale techniques (renormalization group using a decimation transformation in time). Ref.\ \cite{BFKT3} contains much more than just the proof that the limit exists; it also gives a rigorous justification for an effective action with a fixed short-time cutoff, which can be used in the analysis of Bose-Einstein condensation in the thermodynamic limit \cite{BFKT4}. The analysis required to do this involves decompositions in large and small fields, a conceptually clear and analytically powerful field theoretical method which, however, entails some technical overhead. 

In this section, I prove that a variety of different actions give rise to the same limit $\eptau \to 0$ of the partition function and the unnormalized expectation values of the canonical ensemble of bosons. The proof given here does not require any multiscale technique, only the uniform bounds on covariances given in the last section. This makes it possible to avoid a large-field analysis. As mentioned, it is not strictly necessary for the further analysis of the system to consider modified actions, but it serves to illustrate the application of the main observations made above, namely (i) the analyticity of the covariances in the auxiliary field $h$ (ii) their uniform boundedness as functions of $h$, and (iii) the fact that the permanent in the canonical ensemble factorizes easily over covariances and hence allows for a straightforward use of (i) and (ii) in remainder estimates. The factorization can be expressed in terms of the cycles of permutations, and then used to study the thermodynamic limit by convergent decoupling expansions \cite{toappear}.

Recall that $\eptau \to 0$ always means $\ntau \to \infty$ with $\eptau = \beta/\ntau$; this will be important in estimates since convergence or uniform boundedness as $\eptau \to 0$ really means convergence or uniform boundedness  as $\ntau \to \infty$, and statements that constants are independent of $\eptau$ will mean that they are independent of $\ntau$ as well.

\subsection{Convergence of modified partition functions}
For simplicity, I consider only the case $\opF = \opId$ here. 

The following lemma implies that the $h$-dependence can be moved to the diagonal part of the $\Quad$-matrix, without changing the partition function. 

\bigskip\begin{lemma} \label{tildeZrep}
Let $\tQuad$ be as in \Ref{tQuaddef} and $\tCov (h) = \tQuad (h)^{-1}$. Then
\beq\label{tZtQuad}
\tilZ^{(N,\beta,\Xspace,\ntau)}
=
\frac{1}{N!} \; 
\hexpec{
\Prm{N}{\Xspace} \tCov (h)_{0,\ntau}}
\eeq
\end{lemma}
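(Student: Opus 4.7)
The strategy is to observe that $\tilZ^{(N,\beta,\Xspace,\ntau)}$ depends on the bosonic covariance only through the ``endpoint'' block at time indices $(0,\ntau)$, and that this particular block is left invariant when one replaces $\Cov(h)$ by $\tCov(h)$.

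First I would recall the representation (\ref{tilZshort}), equivalent to (\ref{Qhpermanent}):
$$\tilZ^{(N,\beta,\Xspace,\ntau)} = \frac{1}{N!}\, \hexpec{\Prm{N}{\Xspace} \Cov(h)_{0,\ntau}}.$$
Therefore (\ref{tZtQuad}) will follow once one shows
$$\Prm{N}{\Xspace} \tCov(h)_{0,\ntau} = \Prm{N}{\Xspace} \Cov(h)_{0,\ntau},$$
i.e.\ that the substitution $\Cov \mapsto \tCov$ leaves every factor appearing in the permanent unchanged.

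Next I would invoke (\ref{tildegut}), which was established right after the definition (\ref{tQuaddef}) from the factorization $\tQuad(h) = \Quad(h)\,\cB(h)$ and the choice $\opB_0 = \opId$. Because $\cB(h)$ is time-diagonal and acts as the identity in the $\jtau=0$ slot, left-multiplying $\Cov(h)$ by $\cB(h)^{-1}$ does not alter any matrix element whose first time index is $\jtau = 0$. Explicitly,
$$\tCov(h)_{(0,\xx),(\jtau',\xx')} = \Cov(h)_{(0,\xx),(\jtau',\xx')} \quad \text{for all } \xx,\xx'\in\Xspace \text{ and all } \jtau'.$$
Specializing to $\jtau' = \ntau$, each of the $N$ factors $\Cov(h)_{(0,\xx_n),(\ntau,\xx_{\pi(n)})}$ in the definition of $\Prm{N}{\Xspace}\Cov(h)_{0,\ntau}$ coincides with the corresponding factor built from $\tCov(h)$. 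Summing over $\pi \in \scS_N$ and integrating over $\xx_1,\ldots,\xx_N$ then gives the required identity of permanents, and (\ref{tZtQuad}) follows after taking the $h$-expectation.

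There is no real analytic obstacle here; the content of the lemma is structural. The only point that needs care is the bookkeeping that $\cB(h)^{-1}$, as a block-diagonal matrix in the time index, really does act trivially on the $\jtau=0$ row of the covariance, so that the factorization $\tQuad = \Quad\,\cB$ does not disturb the ``left endpoint'' of the permanent. Everything else reduces to the definitions of $\tCov$ and of $\Prm{N}{\Xspace}$ and to the representation (\ref{tilZshort}) already derived from Theorem \ref{HSFIthm}.
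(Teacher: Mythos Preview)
Your argument is correct and matches the paper's own proof exactly: the paper simply says ``This follows immediately from \Ref{tilZshort} and \Ref{tildegut},'' and you have spelled out precisely why those two ingredients give the result. There is nothing to add.
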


\begin{proof}
This follows immediately from \Ref{tilZshort} and \Ref{tildegut}.
\end{proof}

It is interesting to note that formally, already this gives rise to an integral representation where the interaction term is local in time. Assuming for simplicity that the interaction is local, i.e.\ $\WW_{\xx,\xy} = \WW_{\xx} \delta_{\xx,\xy}$, 
\beq\label{wouldbenice}
\int \dd\mu_\WWW (h) \; 
\E^{-\bili{\bar\aphi}{\tQuad (h) \; \aphi}_\XTspace}
=
\E^{-\bili{\bar\aphi}{\tQuad (0) \; \aphi}_\XTspace
- \eptau \sum_\jtau \int_\xx \cV(\abs{\aphi_{\jtau,\xx}}^2)}
\eeq
where $\cV = - \ln \Phi $, with $\Phi $ given in \Ref{Phidef}. The interaction term then has the properties that it is local in the time index $\jtau$, hence positive, and moreover $\Phi$  can grow at most like $\E^{|\aphi|^2}$ as $|\aphi| \to \infty$. Formally, one can also derive this representation directly from the trace using a modified coherent-state formula. That derivation is, however, formal because integrals that get exchanged do not converge absolutely. This is reflected here in that $\Re \bili{\bar\aphi}{\tQuad (h) \; \aphi}_\XTspace$ is not positive for arbitrary $h$, hence $\tQuad (h)$ cannot be used as a covariance for a Gaussian integral, so again, going from the right hand side of \Ref{tZtQuad} to the right hand side of \Ref{wouldbenice} involves non-convergent integrals. This problem may be bypassed by an additional regularization suppressing large $\aphi$. 

In a heuristic procedure, a further step is to expand the exponential to linear order, 
$\E^{-\I \sqrt{\eptau} h_{\jtau,\xx}} = 1 - \I \sqrt{\eptau} h_{\jtau,\xx} + O( \eptau)$ and drop the order $\eptau$ remainder. The integral over $h$ then becomes Gaussian, and results in a quartic interaction for the $\aphi$ fields. This does not really work, but, as shown in the following, a slight modification does, using integration by parts in the auxiliary field $h$, as given by Lemma \ref{intbypartslemma} in Appendix \ref{intbypartssec}. 

\bigskip\begin{lemma}
\label{ttilZlemma}
Assume that $\Kin$ generates a stochastic process. Let $\ttQuad (h,\uu)$ be defined as in \Ref{ttQuaddef}, with $u_\xx = \frac12 \WW_{\xx,\xx}$ and $\tKin=\Kin$, and 
\beq\label{ttilZdef}
\ttilZ^{(N,\beta,\Xspace,\ntau)}
=
\frac{1}{N!} \; 
\hexpec{
\Prm{N}{\Xspace} \ttCov (h,\uu)_{0,\ntau}
}
\eeq
Then $\ttilZ^{(N,\beta,\Xspace,\ntau)} \to \canZ^{(N,\beta,\Xspace)}$ as $\ntau \to \infty$.  
\end{lemma}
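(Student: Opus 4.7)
By Lemma \ref{tildeZrep} combined with the Lie product formula \Ref{Lieprod}, $(N!)^{-1}\hexpec{\Prm{N}{\Xspace}\tCov(h)_{0,\ntau}}$ already converges to $\canZ^{(N,\beta,\Xspace)}$, so the plan is to show that
\[
\Delta_\ntau \;:=\; \frac{1}{N!}\,\hexpec{\Prm{N}{\Xspace}\bigl[\ttCov(h,\uu)_{0,\ntau}-\tCov(h)_{0,\ntau}\bigr]} \;\longrightarrow\; 0 \qquad (\ntau\to\infty).
\]
Writing both covariances as terminating Neumann series (Appendix \ref{Covarapp}), and using $h_0=0$, the two products differ only in the per-slice vertex factor: $V_j^{\mathrm{old}}=\E^{-\eptau\Kin}\E^{\I\sqrt\eptau h_j}$ in $\tCov$ versus $V_j^{\mathrm{new}}=\E^{-\eptau\Kin}\E^{\eptau\uu}(\opId-\I\sqrt\eptau h_j)^{-1}$ in $\ttCov$, together with a harmless initial factor $\E^{\eptau\uu}=\opId+O(\eptau)$ in $\ttCov$. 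A Taylor expansion to second order in $\sqrt\eptau$ gives
\[
V_j^{\mathrm{new}}(h_j)-V_j^{\mathrm{old}}(h_j)\;=\;\E^{-\eptau\Kin}\Bigl[\eptau\bigl(\uu-\tfrac12 h_j^2\bigr)+\cR_j(h_j)\Bigr],
\]
with an operator-valued remainder satisfying $\|\cR_j(h_j)\|\le C\eptau^{3/2}(1+\|h_j\|_\infty^3)$.

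The crucial point is that with $u_\xx=\tfrac12\WW_{\xx,\xx}$, the leading correction $\uu-\tfrac12 h_j^2$ has vanishing Gaussian expectation, since $\hexpec{h_{j,\xx}^2}_{h_j}=\WW_{\xx,\xx}=2u_\xx$. To exploit this systematically I would telescope $\prod_{k=1}^N\ttCov-\prod_{k=1}^N\tCov$ jointly in the walk index $k\in\{1,\dots,N\}$ from the permanent and the time-slice index $m\in\{1,\dots,\ntau\}$, expressing it as $\sum_{k,m}\cT_{k,m}(h)$, where each $\cT_{k,m}$ has exactly one anomalous vertex $V_m^{\mathrm{new}}-V_m^{\mathrm{old}}$ at position $m$ of the $k$-th walk and ordinary vertex factors elsewhere (of type $V^{\mathrm{new}}$ or $V^{\mathrm{old}}$, according to whether the slice/walk has already been replaced along the telescoping path).

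For each fixed $\cT_{k,m}$ I would then carry out the $h_m$-integration explicitly; the $h_j$-measures for $j\ne m$ factor out. Applying Gaussian integration by parts (Lemma \ref{intbypartslemma}) to $-\tfrac{\eptau}{2}h_{m,\xx}^2$, the diagonal Wick contraction gives $-\eptau u_\xx$, which cancels exactly against the $\eptau\uu$ coming from the same anomalous vertex. What survive are the off-diagonal IBP contributions, in which the two derivatives $\partial/\partial h_{m,\yy}$ act on the slice-$m$ vertex factors of the other $N-1$ walks. Each such derivative brings down a factor $\pm\I\sqrt\eptau$ from differentiating $\E^{\pm\I\sqrt\eptau h_{m,\yy}}$ or $(\opId-\I\sqrt\eptau h_{m,\yy})^{-1}$, so the surviving IBP contribution to $\cT_{k,m}$ is $O(\eptau^2)$, while the Taylor remainder $\cR_m$ yields a further $O(\eptau^{3/2})$ after taking Gaussian moments. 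The uniform-in-$h$ bounds of Lemmas \ref{Quadlemma} and \ref{ttQuadlemma} ensure that the remaining covariance-type factors in $\cT_{k,m}$ are bounded independently of $h$, so summing over the $N\ntau$ sub-terms yields $|\Delta_\ntau|\le CN\ntau\eptau^{3/2}=O(\sqrt\eptau)\to 0$.

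The main obstacle I anticipate is keeping the integration-by-parts cascade under control: derivatives $\partial/\partial h_m$ can in principle strike vertex factors that still contain $h_m$, producing further $h_m$-dependent terms. The saving feature is that $h_m$ occurs in these vertices only exponentially (as $\E^{\I\sqrt\eptau h_m}$) or rationally (as $(\opId-\I\sqrt\eptau h_m)^{-1}$), so each additional derivative costs one more factor of $\sqrt\eptau$, which truncates the cascade at the order needed for the $O(\sqrt\eptau)$ bound. Making the estimates uniform in $\ntau$ is precisely where the uniform-in-$h$ covariance estimates of Section \ref{Analysis1} become indispensable: without them, the iterated IBP could produce contributions whose bounds grow with $\ntau$ and destroy the na\"\i ve power count.
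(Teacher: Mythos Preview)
Your proposal is correct and follows essentially the same route as the paper. The only organizational differences are: (i) instead of telescoping vertex-by-vertex in the product for $\ttCov_{0,\ntau}-\tCov_{0,\ntau}$, the paper uses the resolvent identity $\ttCov-\tCov=\ttCov\,(\tQuad-\ttQuad)\,\tCov$ (which is the same telescope written at the operator level, producing the sum over time slices $\itau$ automatically because $\tQuad-\ttQuad$ is diagonal); and (ii) instead of Taylor-expanding the difference to extract $-\tfrac{\eptau}{2}h^2$ and then integrating by parts on $h^2$, the paper keeps the exact form $\Diff_\eptau(h)=\E^{-\I\sqrt\eptau h}-1+\I\sqrt\eptau h=\I\sqrt\eptau\,h\,\Theta_\eptau(h)$ and integrates by parts once on the factor $h$, the self-contraction of $\partial_h\Theta_\eptau$ producing the $-\tfrac{\eptau}{2}\WW_{\xx,\xx}$ that cancels $\eptau\uu_\xx$ (this is packaged as Lemma~\ref{intbypartslemma}). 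Either way, the mechanism is the one you identified: the choice $\uu_\xx=\tfrac12\WW_{\xx,\xx}$ kills the $O(\eptau)$ term, the remainder is $O(\eptau^{3/2})$ per time slice, and the uniform-in-$h$ bounds on $\tCov$ and $\ttCov$ from Section~\ref{Analysis1} control the surrounding factors so that the sum over $\ntau$ slices still vanishes.
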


\begin{proof}
By \Ref{Lieprod}, and Lemma \ref{tildeZrep},  $\tilZ^{(N,\beta,\Xspace,\ntau)} (\opH, \opF)  $ converges to $ \canZ^{(N,\beta,\Xspace)} (\opH, \opF) $ as $\ntau \to \infty$, so it suffices to prove 
\beq
\ttilZ^{(N,\beta,\Xspace,\ntau)} 
-
\tilZ^{(N,\beta,\Xspace,\ntau)} (\opH, \opId) 
\gtoas{\ntau \to \infty} 0 \; .
\eeq
Overview of the proof. ---  The permanent $\Prm{N}{\Xspace}$ contains a product over $n \in \{ 1, \ldots, N\}$ of $\tCov (h)_{(0,\xx_n), (\ntau, \xx_{\pi(n)})}$ in $\tilZ$ and a product of $\ttCov (h)_{(0,\xx_n), (\ntau, \xx_{\pi(n)})}$ in $\ttilZ$. By the discrete product rule and the resolvent formula, a difference 
\beq
(\ttQuad (h,\uu) - \tQuad (h) )_{\jtau,\jtau'}
=
\delta_{\jtau,\jtau'} \; 
\left(
\opId - \I \sqrt{\eptau} h_\jtau  - \E^{-\I \sqrt{\eptau} h_\jtau } - \frac{\eptau}{2} \uu
\right) 
+ O (\eptau^{\frac32})
\eeq
will be made explicit. This term is of order $\eptau$ for any $\uu$ (e.g.\ for $\uu =0$), but because a sum over $\itau \in \{ 0, \ldots, \ntau\}$ arises as well and  $\eptau \ntau = \beta$, a straightforward estimate only  implies that the difference of partition functions is finite, but not that it vanishes. To get a better estimate, integrate by parts using Lemma \ref{intbypartslemma}, with a suitable (real) choice of $\uu$. Because $\Kin$ generates a stochastic process and $\uu$ is real, $\tKin$ also generates a stochastic process. Thus the uniform bounds \Ref{tCovbound} and \Ref{ttCovbound} hold, and they imply that the integral of the difference over $h$ is bounded by  $O(\eptau^{\frac32})$, which, even after multiplication with $\ntau$, still vanishes as $\eptau \to 0$. 

Details of the proof. --- The difference of partition functions is
\beq
\ttilZ^{(N,\beta,\Xspace,\ntau)}  
-
\tilZ^{(N,\beta,\Xspace,\ntau)} (\opH, \opId) 
=
\sfrac{1}{N!} \; \hexpec{D(h)}
\eeq
with
\beq
\begin{split}
D(h)
&=
\Prm{N}{\Xspace}  \ttCov (h,\uu)_{0,\ntau}
-
\Prm{N}{\Xspace}  \tCov (h)_{0,\ntau}
\\
&=
\sum_{\pi \in \scS_N} \int_{\xy_1, . . , \xy_N} 
\left[
\pli_{n=1}^N \ttCov (h,\uu)_{(0,\xy_n), (\ntau, \xy_{\pi(n)})}
-
\pli_{n=1}^N \tCov (h)_{(0,\xy_n), (\ntau, \xy_{\pi(n)})}
\right]
\\
&=
\sum_{\pi \in \scS_N} \int_{\xy_1, . . , \xy_N} 
\sum_{m=1}^N
\pli_{n<m} \ttCov (h,\uu)_{(0,\xy_n), (\ntau, \xy_{\pi(n)})}
\; 
\pli_{n>m} \tCov (h)_{(0,\xy_n), (\ntau, \xy_{\pi(n)})}
\\
&\mkern120mu
\left[
\ttCov (h,\uu) 
-
\tCov (h)\right]_{(0,\xy_m), (\ntau, \xy_{\pi(m)})}
\end{split}
\eeq
In the last equality, the discrete product rule was used. Now 
\beq\label{resolventQ}
\begin{split}
\ttCov (h,\uu)
-
\tCov (h)
=
\ttCov (h,\uu) \; 
\left[ \tQuad (h, \uu) - \ttQuad (h) \right] \; 
\tCov(h)
\end{split}
\eeq
The difference of $\Quad$'s is a diagonal matrix, so  
\beq
\left( \ttCov (h,\uu)
-
\tCov (h,\uu) \right)_{(0,\xy_m), (\ntau, \xy_{\pi(m)})}
=
\sum_{\itau = 0}^\ntau
\int_\xx
\ttCov (h)_{(\jtau,\xy_m),(\itau,\xx)} \; 
\tilde \Delta_{\itau,\xx} \;
\tCov (h)_{(\itau,\xx),(\jtau',\xy_{\pi(m)})} 
\eeq
with
\beq
\tilde \Delta_{\itau,\xx}
=
\E^{-\I \sqrt{\eptau} h_{\itau,\xx}} 
-
(\opId - \I \sqrt{\eptau} h_{\itau,\xx}) \;  \E^{-\eptau \uu_\xx} \; . 
\eeq
Set
\beq
\begin{split}
F(h) 
=&
\pli_{n<m} \tCov (h)_{(0,\xy_n), (\ntau, \xy_{\pi(n)})}
\; 
\pli_{n>m} \ttCov (h,\uu)_{(0,\xy_n), (\ntau, \xy_{\pi(n)})}
\\
&\tCov (h)_{(\jtau,\xy_m),(\itau,\xx)} 
\;
\ttCov (h,\uu)_{(\itau,\xx),(\jtau',\xy_{\pi(m)})} \; 
\end{split}
\eeq
($F$ depends on $\eptau$ and on $ \pi, \itau ,\xx, \xy_1, \ldots, \xy_N$, etc.).
Then 
\beq
\begin{split}
\sfrac{1}{N!} \; \hexpec{D(h)}
&=
\sfrac{1}{N!} \; \sum_{\pi \in \scS_N} \int_{\xy_1, . . , \xy_N} 
\sum_{\itau = 0}^\ntau
\int_\xx
\hexpec{
\tilde \Delta_{\itau,\xx} \; F(h)
}
\end{split}
\eeq
The sum over $\itau$ is the one mentioned in the overview. All other summations run over index sets that are independent of $\ntau$. To show that this expression vanishes in the limit $\ntau \to \infty$, it will suffice to use the bound
\beq
\abs{\sfrac{1}{N!} \; \hexpec{D(h)}}
\le
|\Xspace|^N\; (\ntau + 1) \; 
\max\limits_{\pi, \itau\atop \xy_1, \ldots \xy_N}
\abs{
\sum_{\xx \in \Xspace} 
\hexpec{
\tilde \Delta_{\itau,\xx} \; F(h)}
} \; .
\eeq
By Taylor expansion,
\beq\label{TaylorQ}
\tilde \Delta_{\itau,\xx}
=
\Diff_\eptau (h_{\itau,\xx}) + 
\eptau \uu_\xx +
\rho_\eptau
\eeq
with  $\Diff_\eptau $ given in \Ref{Diffeptau} and 
\beq
\abs{\rho_\eptau}
\le
\frac{\uu_\xx^2}{2} \; \eptau^2 + \eptau^{\frac32} \; \abs{\uu_\xx h_{\itau,\xx}} \; .
\eeq
By Lemma \ref{ttQuadlemma} (a) and Lemma \ref{tttQuadlemma} (a), the function $F$ is analytic in $h$ in a neighbourhood of $\bR_\XTspace$, and by  \Ref{tCovbound} and \Ref{ttCovbound}, 
\beq
\abs{F(h)} 
\le 
F(0) \; .
\eeq
Let $F^{\rm max}  =\max F(0)$, where the maximum is taken over $ \pi, \itau ,\xx, \xy_1, \ldots, \xy_N$. By Lemma \ref{ttQuadlemma} (c), $F^{\rm max}$ is bounded uniformly in $\ntau$. The Gaussian measure $\dd\mu_\WWW (h)$is normalized, its covariance satisfies $\WWW_{\jtau,\jtau'} = \delta_{\jtau,\jtau'} \; \WW$, and $\WW$ is  independent of $\eptau$. Thus, for all $r \ge 0$ there is $K_r > 0$ so that for all $\ntau$ and all $\itau,\xx$
\beq
\hexpec{ |h_{\itau,\xx}|^r} \le K_r \; .
\eeq
(The constant $K_r$ depends on $\WW$; $K_0 =1$ by normalization of $\mu_\WWW $.)
\intremark{Assume a local interaction $\WW$, where the integral factorizes into individual integrals over the $h_{\jtau, \xx}$. Then for even exponents $r$, this integral can be rewritten in the standard way with source terms and Laplacians, and one gets a Hafnian, i.e.\ one can expect the $r$-dependence to be $2^{-r} \sqrt{r!}^{-1}$ times a $\WW$-dependent constant to the power $r$. For odd exponents, use a Schwarz inequality to get essentially the same. }%
Thus
\beq
(\ntau + 1)
\abs{
\sum_{\xx \in \Xspace} 
\hexpec{
\rho_\eptau \; F(h)}
}
\le
F^{\rm max} \; (\ntau+1)\;
\left(
\frac{\uu_\xx^2}{2} \; \eptau^2 + \eptau^{\frac32} \; \abs{\uu_\xx} \; K_1
\right) 
\eeq
so that this contribution vanishes in the limit $\ntau \to \infty$ (since $\eptau = \frac\beta\ntau$). 
Because 
\beq
\sfrac{\partial}{\partial h_{\itau,\xx}}\;
\tCov (h)_{(\jtau,\xy),(\jtau',\xy')}
=
\I \; \sqrt{\eptau}\;
\int_{\xx}
\tCov (h)_{(\jtau,\xy),(\itau,\xx)} \; 
\E^{-\I \sqrt{\eptau} \; h_{\itau,\xx}}\; 
\tCov (h)_{(\itau,\xx),(\jtau',\xy')}
\eeq
and
\beq
\sfrac{\partial}{\partial h_{\itau,\xx}}\;
\ttCov (h)_{(\jtau,\xy),(\jtau',\xy')}
=
\I \; \sqrt{\eptau}\; 
\int_{\xx}
\ttCov (h)_{(\jtau,\xy),(\itau,\xx)} \; 
\ttCov (h)_{(\itau,\xx),(\jtau',\xy')}
\eeq
it follows, again from \Ref{tCovbound} and \Ref{ttCovbound} that $\hexpec{\abs{\frac{\partial F}{\partial h_{\itau,\xx}}}} < \infty $ and that \Ref{117} and \Ref{118} hold. Thus Lemma \ref{intbypartslemma} applies. 
\beq
(\ntau + 1) \; 
\max\limits_{\pi, \itau\atop \xy_1, \ldots \xy_N}
\abs{
\sum_{\xx \in \Xspace} 
\hexpec{
\Diff_\eptau (h_{\itau,\xx}) \; F(h)}
} \; .
\le 
\const \eptau^{\frac12}
\gtoas{\eptau \to 0} 0 \; .
\eeq
\end{proof}

\subsection{Recovering the $\aphi$-integral}\label{recoverasec}
Under the stronger condition that the real part of the quadratic form defined by $\ttQuad (0,\uu)$ is strictly positive, one can now go backwards and rewrite the permanent as a Gaussian integral over $\aphi$, up to an additional $h$-dependent factor that arises from the determinant of $\ttQuad$. Because $\ttQuad$ is upper triangular in the time slice index $\jtau$ and because the diagonal part is diagonal in $\xx$ as well, this determinant is
\beq\label{ttQuaddet}
\begin{split}
\det \ttQuad (h, \uu) 
&=
\pli_{\jtau, \xx} 
(1 - \I \sqrt{\eptau} \; h_{\jtau,\xx}) \; \E^{-\eptau \uu_\xx}
\end{split}
\eeq
Assume $\tKin - \uu > 0$ and recall the definition of $\DD \aphi$ in \Ref{DDadef}. By Lemma \ref{ttQuadlemma} (b), $\ttQuad(h,\uu) + \ttQuad(h,\uu)^\dagger > 0$, thus the Gaussian integral given by $\ttQuad$ is absolutely convergent and 
\beq
\int \DD \aphi \; \E^{-\bili{\bar\aphi}{\ttQuad(h,\uu) \aphi}_\XTspace}
=
\det \ttQuad (h, \uu) ^{-1}
\eeq
(see Lemma \ref{gauzz}). 
\intremark{%
so the normalized Gaussian measure is
\beq
\dd\mu_{\ttQuad(h,\uu)} (\aphi)
=
\det \ttQuad (h, \uu) \; \E^{-\bili{\bar\aphi}{\ttQuad(h,\uu) \aphi}_\XTspace} \; \DD \aphi
\eeq
The permanent formula holds for the normalized integral, i.e.\
\beq
\int
\dd\mu_{\ttQuad(h,\uu)} (\aphi)
\;
{\bili{\bar\aphi_\ntau}{\aphi_0}_\Xspace}^N
=
\Prm{N}{\Xspace} \ttCov(h,\uu)_{0,\ntau}
\eeq
Thus the unnormalized Gaussian integral still contains another determinant as a factor. This is important because the determinant of $\ttQuad$ depends on $h$ and $\uu$, hence will change the $h$-integration.
}
By (\ref{Quadperm}), 
\beq\label{unnormperm}
\Prm{N}{\Xspace} \ttCov(h,\uu)_{0,\ntau}
=
\det \ttQuad (h, \uu) \; 
\int \E^{-\bili{\bar\aphi}{\ttQuad(h,\uu) \aphi}_\XTspace} \; 
{\bili{\bar\aphi_\ntau}{\aphi_0}_\Xspace}^N
\;
\DD \aphi
\eeq

\begin{lemma}\label{planB}
Let
\beq
B(h,\uu)
=
\frac{\E^{-\I \sqrt{\eptau} \bili{1}{h}_\XTspace}}{\det \ttQuad(h,\uu)}
\eeq
Under the hypotheses of Lemma  \ref{ttilZlemma},
\beq
\hexpec{
B (h,\uu) \; 
\frac{1}{N!}
\;
\Prm{N}{\Xspace} \ttCov (h,\uu)_{0,\ntau}
}
\gtoas{\ntau \to \infty}
\tilZ^{(N,\beta,\Xspace)}
\eeq
\end{lemma}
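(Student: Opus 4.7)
The plan is to reverse the order of the $h$- and $\aphi$-integrations that originally produced \Ref{tilZshort}. Substituting \Ref{unnormperm} for $\sfrac{1}{N!}\Prm{N}{\Xspace}\ttCov(h,\uu)_{0,\ntau}$, the factor $\det\ttQuad(h,\uu)$ exactly cancels the denominator of $B(h,\uu)$, giving
\beast
\hexpec{B(h,\uu)\,\sfrac{1}{N!}\,\Prm{N}{\Xspace}\ttCov(h,\uu)_{0,\ntau}}
=
\int\dd\mu_{\WWW}(h)\,\E^{-\I\sqrt{\eptau}\bili{1}{h}_\XTspace}\int\DD\aphi\;\E^{-\bili{\bar\aphi}{\ttQuad(h,\uu)\aphi}_\XTspace}\;\sfrac{\bili{\bar\aphi_\ntau}{\aphi_0}_\Xspace^N}{N!}.
\eeast
By Lemma~\ref{ttQuadlemma}(b), $\Re\bili{\bar\aphi}{\ttQuad(h,\uu)\aphi}_\XTspace$ is bounded below by a strictly positive quadratic form that is independent of $h$, so the inner $\aphi$-integral converges absolutely with an $h$-uniform bound. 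Together with normalization of $\dd\mu_{\WWW}$, this justifies swapping the two integrations by Fubini.

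After the swap the total $h$-exponent is purely imaginary and linear in $h$. Decomposing $\ttQuad(h,\uu)=\ttQuad(0,\uu)-\I\sqrt{\eptau}\cH(h)$ with $\cH(h)_{\jtau,\jtau'}=\delta_{\jtau,\jtau'}\,h_\jtau\,\E^{-\eptau\uu}$, and combining with the explicit phase, the $h$-dependent part of the exponent becomes $\I\sqrt{\eptau}\sum_{\jtau,\xx}h_{\jtau,\xx}\bigl(\E^{-\eptau\uu_\xx}|\aphi_{\jtau,\xx}|^2-1\bigr)$. The product Gaussian $\dd\mu_{\WWW}=\prod_\jtau\dd\mu_\WW(h_\jtau)$ then produces the real weight $\exp\bigl(-\sfrac{\eptau}{2}\sum_\jtau\bili{g_\jtau(\aphi)}{\WW g_\jtau(\aphi)}_\Xspace\bigr)$ with $g_{\jtau,\xx}(\aphi)=\E^{-\eptau\uu_\xx}|\aphi_{\jtau,\xx}|^2-1$. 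Expanding the square and using translation invariance $\int_\xy\WW_{\xx,\xy}=\hat\WW(0)$, this produces the quartic interaction of \Ref{a4action}, a mass shift proportional to $\hat\WW(0)$, and an $\eptau$-independent constant, plus $O(\eptau)$ remainders from the Taylor expansion of $\E^{-\eptau\uu}$.

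Combining with the $\uu=\sfrac12\WW(0)$ diagonal of $\ttQuad(0,\uu)$, the $\aphi$-integrand assembles (up to $O(\eptau)$ corrections in the action) into the integrand of \Ref{a4integral} multiplied by the explicit prefactors $\E^{\frac{\beta}{2}\WW(0)N}\E^{\frac{\beta}{2}|\Xspace|(\WW(0)-\hat\WW(0))}$ of Theorem~\ref{a4thm}. Theorem~\ref{a4thm} then delivers the limit $\canZ^{(N,\beta,\Xspace)}=\tilZ^{(N,\beta,\Xspace)}$. The main obstacle is controlling the $O(\eptau)$ action corrections: individually they are small, but they are accumulated over $\ntau+1$ time slices, and $\aphi$ is not bounded inside the integral. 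I would handle them exactly as in the proof of Lemma \ref{ttilZlemma}: invert the substitution back via \Ref{unnormperm} to rewrite each remainder as a perturbation of the permanent $\Prm{N}{\Xspace}\ttCov(h,\uu)_{0,\ntau}$, apply the uniform covariance bound \Ref{ttCovbound} of Lemma~\ref{ttQuadlemma}(c) to estimate it uniformly in $h$, and perform a Gaussian integration by parts in $h$ via Lemma~\ref{intbypartslemma} to gain an additional factor $\eptau^{1/2}$, which combined with $\eptau\cdot\ntau=\beta$ yields a net $O(\eptau^{1/2})$ bound that vanishes as $\ntau\to\infty$.
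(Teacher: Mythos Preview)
Your argument is circular. In this paper, Theorem~\ref{a4thm} is \emph{proved} via Lemma~\ref{planB}: see Section~\ref{recoverasec}, where \Ref{unnormperm} and Lemma~\ref{planB} together yield \Ref{johey}, which is then massaged into Theorem~\ref{a4thmj} and hence Theorem~\ref{a4thm}. So you cannot invoke Theorem~\ref{a4thm} as the mechanism that ``delivers the limit'' for Lemma~\ref{planB}. Once you remove that appeal, what remains of your plan is precisely the remainder-handling you sketch at the end --- going back to the permanent and using the integration-by-parts machinery of Lemma~\ref{ttilZlemma} --- and that is the whole proof, not an afterthought. A secondary point: your very first step uses \Ref{unnormperm}, which in the paper is derived under the stronger assumption $\tKin-\uu>0$ (needed for absolute convergence of the $\aphi$-integral), whereas Lemma~\ref{planB} is stated only under the hypotheses of Lemma~\ref{ttilZlemma}.

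The paper's actual proof avoids the detour through the $\aphi$-integral entirely. It stays in the $h$-representation and compares $\hexpec{B(h,\uu)\,\Prm{N}{\Xspace}\ttCov(h,\uu)_{0,\ntau}}$ directly with $\ttilZ^{(N,\beta,\Xspace,\ntau)}=\hexpec{\Prm{N}{\Xspace}\ttCov(h,\uu)_{0,\ntau}}$, which already converges to $\tilZ^{(N,\beta,\Xspace)}$ by Lemma~\ref{ttilZlemma}. Using \Ref{ttQuaddet}, the factor $B(h,\uu)$ is a product over $(\jtau,\xx)$ of $\E^{-\I\sqrt{\eptau}h_{\jtau,\xx}}\big/[(1-\I\sqrt{\eptau}h_{\jtau,\xx})\E^{-\eptau\uu_\xx}]$, so $B(h,\uu)-1$ telescopes via the discrete product rule into a sum over $(\itau,\xx)$ of terms, each carrying a factor $\Diff_\eptau(h_{\itau,\xx})+\eptau\uu_\xx$ plus an $O(\eptau^{3/2})$ remainder. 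Lemma~\ref{intbypartslemma} applied to $\Diff_\eptau(h_{\itau,\xx})+\eptau\uu_\xx$ then gains the extra $\eptau^{1/2}$ needed to beat the sum over $\itau$, exactly as in the proof of Lemma~\ref{ttilZlemma}. No $\aphi$-integral is touched.
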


\begin{proof}
The proof is an easy generalization of that of Lemma  \ref{ttilZlemma}, the only difference being the presence of the factor 
\beq
B(h,\uu)
=
\pli_{\jtau,\xx} \frac{\E^{-\I \sqrt{\eptau} h_{\jtau,\xx}}}{(1-\I \sqrt{\eptau} h_{\jtau,\xx}) \E^{-\eptau \uu_\xx}} \; .
\eeq
This just gives additional contributions to the discrete product rule, namely
\beq
B(h,\uu)-1
=
\sum_{\itau,\xx} 
\left(
\pli_{\jtau < \itau} \frac{\E^{\eptau \uu_\xx}}{1-\I \sqrt{\eptau} h_{\jtau,\xx}}
\right)\;
\frac{
\Diff_\eptau (h_{\itau,\xx}) + \eptau \uu_\xx + \rho (h_{\itau,\xx}, \uu_\xx)
}{1-\I \sqrt{\eptau} h_{\itau,\xx}}
\eeq
where
\beq
 \rho (h_{\itau,\xx}, \uu_\xx)
 =
(\E^{-\I \sqrt{\eptau} h_{\jtau,\xx}} -1) (\E^{\eptau \uu_\xx}-1)
+
\E^{\eptau \uu_\xx}-1 -\eptau \uu_\xx
\eeq
is analytic in $h_{\itau,\xx}$ and of order $\eptau^{\frac{3}{2}} h_{\itau,\xx} + \eptau^2 \uu_\xx^2$.
Lemma \ref{intbypartslemma} applies to the factor $\Diff_\eptau (h_{\itau,\xx}) + \eptau \uu_\xx$ in the same way as in the proof  of Lemma  \ref{ttilZlemma}. Again, the linear growth in $\ntau$ of the sum over $\itau$ is suppressed by the extra powers of $\eptau$ obtained by the integration by parts.  
\end{proof}

By (\ref{unnormperm}) and Lemma \ref{planB}, 
\beq\label{johey}
\tilZ^{(N,\beta,\Xspace)}
=
\lim_{\ntau \to \infty} 
\hexpec{
\E^{-\I \sqrt{\eptau} \bili{1}{h}_\XTspace}
\int \E^{-\bili{\bar\aphi}{\ttQuad(h,\uu) \aphi}_\XTspace} \; 
\sfrac{1}{N!}\; 
{\bili{\bar\aphi_\ntau}{\aphi_0}_\Xspace}^N
\;
\DD \aphi
}
\eeq
By (\ref{ttQuaddef}) and Fubini's theorem, $\tilZ^{(N,\beta,\Xspace)}
=
\lim_{\ntau \to \infty} 
\ZZ{\ntau}^{(N,\beta,\Xspace)}
$
with
\beq\label{joheyy}
\ZZ{\ntau}^{(N,\beta,\Xspace)}
=
\E^{-\eptau \tilde\uu N} 
\int \DD \aphi \; 
\E^{-\bili{\bar\aphi}{\ttQuad(0,\uu) \aphi}_\XTspace} \;
\sfrac{1}{N!}\; 
{\bili{\bar\aphi_\ntau}{\aphi_0}_\Xspace}^N
\hexpec{
\E^{-\I \sqrt{\eptau} \bili{1+\bar\aphi\aphi \E^{-\eptau \uu}}{h}_\XTspace}
}
\eeq
The prefactor involving $\tilde \uu = |\Xspace|^{-1} \int_\xx \uu_\xx$ is included for later convenience. It makes only an inessential change because its limit as $\ntau \to \infty$ is $1$. To remove the $\uu$-dependence from the term coupling $h$ to $\aphi$, change variables in the integral to $\aphi'_{\jtau,\xx} = \E^{-\frac{\eptau}{2} \uu_\xx} \aphi_{\jtau,\xx}$. (This also cancels the $\uu$-dependence of $\ttQuad(0,\uu)$
in the quadratic form.)
The Gaussian expectation over $h$ can now be performed and gives (dropping the primes from the notation)
\beq\label{joheyyy}
\ZZ{\ntau}^{(N,\beta,\Xspace)}
=
\E^{\ntau\eptau \int_\xx \uu_\xx-\eptau \tilde\uu N} \; 
\int \DD \aphi \; 
\E^{-\bili{\bar\aphi}{\Queps \aphi}_\XTspace 
- \frac12 \eptau \bili{(1+\bar\aphi\aphi}{\WWW (1+\bar\aphi\aphi)}_\XTspace
} \;
\sfrac{1}{N!}\; 
{\bili{\bar\aphi_\ntau}{\E^{\eptau \uu} \aphi_0}_\Xspace}^N
\eeq
with 
\beq\label{Quepsdef}
\Queps_{\jtau,\jtau'}
=
\delta_{\jtau,\jtau'} \opId 
-
\delta_{\jtau,\jtau'} \E^{- \eptau \Kin_\eptau}
\; ,
\qquad
\E^{- \eptau \Kin_\eptau}
=
\E^{\frac{\eptau\uu}{2}} \; \E^{-\eptau \tKin} \; \E^{\frac{\eptau\uu}{2}}  \; .
\eeq
Expanding the quadratic form of the interaction term then gives the integral representation with a positive interaction term, as follows.

\begin{satz}\label{a4thmj}
Let $\Queps$ be given by (\ref{Quepsdef}), with $\tKin=\Kin$ and $\uu_\xx = \frac12 \WW_{\xx\xx}$. If $\tKin - \uu > 0$, then
\beq
\ZZ{\ntau}^{(N,\beta,\Xspace)}
=
\E^{\frac{\beta}{2} \tilde v -\eptau \tilde\uu N} \; 
\int \DD \aphi \; 
\E^{-\bili{\bar\aphi}{(\Queps + \eptau \mm) \aphi}_\XTspace 
- \frac12 \eptau \bili{\bar\aphi\aphi}{\WWW \bar\aphi\aphi}_\XTspace
} \;
\sfrac{1}{N!}\; 
{\bili{\bar\aphi_\ntau}{\E^{\eptau \uu} \aphi_0}_\Xspace}^N
\eeq
with $\tilde v =\int_\xx (\WW_{\xx,\xx} - \int_\xy \WW_{\xx,\xy})$ and $\mm_\xx = \int_\xy \WW_{\xx,\xy}$. 
\end{satz}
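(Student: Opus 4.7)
The plan is to derive the stated formula by direct algebraic reorganization starting from the representation \Ref{joheyyy}, which already exhibits $\ZZ{\ntau}^{(N,\beta,\Xspace)}$ as an absolutely convergent $\aphi$-integral with an explicit prefactor; the hypothesis $\tKin - \uu > 0$ guarantees that absolute convergence via Lemma \ref{ttQuadlemma}(b), so the only remaining task is to reorganize the output of the Gaussian $h$-integration.

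First I would expand the bilinear form produced by integrating out the auxiliary field,
\beq
\bili{(1+\bar\aphi\aphi)}{\WWW(1+\bar\aphi\aphi)}_\XTspace
=
\bili{1}{\WWW 1}_\XTspace
+ 2\, \bili{1}{\WWW(\bar\aphi\aphi)}_\XTspace
+ \bili{(\bar\aphi\aphi)}{\WWW(\bar\aphi\aphi)}_\XTspace,
\eeq
and then evaluate each of the three pieces using the time-local structure $\WWW_{\jtau,\jtau'} = \delta_{\jtau,\jtau'}\WW$, together with the fact that $\WWW$ is supported on the $\ntau$ time slices $\jtau \ge 1$ on which the auxiliary field lives (since $h_0 = 0$). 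The $\aphi$-independent piece then evaluates to $\bili{1}{\WWW 1}_\XTspace = \ntau \int_{\xx,\xy}\WW_{\xx,\xy}$, so that, using $\ntau\eptau = \beta$, the coefficient $\frac{\eptau}{2}$ produces the constant $-\frac{\beta}{2}\int_{\xx,\xy}\WW_{\xx,\xy}$, which pulls out of the integral as an explicit prefactor. The cross term, after using symmetry of $\WW$ to identify $\int_\xx \WW_{\xx,\xy} = \mm_\xy$, becomes $\eptau\, \bili{\bar\aphi}{\mm\aphi}_\XTspace$ with $\mm$ regarded as the time-local multiplication operator $\mm_{(\jtau,\xx),(\jtau',\xy)} = \delta_{\jtau,\jtau'}\delta_{\xx,\xy}\mm_\xx$, and this combines with $\bili{\bar\aphi}{\Queps\aphi}_\XTspace$ into the quadratic form $\bili{\bar\aphi}{(\Queps + \eptau\mm)\aphi}_\XTspace$ appearing in the claim. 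The quartic piece $\frac{\eptau}{2}\bili{(\bar\aphi\aphi)}{\WWW(\bar\aphi\aphi)}_\XTspace$ is already of the desired form and remains as the local interaction term.

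It then remains to assemble the overall prefactor. The factor $\E^{\beta\int_\xx\uu_\xx - \eptau\tilde\uu N}$ from \Ref{joheyyy}, together with $\uu_\xx = \frac{1}{2}\WW_{\xx,\xx}$, gives $\beta \int_\xx \uu_\xx = \frac{\beta}{2}\int_\xx \WW_{\xx,\xx}$; combining with $-\frac{\beta}{2}\int_{\xx,\xy}\WW_{\xx,\xy}$ from the constant piece above yields
\beq
\frac{\beta}{2}\int_\xx\WW_{\xx,\xx} - \frac{\beta}{2}\int_{\xx,\xy}\WW_{\xx,\xy}
=
\frac{\beta}{2}\int_\xx\Bigl(\WW_{\xx,\xx} - \int_\xy\WW_{\xx,\xy}\Bigr)
=
\frac{\beta}{2}\tilde v,
\eeq
so the combined prefactor is exactly $\E^{\frac{\beta}{2}\tilde v - \eptau\tilde\uu N}$, as claimed.

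No individual step is a real obstacle: the whole argument is essentially a one-line bookkeeping exercise. The only point deserving a moment's care is the boundary convention at $\jtau=0$: because $h_0 = 0$, the effective number of time slices contributing to $\bili{1}{\WWW 1}_\XTspace$ is exactly $\ntau$ (rather than $\ntau+1$), which produces the coefficient $\frac{\beta}{2}$ on the nose and thus leaves no residual $O(\eptau)$ term in the prefactor that would spoil the match with the claimed formula.
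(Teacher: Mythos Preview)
Your proposal is correct and follows exactly the same route as the paper: the paper itself derives \Ref{joheyyy} first and then simply states that ``expanding the quadratic form of the interaction term then gives the integral representation with a positive interaction term,'' which is precisely the bookkeeping you spell out. Your handling of the $\jtau=0$ boundary convention (the $h$-field living on $\ntau$ slices) is the right point to flag, and your prefactor computation matches the claimed $\E^{\frac{\beta}{2}\tilde v - \eptau\tilde\uu N}$.
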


In the translation invariant case where $\WW_{\xx,\xy} = \WW(\xx-\xy)$, $\tilde v = \WW(0) - \hat \WW (0)$, where the hat denotes the Fourier transform, i.e.\ $\hat \WW (0) = \int_\xx \WW (\xx)$, and $\mm_\xx = \hat \WW (0)$ is independent of $\xx$. Moreover, $\uu_\xx =\frac12 \WW (0) = \tilde \uu$, so  $\bili{\bar\aphi_\ntau}{\E^{\eptau \uu} \aphi_0}_\Xspace^N = \E^{\eptau \tilde \uu N}\bili{\bar\aphi_\ntau}{ \aphi_0}_\Xspace^N$, so that the prefactor involving $\tilde \uu$ gets cancelled. It is now convenient to use the notations $\aphi (\tau,\xx) = \aphi_{\jtau,\xx}$, relabelling to a discrete time index $\tau = \eptau \jtau= \frac{\beta}{\ntau} \jtau \in \eptau \{0, \ldots, \ntau\}$, 
\beq
(\del_\tau \aphi)(\tau,\xx) 
=
\frac{1}{\eptau} (\aphi (\tau +\eptau, \xx) - \aphi (\tau,\xx))
=
\frac{1}{\eptau} (\aphi_{\jtau+1,\xx} - \aphi_{\jtau,\xx})
\eeq
and (as already introduced) $\eptau\sum_\jtau = \int_\tau$. Rewriting all terms in this notation then gives Theorem \ref{a4thm}.

The integral representations involving the variants of the kinetic term in the action stated in Theorem \ref{a4thm} right after (\ref{31}) are obtained similarly, by using $\tttQuad$ and $\tttaQuad$ instead of $\ttQuad$. By Lemmas \ref{tttQuadlemma}, \ref{posplusim} and \ref{tttaQuadlemma},  the proofs that the limit $\ntau \to \infty$ exists and is unchanged for these modified actions are straightforward adaptations of the ones given for $\ttQuad$. 

\section{Remarks about the Grand-Canonical Ensemble}\label{grandsec}

Here I briefly outline the connection to the grand canonical ensemble. By definition (see (\ref{grancanZ})), the grand canonical partition function is the Laplace transform of $N \mapsto \canZ^{(N \beta, \Xspace)}$. 
The setup of the canonical ensemble has a few technical advantages, in particular, on a lattice with nonzero spacing, the Hamiltonian, as well as its kinetic and interaction parts, become bounded operators when restricted to the $N$-particle space, so that one can regard fixing particle number as a natural regularization for defining bosonic models mathematically. One can then also try to study the grand canonical ensemble based on results for the canonical one. In the following, I discuss a few significant points where the ensembles differ, and how things fit together. 

\subsection{The periodic boundary condition in time}
The integral over the $\aphi$ fields in the canonical ensemble does {\em not} have a periodic boundary condition in the Gaussian: there is no matrix element in $\Quad$ that couples $\jtau=\ntau$ and $\jtau=0$. The reason for this is explicit from \ref{sandprich} -- the projection $\Proj_N$ removes the exponential, replacing it by the $N$'th order term in its expansion. It is therefore plausible that the exponential, and hence the periodic boundary condition, will be reinstated by the sum over $N$ in (\ref{grancanZ}). But there are of course convergence issues when going beyond formal considerations. When (\ref{HSFI}) is inserted into (\ref{grancanZ}) and the sum over $N$ is formally exchanged with the limit and the integral, one gets 
\beq\label{periseri}
\sum_{N=0}^\infty \frac{\E^{N\beta\mu}}{N!} \; {\bili{\bar\aphi_\ntau}{\aphi_0}_\Xspace}^N
=
\E^{\E^{\beta\mu}\; \bili{\bar\aphi_\ntau}{\aphi_0}_\Xspace}
\eeq
This series converges for all values of $\beta\mu$, $\aphi_\ntau$ and $\aphi_0$. But the additional term created in the exponent spoils positivity of the real part of the quadratic form, hence leads to a divergent Gaussian integral, if $\mu > 0$. This is explained in more detail in Appendix \ref{Covarapp}.

The appearance of a factor $\E^{\beta\mu}$ only in the term coupling the time slices $\jtau$ and $0$ is also unusual. The standard combination $\Kin - \mu \opId$ can be obtained simply by rearranging the product 
\beq
\E^{\beta \mu N} \tilZ^{(N,\beta,\Xspace)}
=
\Tr_{\FockB} \left[
\E^{-\beta (\opH -\mu\opN)}\right]
\eeq
similarly to what is done in (\ref{grancanZ}), which changes $\Kin$ to $\Kin_\mu = \Kin - \mu \opId$ everywhere. That the sum over $N$ in (\ref{grancanZ}) converges for $\mu < 0$ and $\Kin \ge 0$ is then a simple consequence of the nonnegativity of the interaction term $\WW$. If $\WW > 0$, the condition $\mu < 0$ can be dropped. 

\subsection{The covariance} 
The covariance of the canonical ensemble is upper triangular because of the absence of a periodic boundary condition in time in the operator $\Quad$.
With the just described absorption of $\mu$ into the kinetic term $\Kin_\mu$, the summation over $N$ gives 
\beq\label{perisereins}
\sum_{N=0}^\infty \frac{1}{N!} \; {\bili{\bar\aphi_\ntau}{\aphi_0}_\Xspace}^N
=
\E^{\bili{\bar\aphi_\ntau}{\aphi_0}_\Xspace} 
\eeq
and (\ref{HSFI}) gets replaced by 
\beq\label{granHSFI}
\begin{split}
Z_g^{(N,\beta,\Xspace,\ntau)} 
&=
\int \dd \mu_{\WWW} (h) 
\int_{\bC^{\XTspace}} \DD \aphi \; 
\E^{-\bili{\bar\aphi}{\Kuad (h) \; \aphi}_\XTspace}\;
\end{split}
\eeq
with 
\beq\label{Kuaddef}
\Kuad (h)_{\jtau,\jtau'} = \Quad (h)_{\jtau,\jtau'}  + \delta_{\jtau,\ntau} \delta_{0,\jtau'} \opId \;.
\eeq 
(Here $\E^{-\eptau \Kin_\mu}$ appears in $\Quad$.) 
As shown in Appendix \ref{Covarapp}, the covariance is given for $h=0$ by the standard time-ordered Green function for free bosons, eq.\ (\ref{granolcov}).

\begin{satz}\label{grancanthm2}
Assume that $\Kin_\mu > 0$. Then the integral $(\Ref{granHSFI})$ is absolutely convergent. The grand-canonical covariance $\Gov (h) = \Kuad(h)^{-1}$ exists for all $h$, and has a norm bounded uniformly in $h$. It is analytic in $h$ if $|\Im h_{\jtau,\xx}| < \sqrt{\eptau} e_{min}$, where $e_{min}$ is the smallest eigenvalue of $\Kin_\mu$. If $\Kin_\mu$ generates a stochastic process, then for all $\jtau'\ge\jtau$, all $\xx,\xx'\in \Xspace$, and all $h$, the covariance $\Gov (h) = \Kuad(h)^{-1}$ for the grand canonical ensemble satisfies the bound 
\beq\label{grand-uniformed}
\forall h \in \bR^\XTspace: \quad
\abs{\Gov (h) _{(\jtau,\xx), (\jtau',\xx')}}
\le
\Gov (0) _{(\jtau,\xx), (\jtau',\xx')} \; .
\eeq
\end{satz}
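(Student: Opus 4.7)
The plan is to treat $\Kuad(h)$ as a rank-one perturbation of $\Quad(h)$ and to reduce every claim to the already-established properties of $\Cov(h)=\Quad(h)^{-1}$. Write $\Kuad(h) = \Quad(h) - E$ where $E$ is the block matrix whose only nonzero block is $E_{\ntau,0} = \opId$ (the sign is whichever makes the wrap-around contribution the standard bosonic one). The identity $(\Quad-E)\Gov = \opId$ gives $\Gov = \Cov + \Cov E\Gov$, and iterating yields the Neumann series
\begin{equation}
\Gov(h) \;=\; \Cov(h) \;+\; \sum_{n\ge 1} \bigl(\Cov(h)E\bigr)^n \Cov(h) .
\end{equation}
Using the upper-triangular form of $\Cov$ from \Ref{termNeum}, one checks that $(\Cov(h)E)^n$ has a single nonzero block, namely $\Cov(h)_{\jtau,\ntau}\,\bigl(\Cov(h)_{0,\ntau}\bigr)^{n-1}$ at position $(\jtau,0)$, so the whole expansion is geometric in the single operator $\Cov(h)_{0,\ntau}$ on $\bC^\Xspace$.

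The key estimate is $\|\Cov(h)_{0,\ntau}\|<1$, uniformly in real $h$. By \Ref{termNeum}, $\Cov(h)_{0,\ntau} = \prod_{k=1}^{\ntau}\E^{-\eptau \Kin_\mu}\E^{\I\sqrt{\eptau} h_k}$, and for real $h$ each factor $\E^{\I\sqrt{\eptau} h_k}$ is unitary, so
\begin{equation}
\|\Cov(h)_{0,\ntau}\|\;\le\; \|\E^{-\eptau\Kin_\mu}\|^{\ntau} \;=\; \E^{-\beta e_{min}} \;<\; 1 .
\end{equation}
This yields existence of $\Gov(h)$ and a uniform bound on its operator norm. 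For complex $h$ with $|\Im h_{\jtau,\xx}|<\sqrt{\eptau}\,e_{min}$ one has $\|\E^{\I\sqrt{\eptau} h_k}\|<\E^{\eptau e_{min}}$, so every factor above still has norm $<1$; the Neumann series converges uniformly on compact subsets of this tube, and since each entry of $\Cov(h)$ is entire analytic in $h$, so is $\Gov(h)$ on the tube. Summing the geometric series at $h=0$ gives $\Gov(0)_{\jtau,\jtau'} = \E^{-(\jtau'-\jtau)\eptau\Kin_\mu}\,(\opId-\E^{-\beta\Kin_\mu})^{-1}$ for $\jtau\le\jtau'$, the time-ordered free-boson Green function.

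For the entrywise bound \Ref{grand-uniformed}, assume $\Kin_\mu$ generates a stochastic process, so every matrix entry of $\E^{-\eptau\Kin_\mu}$ is nonnegative; then the entries of $\Cov(0)_{\jtau,\jtau'}$ and of $E$ are nonnegative, and each term $\bigl(\Cov(0)E\bigr)^n\Cov(0)$ expands into a sum of nonnegative products over intermediate space-time indices. By Lemma \ref{Quadlemma}, every entry of $\Cov(h)$ is bounded in modulus by the corresponding entry of $\Cov(0)$; applying this termwise and invoking the triangle inequality gives
\begin{equation}
\bigl|\Gov(h)_{(\jtau,\xx),(\jtau',\xx')}\bigr|
\;\le\; \sum_{n\ge 0}\bigl((\Cov(0)E)^n\Cov(0)\bigr)_{(\jtau,\xx),(\jtau',\xx')}
\;=\; \Gov(0)_{(\jtau,\xx),(\jtau',\xx')} .
\end{equation}

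Absolute convergence of \Ref{granHSFI} follows from strict positivity of the hermitian part of $\Kuad(h)$. The exact expression for $\Re\bili{\bar\aphi}{\Quad(h)\aphi}_\XTspace$ derived in the proof of Theorem \ref{HSFIthm} contains, besides manifestly nonnegative terms, the boundary contributions $\tfrac12\bili{\bar\aphi_0}{K\aphi_0}_\Xspace+\tfrac12\bili{\bar\aphi_\ntau}{K\aphi_\ntau}_\Xspace$ with $K=\E^{-\eptau\Kin_\mu}$. Passing from $\Quad$ to $\Kuad$ adds a single new term of the form $\pm\Re\bili{\bar\aphi_\ntau}{\aphi_0}_\Xspace$ whose absolute value is bounded by $\tfrac12\|\aphi_0\|^2+\tfrac12\|\aphi_\ntau\|^2$ via Cauchy--Schwarz; combining with the boundary terms above leaves a remainder $\tfrac12\bili{\bar\aphi_0}{(\opId-K)\aphi_0}_\Xspace+\tfrac12\bili{\bar\aphi_\ntau}{(\opId-K)\aphi_\ntau}_\Xspace$, strictly positive because $K<\opId$ when $\Kin_\mu>0$. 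The main obstacle, compared with the canonical case, is that the Neumann series for $\Gov$ does not terminate; its convergence --- and therefore the uniform bounds it yields --- rest precisely on the exponential smallness of $\|\Cov(h)_{0,\ntau}\|$, which comes from the strict positivity hypothesis $\Kin_\mu>0$ and cannot be dispensed with.
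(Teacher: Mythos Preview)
Your proof is correct and follows essentially the same route as the paper's. The paper invokes Lemma \ref{InversesLemma} from Appendix \ref{Covarapp}, which is precisely your resolvent/Neumann-series computation: the operator you call $\Cov(h)_{0,\ntau}$ is the paper's $\opP$, its norm bound $\E^{-\beta e_{min}}<1$ is the paper's $\norm{\opP}<1$, and the entrywise estimate \Ref{grand-uniformed} is obtained in both cases by expanding $(\opId-\opP)^{-1}$ as a geometric series and applying Lemma \ref{Quadlemma} termwise. Your treatment of absolute convergence of \Ref{granHSFI} via strict positivity of the hermitian part of $\Kuad(h)$ is actually more explicit than the paper's own proof, which asserts this without detail.
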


\begin{proof}
This proof is an application of Lemma \ref{InversesLemma}, and I will use the notations introduced in Appendix \ref{Covarapp}. Let $\opA_\jtau = \E^{-\eptau \Kin_\mu} \E^{\I \sqrt{\eptau} h_\jtau}$. Because $\Kin_\mu > 0$, 
$\norm{\opA_\jtau} = \norm{\E^{-\eptau \Kin_\mu}} < 1$ uniformly in $h$, and $\opA_0 = \opId$, hence 
$\opP= \opA_0 \ldots \opA_\ntau$ has norm $\norm{\opP} < 1$.
By Lemma \ref{InversesLemma}, the inverse $\Gov$ exists, is given by (\ref{Kinverse1}), and has norm bounded uniformly in $h\in \bR^\XTspace$ because the same holds for all factors appearing in (\ref{Kinverse1}). In particular, the inverse $(\opId - \opP)^{-1}$ is given by a convergent geometric series. Expanding out this series gives a linear combination of products of $\opA_\jtau$'s, with nonnegative coefficients. Thus the bound (\ref{grand-uniformed}) follows by straightforward adaptation of the proof of Lemma \ref{Quadlemma}. Analyticity in the $\eptau$-dependent neighbourhood of $\bR^\XTspace$ holds because the norm of 
$\E^{-\eptau \Kin} \E^{\I \sqrt{\eptau} h}$ remains strictly smaller than $1$ on that set. 
\end{proof}

Thus, all arguments based on the uniform bounds for the covariance, integration by parts, etc., have their analogues for the grand canonical ensemble, and similar theorems about convergence of integrals with modified actions hold, in particular, there is a representation with an $|\aphi|^4$ interaction term in the action. (I omit the statements and proofs here to avoid further repetition.)  As mentioned in Section \ref{resultssec} for the the canonical ensemble, positivity of the interaction, i.e.\ $\WW> 0$, implies that the integral (\ref{a4integral}) converges absolutely also beyond $\mu =0$ and it defines an analytic function of $\mu$. By the identity theorem, the continuation represents the partition function for all $\mu$. A similar statement can be proven for the grand canonical ensemble. 

\subsection{The determinant} 
A further interesting and striking difference between the ensembles is the role played by the (inverse of the) determinant arising from the Gaussian $\aphi$-integral. In the canonical case, the determinant is simply equal to one. In the grand canonical case, the determinant differs from one, and the expansion of $\ln \det \Kuad (h)$ in powers of $h$ creates the well-known expansion in terms of loops corresponding to virtual particle-antiparticle pairs with higher and higher order interactions. These virtual pair creation processes do not happen in the canonical ensemble because there is a restriction to fixed $N$. Thus the constraint of fixed particle number shows up explicitly as the determinant being one. 

Moreover, there is no backward-in-time propagation in the canonical ensemble because the covariance is upper triangular in time. By contrast, the grand canonical covariance $\Gov (h)$ is not upper triangular. It is simply the time ordered boson two-point function in the external field $h$, hence allows forward and backward propagation. All these features make explicit that fluctuations are different in the canonical and grand canonical ensemble. 

Finally, there is another, more technical point. The inverse of the determinant in the grand canonical ensemble does not factorize. The permanent in the canonical ensemble factorizes over cycles of permutations, and this makes a rather simple polymer expansion, in which the bounds on the covariances can be used directly, in analogy to the proofs given in Section \ref{Analysis2} possible \cite{toappear}. 

\section{The Effect of a Condensate in the Canonical Ensemble}\label{physsec}

In his famous work \cite{Bog}, Bogoliubov treated condensation of the weakly interating Bose gas by an ansatz in which the creation and annihilation operators of the kinetic-energy zero mode are replaced by commuting operators, and diagonalizing the resulting effective Hamiltonian by a canonical transformation, the Bogoliubov transformation. He was thus able to derive the linear spectrum of low-energy excitations of the condensate, which is known to be experimentally accurate. Even the elements of the Bogoliubov transformation could be identified in experiments \cite{Ketterle}. 

At sufficiently low temperatures $\beta^{-1}$, weakly interacting bosons are thus expected to condense into a state resembling the lowest-energy state of the free boson system. The latter is determined by the zero mode of the one-particle kinetic term which, in a translation-invariant system, is spatially constant. In the FIR, this corresponds to splitting the integration variable into a `condensate field' which is integrated over a low-dimensional subspace of field space, and all modes orthogonal to it. Condensation as a phase transition then corresponds to the condensate field acquiring a nonvanishing expectation value, which spontaneously breaks the $U(1)$ symmetry of particle number conservation. In the FIR, this is the symmetry of action and integration measure under $\aphi_{\tau,\xx} \mapsto \E^{\I \theta} \; \aphi_{\tau,\xx} $ (where $\theta$ is independent of $\tau$ and $\xx$). In the following, I do this field decomposition as a change of variables in the $\aphi$ integral, without, for the moment, making the assumption that the zero mode field is also constant in time (which is the usual zeroth order approximation for expansions). This decomposition can be applied irrespective of what ensemble is studied, but again, I will focus on the canonical ensemble here. It also makes only a technical difference which of the discretizations discussed above one uses; for convenience of presentation, I take the action with the time-local $|\aphi|^4$ term and the quadratic form $\tttaQuad$ for the kinetic term.

Assume that $\Kin$ is given by the discrete Laplacian. Then $\Kin$ has a nontrivial kernel, namely the constant functions on $\Xspace$. By (\ref{fixy}), $\Kin \ge 0$ can be changed to $\Kin + \hebeps \opId$ with $\hebeps > 0$, at the expense of having an explicit factor $\E^{\beta N \hebeps}$ in front of the partition function. Assume, moreover, that $\WW$ is strictly positive, $\WW > 0$. Then the integral (\ref{a4integral}) converges absolutely for any $\hebeps$, even $\hebeps < 0$, and consequently, the limit $\hebeps \to 0$ can be taken. It is thus possible to proceed by introducing $\hebeps$. Because $\Kin_\hebeps > 0$, Theorem \ref{a4thm} applies, and one can also postpone taking the limit $\hebeps \to 0$ to a convenient point. Thus the integral is
\beq
\tttaZ{\ntau}^{(N,\beta,\Xspace)}
=
\int \DD \aphi \; 
\E^{-\tttaS_\XTspace (\aphi)} \; 
\sfrac{1}{N!} \; 
{\bili{\bar\aphi(\beta)}{\aphi (0)}_\Xspace}^N
\eeq
with
\beq
\tttaS_\XTspace (\aphi)
=
\int_{\tau,\xx}
\bar\aphi (\tau,\xx) \; 
\lbrack (-\del_\tau + \mm + \Kin_\hebeps)\aphi\rbrack (\tau,\xx)
+
\frac12
\int_\tau
\int_{\xx,\xy} 
\abs{\aphi(\tau,\xx)}^2 \; \WW(\xx-\xy) \; \abs{\aphi(\tau,\xy)}^2 
\eeq
and $\mm = \hat \WW (0)$. For simplicity of presentation, $\WW_{\xx,\xy} = \WW \delta_{\xx,\xy}$ (at $\speps =1$) is taken in the following. Then $\hat \WW (0) = \WW$ as well.

\subsection{Orthogonal decomposition}
The fundamental equation of it all is 
\beq
\aphi = \bphi + \cphi
\eeq
where $c = (c_\tau)_{\tau \in \bT}$ is independent of $\xx$ and 
\beq
\bili{1}{\bpht{\tau}}_\Xspace
=
\int_{\xx} \bphi(\tau,\xx) 
=
0 \; . 
\eeq
This is an orthogonal decomposition: $\bili{\cpht{\tau}}{\bpht{\tau}}_\Xspace = 0$ for all $\tau$. 
With this, 
\beq
\DD \aphi 
=
\DD^\bT \cphi \; 
\DD' \bphi
\qquad
\DD^\bT \cphi 
=
\pli_{\tau\in \bT} \sfrac{\dd \overline{\cpht{\tau}} \wedge \dd \cpht{\tau}}{2\pi\I} 
\qquad
\DD' \bphi 
=
\DD \bphi\; 
\pli_{\tau \in \bT} \delta \left(\int_\xx \bphi (\tau,\xx)\right) 
\eeq
and 
\beq
\tttaS (\aphi)
=
\tttaA_0 (\cphi)
+
\tttaA_2 (\bphi,\cphi)
+
\tttaA_3 (\bphi,\cphi)
+
\tttaA_4 (\bphi)
\eeq
with
\beq
\begin{split}
\tttaA_0 (\cphi)
&=
\abs{\Xspace} 
\int_\tau
\lbrack
\overline{\cpht{\tau}} (-\del_\tau + \hebeps + \mm) \cpht{\tau}
+
\frac{\WW}{2} \abs{\cpht{\tau}}^4
\rbrack
\\
\tttaA_2 (\bphi,\cphi)
&=
\frac{\WW}{2}
\int_{\tau,\xx}
\left\lbrack
4 \abs{\cpht{\tau}}^2 \abs{\bphi (\tau,\xx)}^2 
-
\cpht{\tau}^2 \overline{\bphi (\tau,\xx)}^2 
-
\overline{\cpht{\tau}}^2 \bphi (\tau,\xx)^2 
\right\rbrack
\\
&+
\bili{\bar\bphi}{(-\del_\tau+ \Kin_\hebeps + \mm)\bphi}_\XTspace
\\
\tttaA_3 (\bphi,\cphi)
&=
\WW
\int_{\tau,\xx}
\abs{\bphi (\tau,\xx)}^2\;
\left(
\cpht{\tau}\; \overline{\bphi(\tau,\xx)}
+
\overline{\cpht{\tau}}\;  \bphi(\tau,\xx)
\right)
\\
\tttaA_4 (\bphi)
&=
\frac{\WW}{2}
\int_{\tau,\xx} \abs{\bphi (\tau,\xx)}^4 \; . 
\end{split}
\eeq
The contribution to the action $\tttaS$ that is linear in $\bphi$ vanishes because of the orthogonality of $\bphi$ and $\cphi$. For the same reason, 
\beq
\bili{\bar\aphi(\beta)}{\aphi (0)}_\Xspace
=
\bili{\overline{\cpht{\beta}}}{\cpht{0}}_\Xspace
+
\bili{\bar\bphi_\beta}{\bphi_0}_\Xspace
=
\abs{\Xspace} \overline{\cpht{\beta}} \; \cpht{0} 
+
\bili{\bar\bphi_\beta}{\bphi_0}_\Xspace
\eeq
and a binomial expansion gives 
\beq\label{cangrancan}
\tttaZ{\ntau}
=
\abs{\Xspace}^N
\sum_{K=0}^N 
\sfrac{1}{(N-K)!} \; 
\int \DD^\bT \cphi\; \E^{-\tttaA_0(\cphi)} \; 
\left(\overline{\cpht{\beta}} \; \cpht{0} \right)^{N-K} \; 
\tttaY{\ntau}^{(\beta,K,\Xspace)} (\cphi)
\eeq
with
\beq
\tttaY{\ntau}^{(\beta,K,\Xspace)} (\cphi)
=
\sfrac{1}{K!}
\int \DD'\bphi \; 
\E^{- \tttaA_2 (\bphi,\cphi) - \tttaA_3 (\bphi,\cphi) - \tttaA_4 (\bphi)} \; 
\left(
\frac{1}{|\Xspace|}
\int_\xx \overline{\bphi(\beta,\xx)} \bphi (0,\xx)
\right)^K
\eeq
The sum over $K$ in (\ref{cangrancan}) suggests that the condensate now plays the role of a particle reservoir --- at fixed $\cphi$, the ensemble described by the $\bphi$ integral is no longer at fixed $N$, but rather, all particle numbers $K \in \{ 0, \ldots, N\}$ contribute to the partition function. To be a true reservoir in a thermodynamical sense, the condensate fluctuations would have to be so small that it becomes approximately constant also in time $\tau$, and that the likeliest values of $K$ are small compared to $N$. This is expected in the thermodynamic limit at sufficiently large $\beta$ and with $N$ growing together with $|\Xspace|$ so that $\rho=\frac{N}{|\Xspace|}$ is fixed and large enough. However, in the present setup, $\cphi$ may still receive important contributions from the $\bphi$-integral, especially because of the slow decay of the $\bphi$-covariance (see Section \ref{constcond} below). 

\subsection{Quadratic Form for the $\bphi$ Fields}
By definition of the time derivative and by its boundary condition (see Theorem \ref{a4thm} and (\ref{tttaQuaddef})), the operator $\boQ(\cphi)$ defining the action $\tttaA_2$ as a quadratic form in $\bphi$ is still upper triangular in the time slice indices $\jtau$ and $\jtau'$. Explicitly, in the notation where times $\tau= \eptau \jtau$ are labelled by $\jtau \in \{ 0, \ldots, \ntau\}$,
\beq
\tttaA_2 (\bphi,\cphi)
=
\sum_{\jtau=0}^\ntau
\bili{(\bphj{\jtau},\overline{\bphj{\jtau}})}{%
\boQloc (\cphj{\jtau}) \;
(\overline{\bphj{\jtau}},\bphj{\jtau})^\top}_\Xspace 
- \sum_{\jtau=0}^{\ntau-1}
\bili{\overline{\bphj{\jtau}}}{\bphj{\jtau+1}}
\eeq
with
\beq\label{boQlocdef}
\boQloc (\cphi)
=
1_2 \otimes
(\opId+ \eptau \mm + \eptau \Kin)
+ \eptau \WW
\;
\left[
\begin{array}{cc}
|\cphi|^2 & \cphi^2 \\
\overline{\cphi}^2 &  |\cphi|^2
\end{array}
\right]
\otimes
\opId
\eeq
The $\cphi$-dependent matrix appearing in (\ref{boQlocdef}) is obviously nonnegative (but does have an eigenvalue zero). The determinant of $\boQloc (\cphj{\jtau})$ is no longer equal to one. However, and this is again peculiar to the canonical ensemble, $\det \boQloc(\cphi) = 1 + O(\eptau)$, and the matrix $\boQ$ corresponding to the quadratic form $\tttaA_2$ in $\bphi$ is therefore invertible if $\eptau$ is small enough.

\subsection{Time-independent condensate field}\label{constcond}
In the canonical partition function, there is no chemical potential that can be tuned to give a mexican hat potential. Instead it is the power (\ref{sandprich}) that introduces a tendency towards a nonzero value of $\cphi$. A first understanding of this can be gained by assuming that $\cphi$ is constant in time, $\cpht{\tau} = \cpht{0} =\cphi_0$ for all $\tau$, and taking the contribution from $K=0$ only. Then $\overline{\cpht{\beta}} \cpht{0} = |\cphi_0|^2$. Fixing $\rho = \frac{N}{|\Xspace|}$ means that $|\Xspace| = \sigma N$ with $\sigma = \frac1\rho$. Then for large $N$, the factor
\beq
\frac{|\Xspace|^N |\cphi_0|^{2N}}{N!}
\approx
\frac{(\sigma|\cphi_0|^{2}\E)^N}{\sqrt{2\pi N}}
=
\frac{1}{\sqrt{2\pi N}} \; \E^{N (1+ \ln (\sigma|\cphi_0|^{2}))}
\eeq
in the integrand drives the maximum of the integrand in $\cphi_0$, hence also the most probable value of $\cphi_0$, away from zero. 
The above replacements amount to considering the simplified function 
\beq
\tttaZ{\ntau}
=
\sfrac{1}{\sqrt{2\pi N}} 
\int \sfrac{\dd \bar\cphi_0 \wedge \dd \cphi_0}{2\pi\I} \;
\E^{-N F(|\cphi_0|^2)} 
\;
\tttaY{\ntau}^{(\beta,0,\Xspace)} (\cphi_0) 
\eeq
with
\beq
F(\gamma)
=
\sigma \left[ (1+ \beta \mm) \gamma + \frac{\WW\beta}{2} \gamma^2\right]
- 1 - \ln (\sigma \gamma)
\eeq
The $1$ term in the square brackets is due to the boundary condition for the discretized time derivative which implies that its integral does not evaluate to zero on a constant field. Explicitly, when inserting a $\jtau$-independent field into (\ref{jbouc}) (with $\Kin =0$ and $h=0$): the difference of the two sums does not vanish, but gives $|\Xspace| |\cphi_0|^2 = N \sigma \gamma$. As discussed, the logarithm in $F$ comes from the power in the integral for the canonical partition function, and it implies that $F$ takes its minimum at a $\gamma_0 \ne 0$. As mentioned, the true ensemble average is more complicated than this. In particular, $\rho_0 = \frac{N-K}{|\Xspace|}$ is the condensate density, and whether condensation really takes place at a given temperature is a question of the distribution of $K$'s. 

\subsection{Positivity in the continuum limit}
If $\cphi_0 \ne 0$, the eigenvalues of the matrix $\boQ$ exhibit the typical Bogoliubov spectrum, leading to a kinetic term $E_B (\xp) = |\xp| \sqrt{\bogv^2 + \xp^2}$
where $\bogv^2 = 2\WW |\cphi_0|^2$. (Here I have taken a spatial continuum limit as well.)

\begin{lemma}
For all $t \ge 0$, the $x$-space kernel of $\E^{ - t |\xp| \sqrt{2\mu + \xp^2}}$ is positive. 
\end{lemma}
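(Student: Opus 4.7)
Proof proposal. The plan is to realize $\E^{-tE_B(p)}$ as the characteristic function of a probability distribution on $\bR^d$; nonnegativity of its inverse Fourier transform (the $x$-space kernel) then follows automatically. Observe that $E_B(p) = \psi(|p|^2)$ with $\psi(s) = \sqrt{s(s+2\mu)}$. By Bochner's subordination theorem, if $\psi$ is a Bernstein function then it is the Laplace exponent of a subordinator $(S_t)_{t \ge 0}$, and for an independent standard $d$-dimensional Brownian motion $B$ the subordinated process $X_t := B_{2 S_t}$ has characteristic function
\begin{equation*}
\bE\bigl[\E^{\I p \cdot X_t}\bigr] \;=\; \bE\bigl[\E^{-|p|^2 S_t}\bigr] \;=\; \E^{-t \psi(|p|^2)} \;=\; \E^{-tE_B(p)} .
\end{equation*}
Thus $\E^{-tE_B(p)}$ is the Fourier transform of the law of $X_t$, and conditioning on $S_t$ exhibits the $x$-space kernel explicitly as a mixture of Gaussians
\begin{equation*}
K_t(x) \;=\; \int_0^\infty \frac{\E^{-|x|^2/(4u)}}{(4\pi u)^{d/2}}\,\bP[S_t \in du] \;\ge\; 0 ,
\end{equation*}
with the $t=0$ case degenerating to the Dirac mass at the origin.

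Everything therefore reduces to verifying that $\psi$ is a Bernstein function, and the core of the argument is to write down its L\'evy--Khintchine representation explicitly. From the standard Laplace identity $\mathcal{L}[I_0(\mu t)](\lambda) = 1/\sqrt{\lambda^2 - \mu^2}$ (valid for $\lambda > \mu$), a translation in $\lambda$ gives $\mathcal{L}[\E^{-\mu t} I_0(\mu t)](\lambda) = 1/\sqrt{\lambda(\lambda+2\mu)}$. Combining this with the elementary rule $(\lambda+\mu)\mathcal{L}[f](\lambda) = f(0) + \mathcal{L}[f' + \mu f](\lambda)$ and the identity $I_0' = I_1$ yields after a short calculation
\begin{equation*}
\psi'(\lambda) \;=\; \frac{\lambda+\mu}{\sqrt{\lambda(\lambda+2\mu)}} \;=\; 1 + \mu \int_0^\infty \E^{-\lambda t}\, \E^{-\mu t} I_1(\mu t)\, dt ,
\end{equation*}
and integrating in $\lambda$ from $0$ to $\lambda$ via Fubini (using $\psi(0) = 0$) produces the Bernstein representation
\begin{equation*}
\psi(\lambda) \;=\; \lambda + \int_0^\infty (1 - \E^{-\lambda t}) \, \nu(dt), \qquad \nu(dt) \;=\; \frac{\mu}{t}\,\E^{-\mu t} I_1(\mu t)\, dt .
\end{equation*}

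Since $I_1 \ge 0$ on $[0,\infty)$, the measure $\nu$ is nonnegative. Because $I_1(x) \sim x/2$ as $x \to 0$ and $I_1(x)\E^{-x} \sim (2\pi x)^{-1/2}$ as $x \to \infty$, its density is bounded near $t = 0$ and decays like $t^{-3/2}$ at infinity, so $\int \min(1,t)\,\nu(dt) < \infty$ and $\nu$ is a genuine L\'evy measure. This identifies $\psi$ as a Bernstein function with drift $1$ and no killing, which is precisely what was needed. The main obstacle is the Bernstein property of $\psi$; once it is in place, Bochner subordination delivers the result with no further work. A shorter but less explicit alternative verification goes through complex analysis: $\psi(z) = \sqrt{z}\sqrt{z+2\mu}$ extends holomorphically to $\bC \setminus (-\infty,0]$, each factor maps the upper half-plane into the first quadrant, so $\psi$ maps it into the upper half-plane, and $\psi > 0$ on $(0,\infty)$ identifies $\psi$ as a complete Bernstein function.
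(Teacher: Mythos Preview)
Your proof is correct and takes a genuinely different route from the paper's.

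The paper proceeds by an elementary decomposition: it writes $E_B(p) = p^2 + \tfrac{v^2}{2} + \hat F(p)$ with $\hat F(p) = -\tfrac{v^4}{2}\bigl(|p|+\sqrt{v^2+p^2}\bigr)^{-2}$, rewrites this via a Frullani-type integral as $-\tfrac{v^4}{2}\int_0^\infty t\,\E^{-t(|p|+\sqrt{v^2+p^2})}\,dt$, and then invokes the known strict positivity of the $x$-space kernels of $\E^{-t|p|}$ (Poisson kernel) and $\E^{-t\sqrt{v^2+p^2}}$ (relativistic kernel) to conclude that $F(x)<0$ pointwise. The exponential series for $\E^{-\tau \hat F}$ in $x$-space is then a sum of iterated convolutions of $-F>0$, hence positive, and a final convolution with the heat kernel $\E^{-\tau p^2}$ yields the claim. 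Your approach instead recognises $E_B(p)=\psi(|p|^2)$ with $\psi(s)=\sqrt{s(s+2\mu)}$, exhibits the L\'evy--Khintchine representation of $\psi$ explicitly (drift $1$, L\'evy density $\tfrac{\mu}{t}\E^{-\mu t}I_1(\mu t)$), and concludes by Bochner subordination that the kernel is the law of a subordinated Brownian motion, hence a genuine probability density.

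What each buys: the paper's argument is self-contained modulo the positivity of the Poisson and relativistic kernels, which it simply quotes; ironically, the cleanest proofs of those facts also go through subordination, so your route is in a sense more honest about where the positivity ultimately comes from. Your argument additionally identifies the underlying L\'evy process and its jump measure, which is extra structural information the paper does not extract; it also shows strict positivity of $K_t(x)$ for $t>0$ (since the drift forces $S_t\ge t$), matching the paper's conclusion. The complex-analytic shortcut you mention at the end (both square-root factors map the upper half-plane into the first quadrant, hence $\psi$ is complete Bernstein) is particularly efficient and would let you dispense with the Bessel-function computation entirely.
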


\begin{proof}
Let 
\begin{equation}
\hat F (\xp) = |\xp| \sqrt{\bogv^2 + \xp^2} - \xp^2 - \frac{\bogv^2}{2} ,
\end{equation}
Then
\begin{equation}
\hat F(\xp) = 
- \frac{v^4}{2(|\xp| + \sqrt{\bogv^2 + \xp^2})^2} 
= 
- \frac{v^4}{2}
\int_0^\infty  t \; \dd t\; \E^{-t (|\xp| + \sqrt{\bogv^2 + \xp^2})}
\end{equation}
The Fourier transform $G_t$ of $\E^{-t (|\xp| + \sqrt{\bogv^2 + \xp^2})}$
is the convolution of that of $\E^{-t |\xp|}$ and that of 
$\E^{-t \sqrt{\bogv^2 + \xp^2}}$, which are both strictly positive functions, 
hence strictly positive, so that 
\begin{equation}
F(\xx-\xy) = - \int_0^\infty t \; \dd t\; G_t(\xx-\xy) < 0 
\end{equation}
for all $\xx$ and $\xy$. By iterated convolution, $\E^{-\tau F} (\xx,\xy)$
is positive for all $\xx$ and $\xy$,
and the same holds, again by convolution,
for $\E^{-\tau E_B} (\xx,\xy) $.
\end{proof}

The decay is, however, slow since the kernels that get convolved only have power law 
decays. Specifically \cite{SteiWei},
\begin{equation}
\E^{-t|\xp|} (\xx,\xy)
=
\left\lbrace
\begin{array}{cc}
\sfrac{1}{\pi^2}\;
\frac{t}{(t^2+|\xx-\xy|^2)^2}
& \mbox{ for } d=3 \\[2ex]
\Gamma(\sfrac{d+1}{2})\, \pi^{-\frac{d+1}{2}} \; 
\frac{t}{(t^2+|\xx-\xy|^2)^{\frac{d+1}{2}}}
& \mbox{ for } d > 3\; .
\end{array}
\right.
\end{equation}
This slow decay suggests that a stochastic representation as a random walk in a fluctuating condensate background has long jumps, hence differs from the one corresponding to Brownian motion. The cubic term $\tttaA_3$ also leads to branching and coalescence, corresponding to processes where one of two scattering particles emerges from, or gets absorbed in, the condensate.

\intremark{At fixed $t$, this function is integrable in $\xx-\xy$, but 
\beq
\int_0^\infty \dd t\; \E^{-t|\xp|} (0,\xx)
\sim
\frac{1}{|\xx|^{d-1}}
\eeq
is not integrable in $x$. It is in $L^p(\bR_+ \times \bR^3)$ for all $p > \frac{d+1}{d}$.}

\section{Conclusion and Outlook}\label{conclusec}

The coherent-state and auxiliary-field integrals for the boson system derived here provide a simple regularized version of the standard formal time-continuum functional integrals used in condensed-matter physics. The integral for the canonical partition function and expectation values exhibits a few significant differences to the usually considered grand-canonical case: there is no periodic boundary condition in time in the quadratic form defining the Gaussian integral, propagation is only in forward time direction, and the inverse determinant from Gaussian integration is equal to one. 
The integrals also allow to derive a straightforward random-walk expansion, which can be viewed as a regularized version of the Brownian motion representation used in \cite{AdamsKoenig, AdamsBruKoenig, AdamsCollevecchioKoenig}. 

It is often stated that the Wick rule fails for the canonical expectation values of products of field operators. The integral representations for the canonical expectation values provide an easy way to understand this -- in Gaussian integrals, the Wick rule is derived by integration by parts, and the factor $\bili{\bar \aphi_{\ntau}}{\aphi_0}_\Xspace$ that arose from the projection to the $N$-particle space $\FckoB{N}$, see Lemma \ref{sandprichlemma}, gives rise to additional terms in that procedure.
In presence of an operator insertion $\opF$, this factor gets modified, see, e.g.\ (\ref{fandprick}), but it remains a polynomial in the fields.  It is easy to derive a modified Wick rule using the standard  formalism of field Laplacians for representing  Gaussian expectations (see Appendix \ref{detpermlapapp} and \cite{msbook}). 

At the heart of the technical results of this paper are the pointwise bounds, uniformly in $h$, for the regularized covariances, which are similar to those used in \cite{MagRiv} and subsequent works on the loop vertex expansion. Although it turns out that they are very simple to prove (see Lemma \ref{Quadlemma} for an example), it was not completely obvious a priori that they would be available in the discrete-time regularization. Above, I have used them to show that there is a considerable freedom in chosing the discretized action without changing the time-continuum limit of the partition function. Although this is expected, the point here is that the proofs require only very elementary bounds, such as the combination of resolvent equations like (\ref{resolventQ}), Taylor expansion like in (\ref{TaylorQ}) and integration by parts as in (\ref{intbypartsWWW}. Moreover, these tehniques can be used to derive convergent expansions for expectation values, which allow to take the infinite-volume limit \cite{toappear}. 

One should not jump to the conclusion that this already makes more involved machinery, such as a decomposition in large and small fields, unnecessary in the analysis of bosonic many-body problems, however. If renormalization becomes necessary, e.g.\ when regularizing determinants or performing any operation in which expansions in the fields are required, it is not yet clear whether uniform bounds will apply.

The bulk of this paper has dealt with the canonical ensemble, but (as discussed in some detail in Sections \ref{resultssec} and \ref{grandsec}), many essential bounds carry over to the grand-canonical case. Moreover, one can also take the approach to use results for the canonical ensemble to make statements about the grand canonical one, by summation over $N$, provided one has good enough control over the $N$-dependence. 

Finally, I have given a brief discussion coming back to the original motivation, namely the connection to stochastic representations. The energy of excitations in a Bose condensate is no longer quadratic, but linear in momentum, $E(p) = c |p|$, with $c$ the velocity of sound. Nevertheless, in the formal time-continuum limit, this kinetic term still generates a stochastic process (albeit with long-range jumps). It will be interesting to see if the uniform bounds derived in this paper, for which this positivity was crucial, continues to hold in a regularized version of the  functional integral. 

\appendix

\section{Coherent-State Formulas} 
\label{cohstateapp}
In terms of the orthonormal basis (\ref{FockONB}), the expansion of the coherent state is 
\beq
\coh{\aphi}
=
\sli_{\nu \in \bN_0^\Xspace} 
\ee_\nu \; 
\pli_{\xx \in \Xspace}
\frac{\aphi_\xx^{\nu_\xx}}{\sqrt{\nu_\xx!}} 
\eeq
which is norm convergent for any $\aphi \in \bC^\Xspace$ by inspection. Eq.\ (\ref{cohinp}) then follows by inserting the expansion. 

By the commutation relations, $\aann_\xx {\acre_\xx}^n \Omega = n \speps^{-d} {\acre_\xx}^{n-1} \Omega$. Thus the vector $\ww_{n,\xx} (\aphi)  = \sum_{\nu =0}^n \frac{\speps^d \aphi^\nu}{\nu!} {\acre_\xx}^\nu \Omega$ satisfies 
\beq
\aann_\xx \ww_{n,\xx} (\aphi)
=
\aphi_\xx \; \ww_{n-1,\xx} (\aphi) \; ,
\eeq
so $\coh{\aphi}^{(N)} = \sli_{n=0}^N
\frac{{\bili{\aphi}{\acre}_\Xspace}^n}{n!} \Omega$ satisfies
\beq
\lim_{N \to \infty}
\aann_\xx 
\coh{\aphi}^{(N)} 
=
\aphi_\xx \; \coh{\aphi} \; .
\eeq
Since the exponential series for $\coh{\aphi}$ is norm convergent, $\Vert\coh{\aphi}^{(N)} - \coh{\aphi}\Vert \to 0$ for $N \to \infty$. For all $\xx \in \Xspace$, the operators $\aann_\xx$ and $\acre_\xx$ have the common dense domain $\cD =\cD_{\sqrt{\opn_\xx}}$, and $\coh{\aphi} \in \cD$. Let $\ww \in \cD$. Then 
\beq
\begin{split}
\bracket{\ww}{\aann_\xx \coh{\aphi}}
&=
\bracket{\acre_\xx \; \ww}{\coh{\aphi}}
=
\lim_{N\to \infty}
\bracket{\acre_\xx \; \ww}{\coh{\aphi}^{(N)}}
=
\lim_{N\to \infty}
\bracket{\ww}{\aann_\xx \coh{\aphi}^{(N)}}
\\
&=
\bracket{\ww}{\aphi_\xx \coh{\aphi}}
\end{split}
\eeq
Since $\cD$ is dense in $\FockB$, (\ref{coheig}) holds. 

To see (\ref{resid}), let $\psi\in \FockB$. Then the basis coefficients 
$\psi_\nu = \bracket{\ee_\nu}{\psi}$ are square summable over $\nu \in \bN_0^\Xspace$, and the sum 
\beq
\ketbra{\unicoh{\aphi}}{\unicoh{\aphi}}\psi \rangle
=
\sum_{\nu,\nu' \in \bN_0^\Xspace}
\E_\nu \; 
\psi_{\nu'} 
\pli_\xx \frac{\bar\aphi_\xx^{\nu'_\xx}\aphi_\xx^{\nu_\xx}}{\sqrt{\nu_\xx! \nu'_\xx!}}
\eeq
converges absolutely and uniformly on $\bC_\Rad^{|\Xspace|}$, hence can be exchanged with the $\aphi$-integral. 
Writing $\aphi_\xx = \rho_\xx \E^{\I \theta_\xx}$, 
\beq
\frac{1}{2\pi\I} \dd \bar\aphi \wedge \dd \aphi 
=
\dd (\rho_\xx^2) \; 
\frac{\dd \theta_\xx }{2\pi}
\eeq
The integral on the right hand side of (\ref{spurli}) factorizes into a product over $\xx$ of 
\beq
\int_0^\Rad  \dd (\rho_\xx^2) \E^{-\rho_\xx^2} \; \rho_\xx^{\nu_\xx+\nu'_\xx}
\int\frac{\dd\theta_\xx}{2\pi}\; \E^{\I \theta (\nu_\xx - \nu'_\xx)}
\eeq
The $\theta$-integral vanishes unless $\nu_\xx = \nu'_\xx$. The $\rho_\xx$-integral is an incomplete Gamma function $\Gamma (\nu_\xx+1, \Rad^2)$, which increases to $\nu_\xx!$ as $\Rad \to \infty$. Thus 
\beq
P_\Rad (\psi)
=
\ili_{{\bC_\Rad}^\Xspace}
\dd^\Xspace\aphi \; 
\ketbra{\unicoh{\aphi}}{\unicoh{\aphi}}\psi \rangle
=
\sum_{\nu}
\psi_\nu f_\nu(\Rad) \; \E_\nu 
\eeq
where 
\beq
f_\nu (\Rad)
=
\pli_{\xx}
\frac{\Gamma(\nu_\xx+1, \Rad^2)}{\nu_\xx!} \in (0,1)
\eeq
Thus $S_\Rad = \sum_\nu \abs{\psi_\nu (1-f_\nu (\Rad))}^2 $ converges, and
and $f_\nu (\Rad) \to 1$ as $\Rad \to \infty$. By the dominated convergence theorem, applied to the sum over $\nu$, $S_\Rad \to 0$ as $\Rad \to \infty$. 
Since the $\E_\nu$ are an ONB, $S_\Rad = \norm{\psi  - P_\Rad (\psi)}^2$, so (\ref{resid}) holds. 

To see (\ref{spurli}), use that the series expansion
\beq
\bracket{\unicoh{\aphi}}{\opA \; \unicoh{\aphi}}
=
\sum_{\nu,\nu'}
\bracket{\E_{\nu'}}{\opA \; \E_{\nu}}
\pli_\xx \frac{\bar\aphi_\xx^{\nu'_\xx}\aphi_\xx^{\nu_\xx}}{\sqrt{\nu_\xx! \nu'_\xx!}}
\eeq
converges absolutely and uniformly for $\aphi \in \bC_\Rad^\Xspace$, because $\opA$, as a trace class operator, is bounded hence $|\bracket{\E_{\nu'}}{\opA \; \E_{\nu}}| \le \norm{\opA}$. So the integral over $\aphi$ can be exchanged with the summation. By the same arguments as above, the sum reduces to a single summation over $\nu$, and 
\beq
\ili_{{\bC_\Rad}^\Xspace}
\dd^\Xspace\aphi \; 
\bracket{\unicoh{\aphi}}{\opA  \; \unicoh{\aphi}}
=
\sum_\nu
f_\nu (\Rad) \; 
\bracket{\E_{\nu}}{\opA \; \E_{\nu}}
\eeq
Because $\opA$ is trace class, the sum $\sum_\nu \abs{\bracket{\E_{\nu}}{\opA \; \E_{\nu}}}$ converges. Thus, again by the dominated convergence theorem, (\ref{spurli}) follows. 

Eq.\ (\ref{expexp}) is an immediate consequence of the following Lemma. 

\begin{lemma}
Let $\tau \ge 0$ and $\opH_0 = \bili{\acre}{\Kin \aan}_\Xspace$ with $\Kin \ge 0$. Then 
\beq
\E^{-\tau \opH_0} \coh{\aphi}
=
\coh{\E^{-\tau \Kin} \aphi}
\eeq
\end{lemma}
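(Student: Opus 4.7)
The plan is to verify that $\psi(\tau) := \E^{-\tau \opH_0} \coh{\aphi}$ is an eigenvector of every annihilation operator $\aann_\xx$ with the correct eigenvalue $(\E^{-\tau \Kin}\aphi)_\xx$, and then fix the overall constant by evaluating the vacuum component.

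First I would compute the Heisenberg evolution of $\aann_\xx$ under $\opH_0$. Using the CCR and $\opH_0 = \int_{\xx',\xy'}\Kin_{\xx',\xy'}\acre_{\xx'}\aann_{\xy'}$ one gets
$[\opH_0,\aann_\xx]=-(\Kin\aann)_\xx$.
Setting $F(\tau) = \E^{\tau \opH_0}\aann_\xx \E^{-\tau \opH_0}$, the commutator identity gives the linear ODE $F'(\tau) = -(\Kin F(\tau))_\xx$, with $F(0) = \aann_\xx$, hence
\[
\E^{\tau \opH_0}\aann_\xx \E^{-\tau \opH_0} = (\E^{-\tau \Kin}\aann)_\xx.
\]
Since $\Kin \ge 0$ implies $\opH_0 \ge 0$, the semigroup $\E^{-\tau \opH_0}$ is a contraction on $\FockB$, and on the dense domain of coherent states these manipulations are justified (the exponential series for $\coh{\aphi}$ is norm convergent, and $\opH_0$ preserves each $\FckoB{n}$, on which it is bounded).

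Next I would apply the resulting identity to $\coh{\aphi}$. Using $\aann \coh{\aphi} = \aphi\, \coh{\aphi}$ from \Ref{coheig},
\[
\aann_\xx\, \psi(\tau)
=
\E^{-\tau\opH_0}\bigl(\E^{-\tau\Kin}\aann\bigr)_\xx \coh{\aphi}
=
(\E^{-\tau\Kin}\aphi)_\xx\; \psi(\tau).
\]
Thus $\psi(\tau)$ is a simultaneous eigenvector of all $\aann_\xx$ with eigenvalues $(\E^{-\tau\Kin}\aphi)_\xx$. Since the coherent-state eigenvector with given eigenvalues is unique up to a scalar (distinct $n$-particle components are determined recursively from the vacuum component), $\psi(\tau) = c(\tau)\, \coh{\E^{-\tau\Kin}\aphi}$ for some scalar $c(\tau)$.

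Finally, I would fix $c(\tau)=1$ by projecting onto the vacuum. Because $\opH_0 \Omega = 0$, we have $\E^{-\tau\opH_0}\Omega=\Omega$, so
\[
\bracket{\Omega}{\psi(\tau)}
=
\bracket{\E^{-\tau\opH_0}\Omega}{\coh{\aphi}}
=
\bracket{\Omega}{\coh{\aphi}} = 1,
\]
and likewise $\bracket{\Omega}{\coh{\E^{-\tau\Kin}\aphi}} = 1$ from the expansion \Ref{cohstexp}. Therefore $c(\tau)=1$, which proves the claim. The only mildly subtle point is the domain issue in the Heisenberg evolution step, which is harmless here since $\opH_0$ is particle-number-preserving and bounded on each $\FckoB{n}$, while the series \Ref{cohstexp} converges in norm; everything else is routine.
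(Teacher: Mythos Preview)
Your proof is correct and takes a genuinely different route from the paper's. The paper diagonalizes $\Kin = \cU^\dagger \cD \cU$, passes to transformed creation/annihilation operators $\tilde\opa = \cU\opa$, writes $\opH_0 = \speps^d\sum_\alpha E_\alpha \tilde\opn_\alpha$ as a sum of commuting number operators, and then computes $\E^{-\tau\speps^d E_\alpha\tilde\opn_\alpha}$ directly on the monomials $(\tilde\opa^\dagger_\alpha)^n\Omega$ mode by mode. Your argument instead exploits the Heisenberg evolution $\E^{\tau\opH_0}\aann_\xx\E^{-\tau\opH_0} = (\E^{-\tau\Kin}\aann)_\xx$ together with the characterization of coherent states as joint eigenvectors of the $\aann_\xx$, fixing the overall constant by projecting onto the vacuum.

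Both approaches are clean. The paper's computation is entirely explicit and sidesteps any worry about the unbounded operator $\E^{\tau\opH_0}$, since each factor $\E^{-\tau\speps^d E_\alpha\tilde\opn_\alpha}$ is bounded and the product is finite. Your argument is more conceptual and does not require the change of basis; the potential domain issue you flag is handled exactly as you say, since the intertwining relation $\aann_\xx\E^{-\tau\opH_0} = \E^{-\tau\opH_0}(\E^{-\tau\Kin}\aann)_\xx$ can first be established on each finite-dimensional $\FckoB{n}$ (where all operators are bounded) and then transferred to $\coh{\aphi}$ via the norm-convergent series \Ref{cohstexp}. The uniqueness-up-to-scalar step is also fine: if $\aann_\xx v = b_\xx v$ for all $\xx$, the $n$-particle components of $v$ are determined recursively from the vacuum component, so $v = \bracket{\Omega}{v}\,\coh{b}$.
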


\begin{proof}
Because $\Kin^\dagger = \Kin$, there is a unitary transformation $\cU$ so that $\Kin = \cU^\dagger \cD \cU$, where the diagonal matrix $\cD$ has the eigenvalues $E_\alpha$ of $\Kin$ as diagonal entries. By hypothesis, $E_\alpha \ge 0$ for all $\alpha$.  Set $\tilde \opa = \cU \opa$, then the $\tilde \opa_\alpha$ also have canonical commutation relations, $[\tilde\opa_\alpha, \tilde\opa^\dagger_{\alpha'}] = \speps^{-d} \delta_{\alpha,\alpha'}$ and $[\tilde\opa_\alpha, \tilde\opa_{\alpha'}] = 0$, and $\opH_0 = \speps^d \sum_\alpha E_\alpha \tilde \opn_\alpha$ with $\tilde \opn_\alpha = \tilde\opa^\dagger_\alpha \tilde\opa_\alpha$. The $\tilde \opn_\alpha $ all commute, and thus $\E^{-\tau \opH_0} = \prod_\alpha \E^{-\tau  \speps^d E_\alpha \tilde \opn_\alpha}$. Every factor in this product is a bounded operator, since $\opn_\alpha \ge 0$, $E_\alpha \ge 0$, and $\tau \ge 0$. Moreover, $\bili{\aphi}{\acre}_\Xspace = \sum_\alpha \tilde \aphi_\alpha \tilde \opa_\alpha^\dagger$, where $\tilde \aphi = \speps^d \bar\cU \aphi$, and thus $\coh{\aphi} = \prod_\alpha \E^{\tilde \aphi_\alpha \tilde \opa^\dagger_\alpha} \Omega$. Because $\opn_\alpha$ commutes with $\acre_{\alpha'}$ if $\alpha \ne \alpha'$, it suffices to consider the action of  
$\opB_\alpha=\E^{-\tau  \speps^d E_\alpha \tilde \opn_\alpha}$ on 
$\ww_\alpha (\tilde \aphi_\alpha) = \E^{\tilde \aphi_\alpha \tilde \opa^\dagger_\alpha} \Omega$.
$B_\alpha$ is bounded, so its application can be exchanged with the exponential series, and 
\beq
\begin{split}
\opB_\alpha \ww_\alpha (\tilde \aphi_\alpha)
&=
\sum_{n=0}^\infty
\frac{\tilde \aphi_\alpha^n}{n!}\;
B_\alpha \;
{\opa^\dagger_\alpha}^n \Omega
\\
&=
\sum_{n=0}^\infty
\frac{\tilde \aphi_\alpha^n}{n!}\;
\E^{-\tau E_\alpha  n}  \; {\opa^\dagger_\alpha}^n \Omega
=
\ww_\alpha (\E^{-\tau E_\alpha}\;\tilde \aphi_\alpha)
\end{split}
\eeq
In going from the first to the second line, it was used that ${\opa^\dagger_\alpha}^n \Omega$ is an eigenvector of $\tilde \opn_\alpha$ with eigenvalue $\speps^{-d}\; n$. 
Transforming back to the original basis completes the proof.
\end{proof}

\section{Inversion Formulas}
\label{Covarapp}

Let $\Bana$ be a Banach algebra with identity $\opId$ and $\opA_0, \ldots, \opA_\ntau \in \Bana$. Define $\Quad$ and $\Kuad$ in $M_{\ntau+1} (\bC) \otimes \Bana$ by 
\beq
\Quad_{\jtau,\jtau'}
=
\delta_{\jtau,\jtau'} \; 
\opId
-
\delta_{\jtau+1,\jtau'} 
\opA_\jtau  \; ,
\eeq
and 
\beq
\Kuad_{\jtau,\jtau'}
=
\Quad_{\jtau,\jtau'}
-
\delta_{\jtau,\ntau} \; \delta_{0,\jtau'} \; 
\opA_0
\eeq
where $\jtau, \jtau' \in \{ 0, \ldots, \ntau\}$.  The following lemma states the formulas for the inverses
\beq
\Quad^{-1} = \Cov
\qquad\mbox{and}\qquad
\Kuad^{-1} = \Gov \; , 
\eeq
for the general case where the $\opA_\jtau$ need not commute. 

\bigskip\begin{lemma}\label{InversesLemma}
For all $\opA_1, \ldots, \opA_\ntau \in \Bana$, $\Quad$ is invertible with inverse
\beq\label{Qinverse}
\Cov_{\jtau,\jtau'}
=
\True{\jtau' \ge \jtau} 
\;
\pli_{\itau=\jtau+1}^{\jtau'}
\opA_\itau
\eeq
(where the empty product equals $\opId$).
Let $\opP = \pli_{\itau=0}^\ntau \opA_\itau$. If $\opId - \opP$ is invertible, then
$\Kuad$ is invertible, with inverse
\beq\label{Kinverse1}
\Gov_{\jtau,\jtau'}
=
\True{\jtau' \ge \jtau} 
\;
\pli_{\itau=\jtau+1}^{\jtau'}
\opA_\itau
+
\pli_{\itau=\jtau+1}^\ntau
\opA_\itau\;\;
(\opId - \opP)^{-1}\;
\pli_{\itau=0}^{\jtau'}
\opA_\itau
\eeq
\end{lemma}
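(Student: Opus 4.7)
My plan is to verify both inversion formulas by direct matrix multiplication, exploiting the extreme sparsity of $\Quad$ (nonzero only on the diagonal and first superdiagonal in the block-$(\ntau{+}1)\times(\ntau{+}1)$ structure) and the fact that $\Kuad-\Quad$ is supported in a single corner block, so acts as a rank-one perturbation in the time-slice index.

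\emph{The inverse of $\Quad$.} Because $\Quad_{\jtau,\itau}$ vanishes unless $\itau\in\{\jtau,\jtau{+}1\}$, matrix multiplication collapses to just two terms per entry. I would check $\Quad\,\Cov=\opId$ by splitting into the three cases $\jtau'<\jtau$, $\jtau'=\jtau$, and $\jtau'>\jtau$: the first vanishes by the indicator $\True{\jtau'\ge\jtau}$ in the asserted $\Cov$; the second reduces to $\Cov_{\jtau,\jtau}=\opId$, with the second contribution killed because the range of the product is empty; and in the third case the ordered product for $\Cov_{\jtau,\jtau'}$ is tailored so that the subdiagonal contribution cancels it by factoring out one $\opA_\itau$ from the left. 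Equivalently, one may write $\Quad=\opId-\cN$ with $\cN$ strictly block-upper-triangular, hence nilpotent ($\cN^{\ntau+1}=0$), and read off $\Cov$ as the terminating Neumann series $\sum_{k=0}^\ntau \cN^k$, whose $(\jtau,\jtau{+}k)$ block is precisely the asserted ordered product of $k$ consecutive $\opA$'s.

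\emph{The inverse of $\Kuad$.} I would write $\Kuad=\Quad-\cE$, where $\cE$ has the single nonzero block $\cE_{\ntau,0}=\opA_0$, a rank-one-in-time perturbation. With the ansatz $\Gov=\Cov+\Delta$ in which $\Delta_{\jtau,\jtau'}$ denotes the second summand of \Ref{Kinverse1}, the identity $\Kuad\,\Gov=\opId$ decomposes into $\Quad\Cov-\cE\Cov+\Quad\Delta-\cE\Delta=\opId$. The first summand is $\opId$ by Step 1, so it suffices to show $\Quad\Delta=\cE(\Cov+\Delta)=\cE\Gov$. Applying the same two-term collapse to $\Quad\Delta$, the telescoping built into the ordered products shows that $\Quad\Delta$ vanishes in every row $\jtau<\ntau$, while in the boundary row $\jtau=\ntau$ it leaves behind $(\opId-\opP)^{-1}\prod_{\itau=0}^{\jtau'}\opA_\itau$. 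On the other hand, $\cE\Gov$ is supported only at $\jtau=\ntau$, where it equals $\opA_0(\Cov_{0,\jtau'}+\Delta_{0,\jtau'})$. The algebraic identity $\opId+\opP\,(\opId-\opP)^{-1}=(\opId-\opP)^{-1}$, applied after the leading $\opA_0$ is absorbed into $\opP=\opA_0\,\opA_1\cdots\opA_\ntau$, matches the two expressions precisely.

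\emph{Main obstacle.} The chief source of difficulty is the noncommutativity of $\Bana$: there is no appeal to a scalar Sherman-Morrison identity, so every product of $\opA_\itau$'s must be tracked in its proper left-to-right order. In particular, the central factor $(\opId-\opP)^{-1}$ sits between two ordered products in $\Delta$, and the matching above hinges on this placement being exactly compatible with both the $\Quad$-telescoping and the geometric-series identity. Once the ordering conventions are pinned down, each individual identity is a short algebraic manipulation, and two-sided invertibility in the finite-dimensional block algebra follows automatically from a successful one-sided check.
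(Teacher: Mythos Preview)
Your proposal is correct. For $\Quad$, your argument (direct check, or equivalently terminating Neumann series for $\opId-\cN$ with $\cN$ nilpotent) is exactly the paper's.

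For $\Kuad$ your route differs slightly from the paper's. You take the asserted formula $\Gov=\Cov+\Delta$ and verify $\Kuad\,\Gov=\opId$ row-by-row, the only nontrivial row being $\jtau=\ntau$, where you close with the identity $\opId+\opP(\opId-\opP)^{-1}=(\opId-\opP)^{-1}$. The paper instead \emph{derives} the formula: it writes $\Kuad=\Quad-\cD$ and uses the resolvent equation $\Gov=\Cov+\Gov\,\cD\,\Cov$, which by the single-block support of $\cD$ collapses to $\Gov_{\jtau,\jtau'}=\Cov_{\jtau,\jtau'}+\Gov_{\jtau,\ntau}\,\opA_0\,\Cov_{0,\jtau'}$; specializing to $\jtau'=\ntau$ gives a linear equation for $\Gov_{\jtau,\ntau}$ whose solution is $\Cov_{\jtau,\ntau}(\opId-\opP)^{-1}$, and back-substitution yields \Ref{Kinverse1}. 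Your approach is a clean verification once the answer is in hand; the paper's resolvent approach has the advantage of showing where the formula comes from and of handling the noncommutative ordering automatically (the placement of $(\opId-\opP)^{-1}$ you flag as the main obstacle is forced by the resolvent equation rather than checked a posteriori). Either way the algebra is short and the content is the same.
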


\begin{proof}
Considered as a matrix in the indices $\jtau$ and $\jtau'$, $\Quad$ is upper triangular, more precisely of the form identity plus a nilpotent term $R$. So it is invertible, and the inverse given is by a terminating Neumann series in powers of $R$. This gives \Ref{Qinverse}. 
Let $\cD_{\jtau,\jtau'} = \delta_{\jtau,\ntau} \; \delta_{0,\jtau'} \; \opA_0$ so that $\Kuad = \Quad - \cD$. Due to the special form of $\cD$, the resolvent equation
\beq
\Gov
=
\Cov
+
\Gov\; \cD \; \Cov
\eeq
reads for the matrix elements
\beq
\Gov_{\jtau,\jtau'}
=
\Cov_{\jtau,\jtau'}
+
\Gov_{\jtau,\ntau}\;
\opA_0\;
\Cov_{0,\jtau'}\; .
\eeq
Because $\opA_0\Cov_{0,\ntau} = \opP$, the equation for $\jtau' = \ntau$ reads
\beq
\Gov_{\jtau,\ntau}
=
\Cov_{\jtau,\ntau}
+
\Gov_{\jtau,\ntau}\;
\opP
\eeq
By hypothesis, $\opId - \opP$ is invertible, so 
\beq
\Gov_{\jtau,\ntau}
=
\Cov_{\jtau,\ntau}
\; 
(\opId - \opP)^{-1} 
\eeq
and
\beq
\Gov_{\jtau,\jtau'}
=
\Cov_{\jtau,\ntau}
+
\Cov_{\jtau,\ntau}\;
\; 
(\opId - \opP)^{-1} \; 
\opA_0\;
\Cov_{0,\jtau'}
\eeq
Inserting \Ref{Qinverse}, and noting that the indicator functions in that expression evaluate to $1$ in $\Cov_{\jtau,\ntau} $ for all $\jtau $, and in  $\Cov_{0,\jtau'}$, for all $\jtau'$, one gets \Ref{Kinverse1}.
\end{proof}

By inserting $1= \True{\jtau' \ge \jtau} + \True{\jtau' < \jtau}$, 
\Ref{Kinverse1} can be rewritten in the form
\beq\label{Kinverse2}
\begin{split}
\Gov_{\jtau,\jtau'}
& =
\True{\jtau' \ge \jtau} 
\;
\left[
\opId
+
\pli_{\itau=\jtau+1}^\ntau
\opA_\itau\;\;
(\opId - \opP)^{-1}\;
\pli_{\itau=0}^{\jtau}
\opA_\itau
\right]
\;
\pli_{\itau=\jtau+1}^{\jtau'}
\opA_\itau
\\
&+
\True{\jtau' < \jtau} \;
\pli_{\itau=\jtau+1}^\ntau
\opA_\itau\;\;
(\opId - \opP)^{-1}\;
\pli_{\itau=0}^{\jtau'}
\opA_\itau
\end{split}
\eeq
If $\opA_\jtau \opA_{\jtau'} = \opA_{\jtau'} \opA_\jtau$
for all $\jtau, \jtau' \in \{ 0, \ldots, \ntau\}$, 
this simplifies to
\beq\label{Govcommute}
\begin{split}
\Gov_{\jtau,\jtau'}
&=
\True{\jtau' \ge \jtau} \; 
(\opId - \opP)^{-1} \;
\pli_{\itau \in \{\jtau+1, \ldots , \jtau'\}} \opA_\itau
\\
&+
\True{\jtau' < \jtau} \; 
(\opId - \opP)^{-1} \;
\pli_{\itau \not\in \{\jtau'+1, \ldots , \jtau\}} \opA_\itau \; .
\end{split}
\eeq
If (as in the case where all auxiliary fields vanish) 
$\opA_\itau = \E^{-\eptau\Kin}$ with $\beta = \eptau \ntau$, then (denoting $\tau_\jtau = \jtau \eptau = \beta \frac{\jtau}{\ntau}$)
\beq\label{granolcov}
\begin{split}
\Gov_{\jtau,\jtau'}
=
(\opId - \E^{-\beta\Kin})^{-1}
\left[
\True{\tau_{\jtau'} \ge \tau_\jtau} \; 
\E^{- (\tau_{\jtau'} - \tau_\jtau) \Kin} \; 
+
\True{\tau_{\jtau'} < \tau_\jtau} \; 
\E^{- (\beta - (\tau_{\jtau'} - \tau_\jtau)) \Kin} 
\right]
\end{split}
\eeq
i.e.\ $\Gov_{\jtau,\jtau'}$ is  the standard time-ordered two-point function (``propagator'') for free bosons with kinetic term $\Kin$ at inverse temperature $\beta$, evaluated at the discrete times $\tau_\jtau$ and $\tau_{\jtau'}$. 

A sufficient condition for $\opId - \opP$ to be invertible is that $\norm{\opP} < 1$, which in turn holds if $\norm{\opA_\itau} \le 1$ for all $\itau \in \{0, \ldots, \ntau\}$ and $\norm{\opA_{\itau_0} } < 1$ for some $\itau_0$. 
For $\opA_\jtau = \E^{-\eptau \Kin} \; \E^{\I \sqrt{\eptau} h_\jtau}$ with real $h_\jtau$, a sufficient condition is that $\Kin > 0$. These conditions are not necessary for invertibility of  $\opId - \opP$. But they are necessary for the convergence of the coherent-state functional integral of the grand-canonical ensemble. The series 
\beq
\sum_{N=0}^\infty \frac{\E^{N\beta\mu}}{N!} \; {\bili{\bar\aphi_\ntau}{\aphi_0}_\Xspace}^N
=
\E^{\E^{\beta\mu}\; \bili{\bar\aphi_\ntau}{\aphi_0}_\Xspace}
\eeq
converges for all values of $\beta\mu$, $\aphi_\ntau$ and $\aphi_0$. Upon formal exchange of the summation and integration in \Ref{grancanZ}, the resulting term in the exponent is just the difference $\cD$ of $\Quad$ and $\Kuad$, with $\opA_0 = \E^{\beta\mu} \opId$ of norm $\norm{\opA_0} = \E^{\beta\mu}$. If $\mu > 0$, this is bigger than one, and invertibility of $\opP$ can fail for particular $\Xspace$, and even if it holds, it will typically not hold with a norm of the inverse that is uniform in $\Xspace$. Moreover, the quadratic form does not have a positive real part, and therefore the Gaussian integral diverges, which in turn invalidates the exchange of summation and integration. Similarly, if one wants to use the complex contour integral for enforcing that the particle number is $N$, one should take integration radius $|z| < 1$ (see (\ref{contourli})).

\section{Baker-Campbell-Hausdorff Formulas}
\label{BCHapp}

The  Baker-Campbell-Hausdorff expansion is well-known and has been considered in many works, often with the aim to simplify the recursion for generating more and more terms in the expansion \cite{Wilcox,Gilmore,VanBruntVisser}. The point here is, on the other hand, to avoid combinatorics by deriving and bounding remainder terms in integral form. I assume in the following that $\opA$ and $\opB$ are elements of some Banach algebra, and show that 
\beq
\begin{split}
\E^{t (\opA+\opB)}
&=
\E^{t\opA} \E^{t \opB}\;
( \opId + \opH_2 (t))
\\
&=
\E^{t\opA} \; \E^{t \opB} \; \E^{\frac{t^2}{2} [\opB,\opA]} \; 
(\opId + \opH_3(t) ) 
\end{split}
\eeq
with $\opH_2 (t)  = O(t^2)$ and $\opH_3 = O(t^3)$ analytic in $\opA$ and $\opB$, and both having explicit bounds in terms of the norms of $\opA$ and $\opB$. The application will be with $t = \eptau \sim \ntau^{-1}$ very small. It is then of course irrelevant whether the error term is in the exponent or `downstairs', as it is small in norm compared to $1$, so one can put it in the exponent simply by taking the logarithm. Indeed it is in general an advantage if the error term is not in the exponent. The way the expansion is generated also allows to go to any higher order, with an according remainder term. Variants of the Lie product formula including higher order terms in the exponent, and with accordingly improved convergence rate as $\ntau \to \infty$, can be proven straightforwardly using these results. They are useful for generalizations of the systems discussed here, e.g.\ for certain Fermi-Bose mixtures.  

Let
\beq
\opV (t) = \E^{-t\opA} \; \E^{t(\opA+\opB)} \; .
\eeq
Then $\opV (0) = \opId$ and 
\beq
\sfrac{\dd}{\dd s} \opV(s) 
=
\E^{-s \opA} \; \opB \; \E^{s \opA} \; \opV (s) 
\eeq
hence 
\beq
\opV(t) 
=
\opId
+
\int_0^t \dd s\; \E^{-s \opA} \; \opB \; \E^{s \opA} \; \opV (s) 
\eeq
Let  $\opW (t) = \E^{-t\opB} \; \opV(t)$. Then $\opW(0) = \opId$ and 
\beq
\sfrac{\dd}{\dd s} \opW(s) 
=
\E^{-s \opB} \; 
\left( 
\E^{-s \opA} \; \opB \; \E^{s \opA} - \opB
\right)
\; \E^{s \opB} \; \opW (s) 
\eeq
The usual interpolation
\beq
\E^{-s \opA} \; \opB \; \E^{s \opA} - \opB
=
\int_0^s \dd r \; \E^{-r\opA} \; [\opB,\opA]\; \E^{r \opA}
\eeq
gives
\beq\label{Wstern}
\sfrac{\dd}{\dd s} \opW(s) 
=
\E^{-s \opB} \; 
\int_0^s \dd r \; \E^{-r\opA} \; [\opB,\opA]\; \E^{r \opA}
\;
\E^{s \opB} \; \opW (s) 
\eeq
Since $\E^{t(\opA + \opB)} = \E^{t\opA} \E^{t \opB} \opW(t)$, integration of \Ref{Wstern} gives
\beq
\E^{t(\opA + \opB)} 
= 
\E^{t\opA} \E^{t \opB}\;
( 1 + \opH_2 (t))
\eeq
with
\beq
\opH_2 (t) 
=
\int_0^t \dd s \; \E^{-s\opB} \;
\int_0^s \dd r \; \E^{-r\opA} \; 
[\opB,\opA]\; \E^{-(s-r) \opA}\;
\E^{s(\opA + \opB)}
\eeq
For $\eptau \ge 0$, a straightforward (and in general wasteful) bound for $\opH_2$ leads to
\beq
\norm{
\E^{-\eptau(\opA + \opB)} 
- 
\E^{-\eptau\opA} \E^{-\eptau \opB}}
\le
\eptau^2 \; 
\E^{\eptau (2 \norm{\opB} + 4 \norm{\opA})}\;
\frac12 \norm{[A,B]} \; .
\eeq
Better bounds can be obtained if one or both of $\opA$ and $\opB$ are positive.)

To go one order higher, go back to \Ref{Wstern} and expand further, by writing
\beq
\E^{-r\opA} \; [\opB,\opA]\; \E^{r \opA}
=
[\opB,\opA]\
+
\int_0^r \dd q \; \E^{-q\opA} \; [[\opB,\opA], \opA]\; \E^{q \opA}
\eeq
and iterating, to get
\beq
\begin{split}
&\E^{-s \opB} \; 
\int_0^s \dd r \; \E^{-r\opA} \; [\opB,\opA]\; \E^{r \opA}
\;
\E^{s \opB} 
\\
=&
s\; \E^{-s \opB} \; 
[\opB,\opA] \; 
\E^{s \opB} 
+
\E^{-s \opB} \; 
\int_0^s \dd r \; 
\int_0^r \dd q \; \E^{-q\opA} \; [[\opB,\opA], \opA]\; \E^{q \opA}\;
\;
\E^{s \opB} 
\\
=&
s\; [\opB,\opA]
+
\tilde \opH_3 (s)
\end{split}
\eeq
with
\beq
\begin{split}
\tilde \opH_3 (s)
&=
- s \int_0^s \dd r \; \E^{-r\opB} [\opB , [\opB , \opA ]] \; \E^{r \opB} 
\\
&+
\E^{-s \opB} \; 
\int_0^s \dd r \; 
\int_0^r \dd q \; \E^{-q\opA} \; [[\opB,\opA], \opA]\; \E^{q \opA}\;
\;
\E^{s \opB} 
\end{split}
\eeq
involving double commutators. The equation for $\opW$ then implies
\beq
\sfrac{\dd}{\dd s} 
\left(
\E^{-\frac{s^2}{2} [\opB,\opA]} \; \opW(s) 
\right)
=
\E^{-\frac{s^2}{2} [\opB,\opA]} \; \tilde \opH_3 (s)\; \opW(s) 
\eeq
so integration over $s$ gives
\beq
\E^{t (\opA+\opB)}
=
\E^{t\opA} \; \E^{t \opB} \; \E^{\frac{t^2}{2} [\opB,\opA]} \; 
(\opId + \opH_3(t) ) 
\eeq
with 
\beq
\opH_3(t) 
=
\int_0^t \dd s \; 
\E^{-\frac{s^2}{2} [\opB,\opA]} \; \tilde \opH_3 (s)\;
\E^{-\frac{s^2}{2} [\opB,\opA]} \; \E^{-s\opB} \; \E^{-s\opA} \; \E^{s (\opA+\opB)} \; .
\eeq
Obviously this iteration can be continued to arbitrary orders and it gets more and more tedious as one goes on. But at each step one has an integral formula for the remainder which is useful for bounds. 

\section{Integration by Parts}\label{intbypartssec}
For $\oh \in \bR$ let
\beq\label{Diffeptau}
\Diff_\eptau (\oh)
=
\E^{-\I \sqrt{\eptau} \oh}
-
1 + \I \sqrt{\eptau} \oh \; .
\eeq
By Taylor expansion of the exponential, $\Diff_\eptau (\oh) = - \frac{\eptau}{2} \oh^2 + O(\eptau^{\frac32} \oh^3)$. The following elementary statements are used in the sequel: 
\beq\label{hilf1}
\Diff_\eptau (\oh)
=
\I \sqrt{\eptau} \oh \;  \Theta_\eptau (\oh) 
\quad
\mbox{with}
\quad 
\Theta_\eptau (\oh)  = \int_0^1 \left(
1 - \E^{-\I \sqrt{\eptau} s \, \oh} 
\right) \; \dd s \; 
\eeq
and
\beq\label{hilf2}
\I \sqrt{\eptau} \; 
\frac{\partial \Theta_\eptau}{\partial \oh}
=
- \frac{\eptau}{2}
+
\I \oh \eptau^{\frac32} R_\eptau (\oh)
\eeq
with $R_\eptau (\oh) = \int_0^1 s^2 \dd s \int_0^1 \dd t \; \E^{-\I \sqrt{\eptau} \oh s t}$, 

 r
Recall the notation \Ref{hexpecdef}. Let $\Bana$ be a Banach algebra, and let $F: \bR^\XTspace \to \Bana$ be in $L^1 ( \bR^\XTspace, \dd \mu_\WWW)$, i.e.\ $\hexpec{|F(h)|} < \infty$.

\bigskip\begin{lemma}\label{intbypartslemma}
Let $F: \bR^\XTspace \to \Bana$, $h \mapsto F(h)$ be differentiable in $h$ and assume that $\hexpec{|F(h)|} < \infty$, and  for all $(\jtau,\xx) \in \XTspace$, $\hexpec{ \abs{\frac{\partial F}{\partial h_{\jtau,\xx}}}}  < \infty$. 

(a) For all $(\jtau,\xx) \in \XTspace$, 
\beq
\begin{split}\label{anice1}
\hexpec{\Diff_\eptau (h_{\jtau,\xx} ) \; F(h)}
=
&- 
\sfrac{\eptau}{2}\WW_{\xx,\xx} \; \hexpec{F(h)}
\\
&+
\eptau^{\frac32} \; \I \WW_{\xx,\xx} \; \hexpec{F(h)\; h_{\jtau,\xx} \; R_\eptau (h_{\jtau,\xx})}
\\
&+
\I \sqrt{\eptau}
\hexpec{ h_{\jtau,\xx}\; \Theta_\eptau (h_{\jtau,\xx}) \; 
\int_\xy
\WW_{\xx,\xy}\;
 \frac{\partial F}{\partial h_{\jtau,\xy}}\; .
 }
\end{split}
\eeq
(b) If, in addition, the function $F$ depends on $\eptau=\beta/\ntau$ and if there is $M > 0$ such that for all $\ntau  \ge 1$, all $\jtau \in \{ 0, \ldots ,\ntau\}$ and all $\xx$ and $\xy \in \Xspace$ 
\beq\label{117}
\hexpec{
\abs{h_{\jtau,\xx}} \; 
\norm{F(h)}
}
\le M 
\eeq
and
\beq\label{118}
\hexpec{
\abs{h_{\jtau,\xx}}^2 \; 
\norm{ \frac{\partial F}{\partial h_{\jtau,\xy}}}
} \le M \; \sqrt{\eptau} 
\eeq
then 
\beq\label{3halbe}
\norm{
\hexpec{\left[
\Diff_\eptau (h_{\jtau,\xx} )
+ 
\sfrac{\eptau}{2}\WW_{\xx,\xx} 
\right] \; F(h)}}
\le
\eptau^{\frac32} \; M \; \left(
\abs{\WW_{\xx,\xx}} + \int_\xy \abs{\WW_{\xx,\xy}}
\right) \; .
\eeq
In particular, for all $\delta > 0$
\beq\label{3halbe-delta}
\eptau^{-\frac32 + \delta} \; 
\hexpec{\left[
\Diff_\eptau (h_{\jtau,\xx} )
+ 
\sfrac{\eptau}{2}\WW_{\xx,\xx} 
\right] \; F(h)}
\gtoas{\eptau \to 0} 0 \;  .
\eeq
\end{lemma}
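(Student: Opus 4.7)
The strategy is to use \Ref{hilf1} to expose an explicit factor $h_{\jtau,\xx}$ inside $\Diff_\eptau(h_{\jtau,\xx})$, and then apply Gaussian integration by parts against $\dd\mu_\WWW$. Because the covariance is time-local, $\WWW_{(\jtau,\xx),(\jtau',\xy)} = \delta_{\jtau,\jtau'}\WW_{\xx,\xy}$, the IBP collapses into a spatial convolution against $\WW$ at the single time slice $\jtau$, which explains the appearance of $\WW_{\xx,\xx}$ and $\int_\xy\WW_{\xx,\xy}$ in the claimed identity.

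For (a), write $\Diff_\eptau(h_{\jtau,\xx}) = \I\sqrt{\eptau}\,h_{\jtau,\xx}\,\Theta_\eptau(h_{\jtau,\xx})$ by \Ref{hilf1}. The Gaussian IBP identity $\hexpec{h_{\jtau,\xx}G(h)} = \int_\xy\WW_{\xx,\xy}\hexpec{\partial G/\partial h_{\jtau,\xy}}$, applied with $G = \Theta_\eptau(h_{\jtau,\xx})F(h)$, decomposes via the Leibniz rule into a contact contribution -- in which the derivative hits $\Theta_\eptau(h_{\jtau,\xx})$ and is localised at $\xy=\xx$, producing the prefactor $\WW_{\xx,\xx}$ together with a factor $\Theta_\eptau'(h_{\jtau,\xx})$ -- plus a term in which the derivative falls on $F$. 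Into the contact contribution I substitute \Ref{hilf2} for $\I\sqrt{\eptau}\,\Theta_\eptau'$, which separates it into the leading piece $-\tfrac{\eptau}{2}\WW_{\xx,\xx}\hexpec{F(h)}$ and the $O(\eptau^{3/2})$ remainder carrying $R_\eptau$. Collecting yields the three terms of \Ref{anice1}. The integrability hypotheses on $F$ and $\partial F/\partial h_{\jtau,\xy}$ justify the exchange of differentiation and expectation and ensure absolute convergence throughout.

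For (b), transpose the leading term $-\tfrac{\eptau}{2}\WW_{\xx,\xx}\hexpec{F}$ to the left-hand side so that only the two remainders remain. Estimate $|R_\eptau(\oh)|\le \int_0^1 s^2\,\dd s = \tfrac13$ uniformly in $\oh$ and $\eptau$; together with hypothesis \Ref{117} this bounds the first remainder by $\tfrac13\eptau^{3/2}|\WW_{\xx,\xx}|M$. For the second, use the elementary estimate $|\Theta_\eptau(\oh)|\le \tfrac12\sqrt{\eptau}|\oh|$ (obtained from $|1-\E^{\I\phi}|\le|\phi|$); together with the explicit factor $h_{\jtau,\xx}$ in front of $\Theta_\eptau$ in the integrand, this upgrades the weight on $\|\partial F/\partial h_{\jtau,\xy}\|$ to the quadratic weight $h_{\jtau,\xx}^2$, whereupon hypothesis \Ref{118} delivers $\tfrac12\eptau^{3/2}M\int_\xy|\WW_{\xx,\xy}|$. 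Summing proves \Ref{3halbe}, and \Ref{3halbe-delta} is immediate since $\eptau^{-3/2+\delta}\cdot\eptau^{3/2} = \eptau^\delta\to 0$ as $\eptau\to 0$.

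The argument is essentially bookkeeping: the algebraic identities \Ref{hilf1} and \Ref{hilf2} are given to us, and the only analytic input is a single Gaussian integration by parts. The main obstacle is cosmetic, namely to keep the three different factors of $\sqrt{\eptau}$ aligned (one coming from \Ref{hilf1}, one from the $\Theta_\eptau$-estimate, and one from \Ref{hilf2}) so that the calibrated hypotheses \Ref{117} and \Ref{118} deliver precisely the $\eptau^{3/2}$ rate needed by the applications in Section~\ref{Analysis2}. No multiscale or large-field analysis is required.
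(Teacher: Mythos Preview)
Your proposal is correct and follows essentially the same route as the paper's proof: both use \Ref{hilf1} to extract the factor $h_{\jtau,\xx}$, apply Gaussian integration by parts with the time-local covariance $\WWW$, split via Leibniz into the $\Theta_\eptau'$-contact term (handled by \Ref{hilf2}) and the $\partial F$-term, and then for part (b) invoke the uniform estimates $|R_\eptau|\le\tfrac13$ and $|\Theta_\eptau(\oh)|\le\tfrac12\sqrt{\eptau}|\oh|$ together with \Ref{117} and \Ref{118} to obtain \Ref{3halbe}.
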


\begin{proof}
By \Ref{hilf1}, 
\beq
\hexpec{
\Diff_\eptau (h_{\jtau,\xx} ) \; F(h)} 
=
\I \sqrt{\eptau}\; 
\hexpec{
h_{\jtau,\xx} \; G_\eptau(h)} 
\eeq
with $G_\eptau (h) 
=
F(h) \;
\Theta_\eptau (h_{\jtau,\xx})$ . 
Integration by parts w.r.t.\ the Gaussian measure $\dd\mu_\WWW$ gives  
\beq\label{intbypartsWWW}
\hexpec{
h_{\jtau,\xx} \; G_\eptau (h) }
=
\hexpec{
\int_\xy
\WW_{\xx,\xy}\;
 \frac{\partial G_\eptau}{\partial h_{\jtau,\xy}}}\; .
\eeq
When the derivative with respect to $h_{\jtau,\xx}$ acts on $\Theta_\eptau$, the sum over $\xy$ reduces to the term $\xy=\xx$, and \Ref{hilf2} gives the first two summands on the right hand side of \Ref{anice1}. The action of the derivative on $F$ gives the third summand. 

To prove (b), move the first term on the right hand side of \Ref{anice1} to the left hand side.
Since $h_{\jtau,\xx}$ is real, $\abs{R_\eptau ( h_{\jtau,\xx})} \le \frac13$, hence
\beq
\abs{\WW_{\xx,\xx} \; \hexpec{F(h)\; h_{\jtau,\xx} \; R_\eptau (h_{\jtau,\xx})}}
\le 
\frac{M}{3} \; \WW_{\xx,\xx} 
\eeq
by hypothesis of part (b). Similarly, noting that $\abs{\Theta_\eptau (h_{\jtau,\xx}} \le \frac12  \sqrt{\eptau} \abs{h_{\jtau,\xx}}$, one obtains by (\ref{118}) 
\beq
\begin{split}
\abs{
\hexpec{ h_{\jtau,\xx}\; \Theta_\eptau (h_{\jtau,\xx}) \; 
\int_\xy
\WW_{\xx,\xy}\;
\frac{\partial F}{\partial h_{\jtau,\xy}}
}}
\le
\frac{M}{2} \eptau \; \int_\xy \abs{\WW_{\xx,\xy}}
\end{split}
\eeq
Thus both remainder terms are bounded by $\eptau^{\frac32}$ times a constant, which implies \Ref{3halbe} and \Ref{3halbe-delta}.
\end{proof}

Note that because $\WWW_{(\jtau,\xx), (\jtau', \xx')} = \delta_{\jtau,\jtau'} \; \WW_{\xx, \xx'}$ is local in `time" $\jtau$, the summation on the right hand side of \Ref{intbypartsWWW} involves only the spatial variable $\xy$, but no extra summation over a `time-slice' index $\jtau'$. This is essential since every $\jtau$ summation goes over $\ntau + 1 \sim \eptau^{-1}$ variables, hence potentially creates inverse powers of $\eptau$ in remainder estimates. 

In the standard representation of the Gaussian expectation $\hexpec{\cdot}$ as a sum over Feynman graphs, the first summand in \Ref{anice1} corresponds to the tadpole (self-contraction) graph.

\section{Determinants, Permanents, and Laplacians}\label{detpermlapapp}

\begin{lemma}\label{abqlemma}
Let $a,b,q \in \bC$ and $\Re q > 0$. Then
\beq\label{qpi}
\sfrac{q}{2 \pi \I}  
\int _\bC 
\E^{-q |z|^2 + \bar a z + \bar z b}\;
\dd \bar z \wedge \dd z
=
\E^{\frac{\bar a b}{q}} \; .
\eeq
\end{lemma}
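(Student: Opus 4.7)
The plan is to reduce the two-dimensional complex Gaussian to a product of two one-dimensional real Gaussians by separating real and imaginary parts, and then invoking the standard one-dimensional Gaussian formula with complex parameter $q$ (valid by analytic continuation from $q>0$).

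First I would check absolute convergence: since $\Re q > 0$, the integrand satisfies $|e^{-q|z|^2+\bar a z+\bar z b}| = e^{-(\Re q)|z|^2 + \Re[(\bar a + \bar b)x + i(\bar a-b)y]\cdot(\ldots)}$, which decays like $e^{-(\Re q)|z|^2}$ times a factor of at most exponential growth in $|z|$; hence the integral converges absolutely for all $a,b\in\bC$, and in fact defines an entire analytic function of $a$ and $b$ (and of $q$ in the right half-plane).

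Next, I would switch to cartesian coordinates $z = x + \I y$, using $\dd\bar z\wedge \dd z = 2\I\,\dd x\wedge \dd y$. The exponent becomes
\begin{equation}
-q(x^2+y^2) + (\bar a + b)\,x + \I(\bar a - b)\,y,
\end{equation}
so the integrand factorises and the integral splits as a product of two one-dimensional Gaussians in $x$ and $y$. Each is evaluated by the standard formula $\int_\bR e^{-qt^2+\alpha t}\,\dd t = \sqrt{\pi/q}\;e^{\alpha^2/(4q)}$, which holds for all $\alpha\in\bC$ and all $q$ with $\Re q>0$ by analytic continuation from the real case (choosing the principal branch of $\sqrt{\cdot}$). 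This yields
\begin{equation}
\int_\bC e^{-q|z|^2+\bar a z+\bar z b}\,\dd x\,\dd y
= \frac{\pi}{q}\,\exp\!\left(\frac{(\bar a+b)^2 - (\bar a - b)^2}{4q}\right)
= \frac{\pi}{q}\,e^{\bar a b/q}.
\end{equation}
Multiplying by the prefactor $\tfrac{q}{2\pi\I}\cdot 2\I = q/\pi$ gives the claimed result $e^{\bar a b/q}$.

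The only subtlety, and the point that must be handled with some care, is the justification of the one-dimensional formula for complex $q$: one verifies it first for $q>0$ real by completing the square and shifting the contour of integration (which is allowed by Cauchy's theorem together with the Gaussian decay in the real direction), and then extends to $\Re q>0$ by analyticity, using that both sides are holomorphic in $q$ on $\{\Re q>0\}$ and that the integral depends continuously on $q$ there. Alternatively, and perhaps more elegantly, one could complete the square directly, writing $-q|z|^2+\bar a z+\bar z b = -q\,(\bar z - \bar a/q)(z - b/q) + \bar a b/q$, and reduce to the normalisation identity $\tfrac{q}{2\pi\I}\int_\bC e^{-q|z|^2}\,\dd\bar z\wedge\dd z = 1$ (itself the special case $a=b=0$, proved as above); here the shift in $\bar z$ is independent of the shift in $z$, which is legitimate because the two real integrations in $x$ and $y$ can be shifted independently in the complex plane thanks to the Gaussian decay.
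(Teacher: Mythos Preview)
Your proof is correct, but it follows a genuinely different route from the paper's. The paper expands $\E^{\bar a z + \bar z b}$ in a double power series, switches to polar coordinates $(r,\theta)$, uses the angular integral to collapse the double sum to a single one (forcing the powers of $z$ and $\bar z$ to match), and then evaluates the remaining radial integral $q\int_0^\infty s^m \E^{-qs}\,\dd s = q^{-m}m!$ for complex $q$ by a contour deformation argument. Your approach instead separates $z=x+\I y$ in Cartesian coordinates, factorises the integral into two one-dimensional real Gaussians, and invokes the standard formula $\int_\bR \E^{-qt^2+\alpha t}\,\dd t=\sqrt{\pi/q}\,\E^{\alpha^2/4q}$ for $\Re q>0$.

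Each route has its merits. Your argument is shorter and more direct, and the branch ambiguity in $\sqrt{\pi/q}$ disappears automatically when the two factors are multiplied. The paper's argument avoids square roots and the 1D Gaussian altogether, making it entirely self-contained; it also keeps the integrand compactly supported until the very end, which makes all exchanges of sum and integral manifestly justified. Your alternative ``completing the square'' remark is also a valid third route, and you correctly flag the subtlety that the shifts in $z$ and $\bar z$ are independent (hence not a change of variables in the usual sense) and must be justified via independent contour shifts in $x$ and $y$.
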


\begin{proof}
Call the left hand side of \Ref{qpi} $I_q (a,b)$. Because $\Re q > 0$ this integral exists, and it is absolutely convergent, hence can be taken as the limit $\Rad \to \infty$ of the integral over $\bC_\Rad = \{z \in \bC : |z| \le \Rad\}$, which will be denoted by $I_q^\Rad (a,b)$.
At fixed $\Rad$, expand in $\bar a $ and $b$. The compactness of the integration region justifies the exchange of summations and integral. This gives
\beq
\begin{split}
I_q^\Rad (a,b)
&=
\sfrac{q}{2 \pi \I}  
\sum_{m,n=0}^\infty \frac{\bar a^m b^n}{m!n!} 
\int \E^{-q |z|^2 } \; z^m \bar z^n \; 
{\dd \bar z \wedge \dd z}
\\
&=
\sum_{m,n=0}^\infty \frac{\bar a^m b^n}{m!n!} 
q \int_0^\Rad 2r \dd r\; \E^{-q r^2} \; r^{m+n} \; 
\int_0^{2\pi}
\sfrac{\dd\theta}{2\pi} \E^{\I \theta (m-n)}
\\
&=
\sum_{m=0}^\infty \frac{(\bar a b)^m}{m!^2} 
q \int_0^{\Rad^2} s^m\ \E^{-qs} \; \dd s 
\end{split}
\eeq
The last integral is bounded in absolute value by $(\Re q)^{-m-1} m!$, so the series converges absolutely and $\Rad \to \infty$ can be taken under the sum. Finally, 
\beq\label{iliR}
\lli_{\Rad \to \infty}
q \int_0^{\Rad^2} s^m\ \E^{-qs} \; \dd s 
=
q^{-m} m! 
\eeq 
and the sum over $m$ then gives the exponential. To see \Ref{iliR}, first transform to $t=qs$ (calling $\Rad^2 =L$):
\beq
q \int_0^{L} s^m\ \E^{-qs} \; \dd s 
=
q^{-m} \int_{\gamma_1} \E^{-t} t^{m} \dd t
\eeq
where $\gamma_1$ is the straight line from $0$ to $q L \in \bC$. 
The  integrand is an entire function of $s$, so the contour can be deformed to 
$\gamma_2 + \gamma_3$, where $\gamma_2$ is the straight line from $0$ to $L \Re q$ and $\gamma_3$ the straight line parallel to the imaginary axis from $L \Re q$ to $Lq$. The latter can be parametrized by $t=Lu + \I Lv r$, $u=\Re q$, $v =\Im q$
and $r\in [0,1]$. Then 
\beq
\abs{
\int_{\gamma_3} \E^{-t} t^{m} \dd t}
\le
L^{m+1} \; 
\E^{-u L} \; |v| \; (u^2+v^2)^{m/2}
\eeq
which vanishes as $L \to \infty$.
\end{proof}

\newcommand{\NN}{{\rm N}}
\begin{lemma}\label{gauzz}
Let $Q \in M_\NN(\bC)$ with $Q+Q^\dagger > 0$, and $a,b \in \bC^\NN$. 
Then
\beq\label{Qdetformula}
\det Q
\int_{\bC^\NN} 
\pli_{n=1}^\NN \sfrac{\dd \bar z_n \wedge \dd z_n}{2\pi\I} \; 
\E^{\bracket{z}{ Qz} + \bracket{a}{z} + \bracket{z}{b}}\;
=
\E^{\bracket{a}{Q^{-1} b}}
\; .
\eeq
Here $\bracket{a}{b}=\bar a^\top b$ denotes the standard inner product on $\bC^\NN$.
\end{lemma}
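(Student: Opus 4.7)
My plan is a two-step reduction: first eliminate the linear terms by completing the square, then evaluate the pure Gaussian by reducing to triangular form.

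For the square completion, I would first note that $Q$ is invertible, since any $v \in \ker Q$ would give $\bracket{v}{(Q+Q^\dagger) v} = 0$, contradicting $Q+Q^\dagger > 0$. I would then make the contour shift $z \mapsto z + Q^{-1} b$ and, treating $\bar z$ as an independent variable, $\bar z \mapsto \bar z + (Q^\top)^{-1} \bar a$. A direct bilinear expansion of $-\bracket{\bar z + \alpha}{Q(z+\beta)} + \bracket{a}{z+\beta} + \bracket{\bar z+\alpha}{b}$ with $\beta = Q^{-1} b$ and $\alpha = (Q^\top)^{-1}\bar a$ shows that the $z$-linear and $\bar z$-linear terms both cancel, and the exponent becomes $-\bracket{z}{Qz} + \bracket{a}{Q^{-1} b}$. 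To justify the shift rigorously, I would write $z_n = x_n + \I y_n$ and view the integral over $\bR^{2\NN}$; the required contour deformations in each of the $2\NN$ real coordinates are permitted by Cauchy's theorem because the integrand is entire in $(x,y)\in \bC^{2\NN}$ and its absolute value is dominated by $\E^{-\bracket{z}{Hz}}$, with $H = (Q+Q^\dagger)/2 > 0$, whose Gaussian decay absorbs the linear and constant factors arising along any bounded rectangular deformation.

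For the normalization $\det Q \int \E^{-\bracket{z}{Qz}} \pli_{n=1}^\NN \sfrac{\dd \bar z_n \wedge \dd z_n}{2\pi\I} = 1$, I would use a Schur decomposition $Q = U T U^\dagger$ with $U$ unitary and $T$ upper triangular. Each eigenvalue of $Q$, hence each diagonal entry $T_{ii}$, has strictly positive real part: applied to an eigenvector $v$ with $Qv=\lambda v$, $\bracket{v}{(Q+Q^\dagger)v} > 0$ gives $2(\Re \lambda)\norm{v}^2 > 0$. Changing variables $z = Uw$ preserves the measure since $\abs{\det U}^2 = 1$, and reduces the quadratic form to $\bracket{w}{Tw}$. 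Upper triangularity of $T$ means that $w_\NN$ appears only via $T_{\NN\NN}\abs{w_\NN}^2 + w_\NN \sum_{i<\NN} T_{i\NN}\bar w_i$, with no $\bar w_\NN$-linear contributions (since $T_{\NN j}=0$ for $j<\NN$). Applying Lemma \ref{abqlemma} with $q = T_{\NN\NN}$, the coefficient above playing the role of $\bar a$, and $b=0$, performs the $w_\NN$-integral, producing the factor $T_{\NN\NN}^{-1}$ and leaving the $(w_1,\ldots,w_{\NN-1})$-integral with the top-left $(\NN-1)\times(\NN-1)$ upper-triangular block unchanged. Iterating (equivalently, inducting on $\NN$ with Lemma \ref{abqlemma} as the base case) gives $\pli_i T_{ii}^{-1} = (\det T)^{-1} = (\det Q)^{-1}$, and multiplication by $\det Q$ yields $1$.

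The main anticipated obstacle is the rigorous justification of the contour shift in step one, since $Q^{-1}b$ and $(Q^\top)^{-1}\bar a$ are generic complex vectors with no preferred direction relative to the original $\bR^{2\NN}$ domain. An alternative route that avoids the shift altogether is analytic continuation: with $a,b$ fixed, both sides of \Ref{Qdetformula} are holomorphic functions of the matrix entries of $Q$ on the open connected set $\{Q : Q+Q^\dagger > 0\} \subset \bC^{\NN^2}$, the integral being absolutely and locally uniformly convergent via the domination $\abs{\E^{-\bracket{z}{Qz}}} = \E^{-\bracket{z}{Hz}}$. The identity is straightforward to verify when $Q$ is strictly positive hermitian, by unitary diagonalization followed by real shifts and $\NN$ applications of Lemma \ref{abqlemma}; the identity theorem for holomorphic functions of several complex variables then propagates the equality to the entire domain.
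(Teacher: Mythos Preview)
Your proof is correct but follows a different route from the paper's. The paper does everything in one change of variables: it factors $Q = Q_0^{1/2}(1+\I A)Q_0^{1/2}$ with $Q_0 = \tfrac12(Q+Q^\dagger)$ and $A = Q_0^{-1/2} H Q_0^{-1/2}$ hermitian, then unitarily diagonalises the normal operator $1+\I A$ as $U^\dagger D U$ and substitutes $w = U Q_0^{1/2} z$. The $w$-integral factorises into $\NN$ copies of Lemma~\ref{abqlemma}, with the linear terms carried along in the same step; the diagonal entries of $D$ all have real part exactly $1$, and the determinant pieces $(\det B)^{-2}(\det D)^{-1}$ reassemble into $(\det Q)^{-1}$. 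Your approach instead separates the two tasks: first strip off the linear terms by a complex shift (or, more cleanly, by analytic continuation from $Q$ hermitian positive), then reduce the pure Gaussian via the Schur form and iterated application of Lemma~\ref{abqlemma} with $b=0$. The paper's decomposition is slicker in that it avoids any contour-shift justification and handles the source terms for free; your Schur-plus-peeling argument is arguably more elementary linear algebra and makes the recursion transparent. One small point worth making explicit in your iteration: to invoke the inductive hypothesis (or to keep applying Fubini) you use that the principal $(\NN-1)\times(\NN-1)$ block $T'$ still satisfies $T'+(T')^\dagger>0$, which follows because principal submatrices of the positive definite $T+T^\dagger = U^\dagger(Q+Q^\dagger)U$ are again positive definite.
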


\begin{proof}
The proof will by reduction to Lemma \ref{abqlemma}, using a diagonalization argument. 
Let $Q_0 = \frac12 (Q+Q^\dagger)$ and $H = \frac{1}{2\I} (Q-Q^\dagger)$
then $Q = Q_0 + \I H$, $H=H^\dagger$, $Q_0 = Q_0^\dagger > 0$. 
Thus $Q$ satisfies the hypotheses of Lemma \ref{posplusim}, and the same decomposition as in the proof of that lemma will be used here: let $B = Q_0^{\frac12}$ be the positive square root of $Q_0$ and $A = B^{-\frac12} H B^{-\frac12}$.  Then $A = A^\dagger$, so $1+ \I A$ is normal, hence diagonalized by a unitary transformation: $1 + \I A = U^\dagger D U$ with diagonal $D$ and $U^\dagger = U^{-1}$. $Q = B (1 + \I A) B$, so transforming to $w=U B z$ gives 
\beq
\begin{split}
J 
&=
\int_{\bC^\NN} 
\pli_{n=1}^\NN \sfrac{\dd \bar z_n \wedge \dd z_n}{2\pi\I} \; 
\E^{\bracket{z}{ Qz} + \bracket{a}{z} + \bracket{z}{b}}\;
\\
&=
(\det B)^{-2}
\int_{\bC^\NN} 
\pli_{n=1}^\NN \sfrac{\dd \bar w_n \wedge \dd w_n}{2\pi\I} \; 
\E^{\bracket{w}{ Dw} + \bracket{U B^{-1} a}{w} + \bracket{w}{U B^{-1}  b}}
\end{split}
\eeq
The Jacobian from the transformation with $U$ is one because the determinants $U$ and $U^\dagger$ appear and multiply to $1$. 
The $w$-integral factorizes because $D$ is diagonal. The real part of every eigenvalue of $D$ equals $1$, so Lemma \ref{abqlemma} applies. The product of eigenvalues is $\det D^{-1}$, which combines with $\det B^{-2}$ to $\det Q^{-1}$. 
The product of exponentials gives a quadratic form in the exponent as
\beq
\bracket{UB^{-1} a}{D^{-1} U B^{-1} b}
=
\bracket{a}{B^{-1} U^\dagger D^{-1} U B^{-1} b}
=
\bracket{a}{Q^{-1} b} 
\eeq
so  \Ref{Qdetformula} follows. 
\end{proof}

Permanent formulas, such as (\ref{perm1}) and (\ref{Qhpermanent}), follow from Lemma \ref{gauzz} by differentiation. Denoting 
\beq
\langle f \rangle_Q
=
\det Q
\int_{\bC^\NN} 
\pli_{n=1}^\NN \sfrac{\dd \bar z_n \wedge \dd z_n}{2\pi\I} \; 
\E^{\bracket{z}{ Qz} }\;
f(z) \; ,
\eeq
Let $a,b \in \bR^\NN$. 
For $m_1, \ldots, m_K$ and $n_1, \ldots , n_K$ in $\{1, \ldots, \NN\}$
\beq\label{228}
\begin{split}
\langle \pli_{k=1}^K \bar z_{m_k} z_{n_k}\rangle_Q
&=
\left\lbrack
\pli_{k=1}^K \sfrac{\del}{\del b_{m_k}} \; \sfrac{\del}{\del a_{n_k}} \;
\langle 
\E^{\bracket{a}{z} + \bracket{z}{b}}\;
\rangle_Q
\right\rbrack_{a=b=0}
\\
&=
\left\lbrack
\pli_{k=1}^K \sfrac{\del}{\del b_{m_k}} \; \sfrac{\del}{\del a_{n_k}} \;
\E^{\bracket{a}{Q^{-1} b}}
\right\rbrack_{a=b=0}
\\
&=
\left\lbrack
\sfrac{1}{K!}
\pli_{k=1}^K \sfrac{\del}{\del b_{m_k}} \; \sfrac{\del}{\del a_{n_k}} \;
\bracket{a}{Q^{-1} b}^{K}
\right\rbrack_{a=b=0}
\\
&=
\sum_{\pi \in \Perm{K}} \pli_{k=1}^K Q^{-1}_{n_k, m_{\pi(k)}} \; .
\end{split}
\eeq
As in Section 2.3 of \cite{msbook}, the generating function identity can be used to rewrite the Gaussian integral of polynomials in terms of the field space Laplacian.
The result is stated in the following Lemma. (The operator $\Quad$ in that lemma may be $h$-dependent.)

\begin{lemma} 
Let $\Quad$ be a linear operator on $\bC^\XTspace$ and $\Quad + \Quad^\dagger > 0$. Then 
\beq
\det \Quad \int \DD \aphi \; 
\E^{-\bili{\bar\aphi}{\Quad \aphi}_\XTspace} \;
\pli_{k=1}^K \bar \aphi_{\jtau_k,\xx_k} \aphi_{\jtau'_k,\xx'_k}
=
\left\lbrack
\E^{\Delta_{\Cov}} \; 
\pli_{k=1}^K \bar \aphi_{\jtau_k,\xx_k} \aphi_{\jtau'_k,\xx'_k}
\right\rbrack_{\aphi =0}
\eeq
where $\Cov =\Quad^{-1}$ and 
\beq
\Delta_\Cov 
=
\sum_{\jtau,\jtau',\xx,\xx'} \Cov_{(\jtau_k,\xx_k),(\jtau'_k,\xx'_k)} \; 
\sfrac{\del}{\del\aphi_{\jtau,\xx}}
\sfrac{\del}{\del\bar\aphi_{\jtau',\xx'}}
\eeq
With the notation $\sfrac{\de}{\de\aphi_{\jtau,\xx}} = \eptau^{-1} \speps^{-d} \sfrac{\del}{\del\aphi_{\jtau,\xx}}$, the Laplacian can be rewritten as 
\beq
\Delta_\Cov 
=
\bili{\sfrac{\de}{\de\aphi} }{ \Cov \sfrac{\de}{\de\bar\aphi}}_\XTspace
=
\bili{ \Cov \sfrac{\de}{\de\bar\aphi}}{\sfrac{\de}{\de\aphi} }_\XTspace
\eeq
\end{lemma}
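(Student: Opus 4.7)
My plan is to reduce the identity to the generating-function formula of Lemma \ref{gauzz}, which is valid on $\bC^\XTspace$ because the hypothesis $\Quad+\Quad^\dagger>0$ is exactly what is needed there. I would introduce source fields $a,b \in \bC^\XTspace$ and use the elementary representation
\beq
\pli_{k=1}^K \bar\aphi_{\jtau_k,\xx_k}\,\aphi_{\jtau'_k,\xx'_k}
=
\left[
\pli_{k=1}^K \sfrac{\del}{\del a_{\jtau'_k,\xx'_k}}\sfrac{\del}{\del b_{\jtau_k,\xx_k}} \;
\E^{\bili{a}{\aphi}_\XTspace + \bili{\bar\aphi}{b}_\XTspace}
\right]_{a=b=0} \; .
\neeq
Since $\Re \bili{\bar\aphi}{\Quad \aphi}_\XTspace > 0$, the Gaussian integral is absolutely convergent and uniformly so for $(a,b)$ in a neighborhood of zero, so the source derivatives may be pulled outside. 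By Lemma \ref{gauzz} (applied with $Q=\Quad$ and the source terms shifted in the standard way) the integral equals $\E^{\bili{a}{\Cov b}_\XTspace}$, so the left hand side of the claim becomes the analogous source derivatives of $\E^{\bili{a}{\Cov b}_\XTspace}$ at the origin.

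Expanding the exponential, only the $K$-th order term of $\bili{a}{\Cov b}_\XTspace^K$ survives the source differentiations at $a=b=0$: lower orders vanish, higher orders leave uncancelled factors of $a$ or $b$. Computing this term directly (as in the derivation of \Ref{228}) produces the Wick pairing sum $\sum_{\pi\in\Perm{K}} \pli_k \Cov_{(\jtau'_k,\xx'_k),(\jtau_{\pi(k)},\xx_{\pi(k)})}$.

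For the right-hand side, I observe that $\Delta_\Cov$ is a second-order differential operator that lowers the total degree in $(\aphi,\bar\aphi)$ by two; therefore, when $\E^{\Delta_\Cov}$ is applied to a monomial of degree $2K$ and then evaluated at $\aphi=0$, only $\sfrac{1}{K!}\Delta_\Cov^K$ contributes. A straightforward combinatorial bookkeeping step shows that the $K!$ orderings in which the $K$ applications of $\Delta_\Cov$ can pick pairs of surviving $\aphi$- and $\bar\aphi$-factors exactly cancel the $1/K!$ from the exponential series, leaving the same permutation sum as on the left. The alternative expression $\Delta_\Cov = \bili{\sfrac{\de}{\de\aphi}}{\Cov\, \sfrac{\de}{\de\bar\aphi}}_\XTspace$ is then purely notational: the rescaled derivatives $\sfrac{\de}{\de\aphi_{\jtau,\xx}}=\eptau^{-1}\speps^{-d}\sfrac{\del}{\del\aphi_{\jtau,\xx}}$ absorb the measure factors $\eptau \speps^d$ in the contractions $\bili{\,\cdot\,}{\,\cdot\,}_\XTspace$; the two orderings (derivative of $\aphi$ inside vs.\ outside) agree because $\sfrac{\del}{\del\aphi}$ and $\sfrac{\del}{\del\bar\aphi}$ commute.

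The only mildly subtle point is the combinatorial matching in the last paragraph: I need to verify carefully that applying $\Delta_\Cov^K$ to the specific monomial produces each permutation $\pi$ with multiplicity $K!$, neither over- nor undercounting diagonal pairings. This amounts to the standard proof of the algebraic Wick theorem (as in Section 2.3 of \cite{msbook}) and is not an obstacle, but it is the only nontrivial step.
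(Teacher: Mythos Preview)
Your proposal is correct and follows essentially the same approach as the paper. The paper itself gives no detailed proof of this lemma: it simply derives the permanent formula \Ref{228} by differentiating the generating function of Lemma \ref{gauzz} and then remarks that, ``as in Section 2.3 of \cite{msbook}, the generating function identity can be used to rewrite the Gaussian integral of polynomials in terms of the field space Laplacian.'' Your write-up is in fact a faithful unpacking of exactly that reference, including the combinatorial matching of $\sfrac{1}{K!}\Delta_\Cov^K$ with the permutation sum.
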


Techniques using these Laplacians will be useful in doing convergent expansions, in analogy to \cite{SW}.

By (\ref{228}), 
\beq\label{Quadperm}
\det \Quad \int \DD \aphi \; 
\E^{-\bili{\bar\aphi}{\Quad \aphi}_\XTspace} \;
\pli_{k=1}^K \bar \aphi_{\jtau_k,\xx_k} \aphi_{\jtau'_k,\xx'_k}
=
\sum_{\pi \in \Perm{K}} \pli_{k=1}^K \Cov_{(\jtau'_k,\xx'_k),(\jtau_{\pi(k)},\xx_{\pi(k)})} \; .
\eeq

\noindent
{\bf Acknowledgement. } This work started within DFG collaborative research group FOR 718: {\em Analysis and Stochastics in Complex Physical Systems}. I would like to thank Stefan Adams and Wolfgang K\"onig for discussions about Brownian motion representations of Bose systems, which motivated me to consider the canonical ensemble in the coherent-state integral representation discussed here. Much of this work was done during visits to the Mathematics Department at UBC, and I would like to thank Joel Feldman for his hospitality and many discussions, especially during the last two years. This work is supported by Deutsche Forschungsgemeinschaft (DFG, German Research Foundation) under Germany's Excellence Strategy  EXC-2181/1 - 390900948 (the Heidelberg STRUCTURES Cluster of Excellence).

\end{document}